\newtheorem{theorem}{Theorem}
\newtheorem{remark}{Remark}[section]
\newtheorem{assumption}{Assumption}[section]
\newtheorem{lemma}[theorem]{Lemma}
\newtheorem{proposition}[theorem]{Proposition}
\newtheorem{corollary}[theorem]{Corollary}
\def\calA{{\mathcal{A}}}
\def\calI{{\mathcal{I}}}
\begin{document}

\title{Group Sparse Recovery via the $\ell^0(\ell^2)$ Penalty: Theory and Algorithm}
\author{Yuling Jiao,\thanks{School of Statistics and Mathematics, Zhongnan University of Economics and Law, Wuhan, 430063, P.R. China. (yulingjiaomath@whu.edu.cn)}\quad
Bangti Jin,\thanks{Department of Computer Science, University College London, Gower Street, London WC1E 6BT, UK. (bangti.jin@gmail.com, b.jin@ucl.ac.uk)}
\quad\and Xiliang Lu\thanks{Corresponding author. School of Mathematics and Statistics and
Hubei Key Laboratory of Computational Science, Wuhan University, Wuhan 430072, P.R. China. (xllv.math@whu.edu.cn)}
}

\markboth{IEEE TRANSECTION ON SIGNAL PROCESSING,Vol.~~, No.~~, ~~,2015}%
{Shell \MakeLowercase{\textit{et al.}}: Bare Demo of IEEEtran.cls for Journals}
\maketitle

\begin{abstract}
In this work we propose and analyze a novel approach for group sparse recovery. It is based on regularized least squares
with an $\ell^0(\ell^2)$ penalty, which penalizes the number of nonzero groups. One distinct
feature of the approach is that it has the built-in decorrelation mechanism within each group,
and thus can handle challenging strong inner-group correlation. We provide a complete analysis of
the regularized model, e.g., existence of a global minimizer, invariance property, support
recovery, and properties of block coordinatewise minimizers.  Further, the regularized
problem admits an efficient primal dual active set algorithm
with a provable finite-step global convergence. At each iteration, it involves solving a least-squares
problem on the active set only, and exhibits a fast local convergence, which makes the method extremely efficient for recovering group sparse
signals. Extensive numerical experiments are presented to illustrate salient features of the
model and the efficiency and accuracy of the algorithm. A comparative study indicates
its competitiveness with existing approaches.
\end{abstract}
\begin{IEEEkeywords}
group sparsity, block sparsity, blockwise mutual incoherence, global minimizer,
block coordinatewise minimizer, primal dual active set algorithm, $\ell^0(\ell^2)$ penalty
\end{IEEEkeywords}

%
\IEEEpeerreviewmaketitle

\section{Introduction}\label{sec:intro}
\IEEEPARstart{S}{parse} recovery has received much attention in many areas,
e.g., signal processing, statistics, and machine learning
recently. The key assumption is that the
data $y\in\mathbb{R}^n$ is generated by a linear combination of a few
atoms of a given dictionary $\Psi\in\mathbb{R}^{n\times p}$, $p\gg n$, where
each column represents an atom. In the
presence of noise $\eta\in\mathbb{R}^n$ (with a noise level $\epsilon=\|\eta\|$),
it is formulated as
\begin{equation}\label{model}
  y = \Psi x^{\dag} + \eta,
\end{equation}
where the vector $x^\dag \in \mathbb{R}^{p}$ denotes the
signal to be recovered.

The most natural formulation of the problem of finding the sparsest solution is the following $\ell^0$ optimization
\begin{equation}\label{eqn:l0reg}
   \min_{x \in \mathbb{R}^{p}} \tfrac{1}{2}\|\Psi x-y\|^2 + \lambda \|x\|_{\ell^0},
\end{equation}
where $\|\cdot\|$ denotes the Euclidean norm of a vector, $\|\cdot\|_{\ell^0}$ denotes the number of
nonzero entries, and $\lambda>0$ is a regularization parameter. Due to discontinuity of the $\ell^0$ penalty,
it is challenging to find a global minimizer of problem \eqref{eqn:l0reg}. In practice, lasso / basis pursuit
\cite{Tibshirani:1996,Chen:1998}, which replaces the $\ell^0$ penalty with its convex
relaxation, the $\ell^1$ penalty, has been very popular. Many deep results
on the equivalence between the $\ell^0$ and $\ell^1$ problems and error estimates have been obtained
\cite{CandesTao:2005,CandesRombergTao:2006}, based on
the concepts mutual coherence (MC) and restricted isometry
property (RIP).

\subsection{Group sparse recovery}

In practice, in addition to sparsity, signals may exhibit additional structure, e.g.,
nonzero coefficients occur in clusters/groups, which are commonly known as block- / group-sparsity.
In electroencephalogram (EEG), each group encodes the information about the
direction and strength of the dipoles of each discrete voxel representing the dipole
approximation \cite{OuHamalainenGolland:2009}. Other applications include multi-task
learning \cite{ArgyriouEvgeniouPontil:2008}, wavelet image analysis \cite{Shapiro:1993,
AntoniadisFan:2001}, gene analysis \cite{HeYu:2010,MaSongHuang:2007} and multichannel
image analysis \cite{MishaliEldar:2009,MishaliEldar:2010}, to name a few.
The multiple measurement vector problem is also one special case \cite{ChenHuo:2006}.
In these applications, the focus is to recover all contributing groups, instead of one
entry from each group. The group structure is an important piece
of \textit{a priori} knowledge about the problem, and should be properly accounted for in the recovery
method in order to improve interpretability and accuracy of the recovered signal.

There have been many important developments of
group sparse recovery. One popular approach is group lasso, extending
lasso using an $\ell^1(\ell^2)$-penalty
\cite{Bakin:1999,MalioutovCetinWillsky:2005,YuanLin:2006,HuangBrehenyMa:2012}. A number of
theoretical studies have shown many desirable properties of group lasso, and its
advantages over lasso for recovering group sparse signals
\cite{HuangZhang:2010,BaraniukCevher:2010,Eldar:2010,LouniciPontil:2011,Bajwa:2015,ErenVidyasagar:2015}
under suitable MC or RIP type conditions.
To remedy the drawbacks of group lasso, e.g., biasedness and lack of the oracle
property \cite{FanLi:2001,ZhangZhang:2012}, nonconvex penalties
have been extended to the group case, e.g., bridge, smoothly clipped absolute deviation (SCAD),
and minmax concavity penalty (MCP) \cite{WangLiHuang:2008,HuangMaZhang:2009,HuangBrehenyMa:2012}.
A number of efficient algorithms \cite{YuanLin:2006,MeierVandegeerPeter:2008,VandDenBerg:2008,TsengYun:2009,ChenLinKim:2011,
She:2012,QinScheinberGoldfarb:2013,BrehenyHuang:2015} have been proposed for convex and nonconvex
group sparse recovery models. Like in the sparse case, several group greedy methods have also been
developed and analyzed in depth \cite{Eldar:2010,BenElda:2011,GaneshZhouMa:2009}.

However, in these interesting works, the submatrices of $\Psi$ are assumed to be well conditioned
in order to get estimation errors. While this assumption is reasonable in some applications, it
excludes the practically important case of strong correlation within groups. For example, in microarray
gene analysis, it was observed that genes in the same pathway produce highly correlated values
\cite{SegalDahlquist:2004}; in genome-wide association studies, SNPs are highly correlated or even
linearly dependent within segments of the DNA sequence \cite{Balding:2006}; in functional neuroimaging,
identifying the brain regions involved in the cognitive processing of an external stimuli is formalized
as identifying the non-zero coefficients of a linear model predicting the external stimuli from the neuroimaging
data, where strong correlation occurs  between  neighboring voxels \cite{Tom:2007}; just to name a few.

In the presence of strong inner-group correlation, an inadvertent application
of standard sparse recovery techniques is unsuitable. Numerically, one often can only
recover one predictor within each contributing group, which is undesirable when seeking the whole
group \cite{ZouHastie:2005}. Theoretically, the correlation leads bad RIP or MC conditions,
and thus many sparse recovery techniques may perform poorly.

\subsection{The $\ell^0(\ell^2)$ approach and our contributions}

In this work, we shall develop and analyze a nonconvex model and algorithm for recovering
group-sparse signals with potentially strong inner-group
correlation. Our approach is based on the following $\ell^0(\ell^2)$ optimization
\begin{equation}\label{eqn:groupl0}
   \min_{x \in \mathbb{R}^{p}} \left\{J_\lambda(x) =  \tfrac{1}{2}\|\Psi x-y\|^2 + \lambda \|x\|_{\ell^0(\ell^2)}\right\},
\end{equation}
where the $\ell^0(\ell^2)$ penalty $\|\cdot\|_{\ell^0(\ell^2)}$ (with respect to a given partition
$\{G_i\}_{i=1}^N$) is defined below in \eqref{eqn:l0l2}, and the regularization parameter $\lambda>0$
controls the group sparsity level of the solution.
The $\ell^0 (\ell^2)$ penalty is to penalize the number of nonzero
groups.
To the best of our knowledge, this model has not been systematically studied in the literature, even
though the $\ell^0(\ell^2)$ penalty was used in several prior works; see
Section \ref{sec:existing} below.
We shall provide both theoretical analysis and efficient solver for the model.

The model \eqref{eqn:groupl0} has several distinct features.
The regularized solution is invariant under full rank column transformation,
and does not depend on the specific parametrization within the groups. Thus, it allows strong
inner-group correlation and merits a built-in decorrelation effect, and admits theoretical results
under very weak conditions. Further, both global minimizer and block coordinatewise minimizer have
desirable properties, e.g., support recovery and oracle property.

The main contributions of this work are three-folded.
First, we establish fundamental properties of the model \eqref{eqn:groupl0},
e.g., existence of a global minimizer, local optimality, necessary optimality condition, and
transformation invariance, which theoretically substantiates \eqref{eqn:groupl0}. For example, the
invariance implies that it can be equivalently transformed into
a problem with orthonormal columns within each group, and thus it is independent of
the conditioning of inner-group columns, which contrasts sharply with most existing group
sparse recovery models. Second, we develop an efficient algorithm for solving the
model \eqref{eqn:groupl0}, which is of primal dual active set (PDAS) type. It is based on
a careful analysis of the necessary optimality system, and represents a nontrivial
extension of the PDAS algorithm for the $\ell^1$ and $\ell^0$ penalties \cite{FanJiaoLu:2014,JiaoJinLu:2014}.
It is very efficient when coupled with a continuation strategy, due to its Newton nature \cite{FanJiaoLu:2014}.
Numerically, each inner iteration involves only solving a least-squares problem on the active
set. The whole algorithm converges globally in finite steps to
the oracle solution. Third, we present extensive numerical experiments
to illustrate the features of our approach, and to show its competitiveness with
start-of-art group sparse recovery methods, including group lasso and greedy methods.

\subsection{Connections with existing works and organization}\label{sec:existing}
The proposed model \eqref{eqn:groupl0} is closely related to the following constrained
nonconvex optimization
\begin{equation}\label{eqn:l0lq}
  \min \|x\|_{\ell^0(\ell^q)}\quad \mbox{subject to } y = \Psi x, \tag{$P_q$}
\end{equation}
in the absence of noise $\eta$. This model was studied in
\cite{EldarMishali:2009,Eldar:2010,GaneshZhouMa:2009,ElhamifarVidal:2012}.
 In the case of $q=2$, Eldar and Mishali
\cite{EldarMishali:2009} discussed unique group sparse recovery,
and Eldar et al \cite{Eldar:2010} developed an orthogonal matching pursuit algorithm
for recovering group sparse signals and established recovery condition in terms of
block coherence. See also \cite{GaneshZhouMa:2009} for related results for subspace
signal separation.  Elhamifar and Vidal \cite{ElhamifarVidal:2012}
derived the necessary and sufficient conditions for the equivalence of problem
\eqref{eqn:l0lq} with a convex $\ell^1(\ell^q)$ relaxation, and gave sufficient
conditions using the concept cumulative subspace coherence. Further,
under even weaker conditions, they extended these results to the $\Psi$-weighted formulation
\begin{equation}\label{eqn:l0lq2}
  \min \sum_{i=1}^N\|\Psi_{G_i} x_{G_i}\|_{\ell^q}^0 \quad \mbox{subject to } y =\Psi x, \tag{$P_q^\prime$}
\end{equation}
which is especially suitable for redundant dictionaries. The models \eqref{eqn:l0lq} and
\eqref{eqn:l0lq2} are equivalent, if the columns within each group are of full column rank.
Our approach \eqref{eqn:groupl0} can be viewed as a natural extension of \eqref{eqn:l0lq}
with $q=2$ to the case of noisy data using a Lagrangian formulation, which, due to the
nonconvexity of the $\ell^0(\ell^2)$ penalty, is generally not equivalent to the constrained
formulation. In this work, we provide many new insights into analytical properties and
algorithm developments for the model \eqref{eqn:groupl0}, which have not been discussed in
these prior works. Surprisingly, we shall show that the model \eqref{eqn:groupl0} has
built-in decorrelation effect for redundant dictionaries, similar to the model \eqref{eqn:l0lq2}.

The rest of the paper is organized as follows. In Section \ref{sec:prelim}, we describe the problem setting,
and derive useful estimates. In Section \ref{sec:cwm}, we provide analytical properties, e.g., the
existence of a global minimizer, invariance property, and optimality condition. In Section \ref{sec:pdasc},
we develop an efficient group primal dual active set with continuation algorithm, and analyze its
convergence and computational complexity. Finally, in Section \ref{sec:numer}, several numerical examples are provided to
illustrate the mathematical theory and the efficiency of the algorithm. All the technical proofs are given in the appendices.

\section{Preliminaries}\label{sec:prelim}
In this section, we describe the problem setting, and derive useful estimates.

\subsection{Problem setting and notations}
Throughout, we assume that the sensing matrix $\Psi\in \mathbb{R}^{n\times p}$ with $n\ll p$ has normalized
columns $\|\psi_i\| =1$ for $i=1,...,p$, and the index set ${S} = \{1,...,p\}$ is divided
into $N$ non-overlapping groups $\{G_i\}_{i=1}^N$ such that $1 \leq s_i = |G_i| \leq s$ and $\sum_{i=1}^N|G_i| = p$.
For any index set $B\subseteq S$, we denote by $x_B$ (respectively $\Psi_B$) the subvector of $x$ (respectively the submatrix of $\Psi$)
which consists of the entries (respectively columns) whose indices are listed in $B$.
All submatrices $\Psi_{G_i}$, $i=1,2,\ldots,N$, are assumed to have full column rank.  The true signal $x^\dag$ is
assumed to be group sparse with respect to the partition $\{G_i\}_{i=1}^N$, i.e., $x^\dag= (x^\dag_{G_1},...,x^\dag_{G_N})$,
with $T$ nonzero groups. Accordingly, the group index set $\{1,\ldots,N\}$ is divided into the active set $\calA^\dag$ and
inactive set $\calI^\dag$ by
\begin{equation}\label{eqn:active}
\mathcal{A}^\dag = \{i: \|x^\dag_{G_i}\| \neq 0\} \quad \mbox{and}\quad \mathcal{I}^\dag = (\mathcal{A}^\dag)^c.
\end{equation}
The data vector $y$ in \eqref{model}, possibly contaminated by noise, can be recast as
$
  y = \Psi x^\dag + \eta = \sum_{i\in \mathcal{A}^\dag}\Psi_{G_i} x^\dag_{G_i} + \eta.
$
Given the true active set  $\mathcal{A}^\dag$ (as if it were provided by an oracle), we define the oracle
solution $x^o$ by the least squares solution on $\calA^\dag$ to \eqref{model}, i.e.,
\begin{equation}\label{eqn:oracle}
x^o = \mathop\textrm{argmin}_{\mathrm{supp}(x) \subseteq \cup_{i\in\calA^\dag}G_i } \|\Psi x - y\|^2.
\end{equation}
The oracle solution $x^o$ is uniquely defined provided that $\Psi_{\cup_{i\in \mathcal{A}^\dag}G_i}$ has
full column rank. It is the best approximation for problem \eqref{model}, and will be
used as the benchmark.

For any vector $x\in \mathbb{R}^p$, we define an $\ell^r(\ell^q)$-penalty (with respect to the partition
$\{G_i\}_{i=1}^N$) for $r\geq0$ and $q>0$ by
\begin{equation}\label{eqn:l0l2}
\|x\|_{\ell^r(\ell^q)} = \left\{\begin{array}{ll}(\sum_{i=1}^N \|x_{G_i}\|_{\ell^q}^r)^{1/r}, & r >0, \\
\sharp\{i:  \|x_{G_i}\|_{\ell^q} \neq 0\}, & r=0, \\
\max_{i}\{\|x_{G_i}\|_{\ell^q}\}, & r=\infty.
\end{array}\right.
\end{equation}
When $r=q>0$, the $\ell^r(\ell^q)$ penalty reduces to the usual $\ell^r$ penalty. The choice $r=0$ (or
$r=\infty$) and $q=2$ is frequently used below. Further, we shall abuse the notation $\|\cdot
\|_{\ell^r(\ell^q)}$ for any vector that is only defined on some sub-groups (equivalently zero extension).

For any $r,q\geq1$, the $\ell^r(\ell^q)$ penalty defines a proper norm, and was
studied in \cite{Kowalski:2009}. For any $r,q>0$,
the $\ell^r(\ell^q)$ penalty is continuous. The $\ell^0(\ell^2)$ penalty,
which is of major interest in this work, is discontinuous, but still lower semi-continuous. 
\begin{proposition}\label{prop:lsc}
The $\ell^0(\ell^2)$ penalty is lower semicontinuous.
\end{proposition}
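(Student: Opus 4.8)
The plan is to verify the sequential characterization of lower semicontinuity directly: given $x^{(k)} \to x$ in $\mathbb{R}^p$, I would show that $\|x\|_{\ell^0(\ell^2)} \le \liminf_{k\to\infty}\|x^{(k)}\|_{\ell^0(\ell^2)}$. The one observation needed is that once a group is switched on it stays on under small perturbations: for each index $i$ with $\|x_{G_i}\|\neq 0$, continuity of the Euclidean norm together with $x^{(k)}_{G_i}\to x_{G_i}$ yields an integer $K_i$ such that $\|x^{(k)}_{G_i}\|\neq 0$ for all $k\ge K_i$.

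Concretely, first set $A = \{i : \|x_{G_i}\|\neq 0\}$, so that $\|x\|_{\ell^0(\ell^2)} = \sharp A$. Since $A \subseteq \{1,\dots,N\}$ is finite, $K := \max_{i\in A} K_i$ is finite, and for every $k\ge K$ each $i\in A$ satisfies $\|x^{(k)}_{G_i}\|\neq 0$; hence $A$ is contained in the set of nonzero-group indices of $x^{(k)}$, which gives $\|x^{(k)}\|_{\ell^0(\ell^2)} \ge \sharp A = \|x\|_{\ell^0(\ell^2)}$ for all such $k$. Passing to the $\liminf$ over $k$ completes the argument.

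An equivalent and even shorter route is the decomposition $\|x\|_{\ell^0(\ell^2)} = \sum_{i=1}^N \chi_i(x)$, where $\chi_i(x) = 1$ if $\|x_{G_i}\|\neq 0$ and $\chi_i(x) = 0$ otherwise. Each $\chi_i$ is the composition of the continuous map $x\mapsto \|x_{G_i}\|$ with the function $t\mapsto \mathbf 1_{\{t\neq 0\}}$, which is lower semicontinuous on $\mathbb{R}$: its sublevel set $\{t : \mathbf 1_{\{t\neq 0\}}\le c\}$ is $\mathbb{R}$ for $c\ge 1$, the closed set $\{0\}$ for $0\le c<1$, and empty for $c<0$. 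Thus each $\chi_i$ is lower semicontinuous, and a finite sum of lower semicontinuous functions is lower semicontinuous.

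There is essentially no genuine obstacle here; the only point that must not be glossed over is the finiteness of $N$, equivalently of $A$, which is exactly what lets the individual thresholds $K_i$ be merged into a single $K$ (this is precisely why an analogous infinite-group penalty can fail to be lower semicontinuous). I would therefore keep the write-up to a few lines, emphasizing only the switched-on-groups-stay-on observation and the finiteness of the partition.
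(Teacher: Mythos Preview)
Your proposal is correct and takes essentially the same approach as the paper: both reduce to the groupwise decomposition, use continuity of the Euclidean norm to pass to the scalar quantities $\|x_{G_i}\|$, and then invoke (or, in your case, directly verify) the lower semicontinuity of the $0$--$1$ indicator $t\mapsto\mathbf 1_{\{t\neq 0\}}$ before summing over the finitely many groups. Your first route is simply an elementary unfolding of this same idea, and your explicit remark on the necessity of finite $N$ is a nice addition.
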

\begin{proof}
Let $\{x^n\}\subset\mathbb{R}^p$ be a convergent sequence to some $x^*\in\mathbb{R}^p$. By the continuity of the $\ell^2$ norm,
$\|x^n_{G_i}\|$ converges to $\|x^*_{G_i}\|$, for $i=1,\ldots,N$. Now the assertion
follows from
$ \|x^*_{G_i}\|_{\ell^0}\leq \liminf \|x^n_{G_i}\|_{\ell^0}$ \cite[Lemma 2.2]{ItoKunisch:2014}.
\end{proof}

Now we derive the hard-thresholding operator $x^* \in H_\lambda (g)$
for one single group for an $s$-dimensional vector $g\in\mathbb{R}^s$ as
\begin{equation*}
  x^* \in \arg\min_{x\in\mathbb{R}^s} \tfrac{1}{2}\|x-g\|^2 + \lambda \|x\|_{\ell^0(\ell^2)},
\end{equation*}
where the $\|\cdot\|_{\ell^0(\ell^2)}$ penalty is given by $\|x\|_{\ell^0(\ell^2)} = 1$ if $x\neq 0$, and
$\|x\|_{\ell^0(\ell^2)} = 0$ otherwise. Then it can be verified directly
\begin{equation*}
  x^* =\left\{\begin{array}{ll}
    \ g,\ & \mbox{ if } \|g\|>\sqrt{2\lambda},\\
   \ 0,\ & \mbox{ if } \|g\|<\sqrt{2\lambda},\\
   \ 0\mbox{ or } g,\ & \mbox{ if } \|g\|=\sqrt{2\lambda}.
  \end{array}\right.
\end{equation*}
For a vector $x\in\mathbb{R}^p$, the hard thresholding operator $H_\lambda$ (with respect to the
partition $\{G_i\}_{i=1}^N$) is defined groupwise. For $s=1$, it recovers the usual hard
thresholding operator, and hence it is called a group hard thresholding operator.

\subsection{Blockwise mutual coherence}
We shall analyze the model \eqref{eqn:groupl0}
using the concept \textit{blockwise mutual coherence} (BMC). We first introduce some notation:
\begin{equation}\label{eqn:notation}
\bar\Psi_{G_i} =(\Psi_{G_i}^t\Psi_{G_i})^\frac{1}{2}\quad\mbox{and}\quad D_{i,j} = \bar\Psi_{G_i}^{-1} \Psi_{G_i}^t\Psi_{G_j} \bar{\Psi}_{G_j}^{-1}.
\end{equation}
Since $\Psi_{G_i}$ has full column rank, $\bar\Psi_{G_i}$ is symmetric
positive definite and invertible.

The main tool in our analysis is the BMC $\mu$ of the matrix $\Psi$ with respect to the
partition $\{G_i\}_{i=1}^N$, which is defined by
\begin{equation}\label{equ:mip}
\mu = \max_{i\neq j}\mu_{i,j}, \;\; \textrm{where } \mu_{i,j}= \sup_{\substack{u\in \mathcal{N}_i\backslash\{0\}\\
 v\in\mathcal{N}_j\backslash\{0\}}} \frac{\langle u,v\rangle}{\|u\|\|v\|},
\end{equation}
where $\mathcal{N}_i$ is the subspace spanned by the columns of $\Psi_{G_i}$, i.e., $\mathcal{N}_i =
\textrm{span} \{\psi_l, l\in G_i\}\subseteq \mathbb{R}^n$. The quantity $\mu_{i,j}$ is the cosine
of the minimum angle between two subspaces $\mathcal{N}_i$  and $\mathcal{N}_j$. Thus the BMC $\mu$
generalizes the concept mutual coherence (MC) $\nu$, which is defined by $\nu=\max_{i\neq j}|\langle
\psi_i,\psi_j\rangle|$ \cite{DonohoHuo:2001}, and is widely used in the analysis of sparse recovery
algorithms \cite{TroppGilbert:2007,CaiWang:2011,JiaoJinLu:2014}. The concept BMC was already
introduced in \cite{GaneshZhouMa:2009} for separating subspace signals, and \cite{ElhamifarVidal:2012}
for analyzing convex block sparse recovery. In linear algebra, one often uses principal angles to
quantify the angles between two subspaces \cite{BjorckGloub:1973}, i.e., given $U, V\subseteq
\mathbb{R}^{n}$, the principal angles $\theta_l$  for $l = 1,2,...,\min(\mathrm{dim}U, \mathrm{dim}V)$
are defined recursively by
\begin{equation*}
  \cos(\theta_l) = \max_{\substack{u\in U,\|u\|=1,\ u \perp \mathrm{span}\{u_i\}_{i=1}^{l-1} \\ v\in V,\|v\|=1,\
  v \perp \mathrm{span}\{v_j\}_{j=1}^{l-1}}} \langle u,v\rangle.
\end{equation*}

By the definition of principal angles, $\mu_{i,j} = \cos(\theta_1)$ for $(U,V) = (\mathcal{N}_i, \mathcal{N}_j)$;
see Lemma \ref{equdef} below and \cite[pp. 603--604]{BjorckGloub:1973} for the proof. Principal angles (and hence
BMC) can be computed efficiently by QR and SVD \cite{BjorckGloub:1973}, unlike
RIP or its variants \cite{Bandeira:2013}.

\begin{lemma}\label{equdef}
Let $U_i\in\mathbb{R}^{n\times s_i}$ and $V_j\in\mathbb{R}^{n\times s_j}$  be two matrices whose columns are orthonormal basis of $\mathcal{N}_i$ and
$\mathcal{N}_j$, respectively, and $\{\theta_l\}_{l=1}^{\min(s_i,s_j)}$ be the principal angles between $\mathcal{N}_i$
and $\mathcal{N}_j$. Then, $\mu_{i,j} = \cos(\theta_1) = \sigma_{\max}(U_i^{t}V_j)$.
\end{lemma}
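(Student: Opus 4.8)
The plan is to reduce the supremum defining $\mu_{i,j}$ to a computation in coordinates relative to the orthonormal bases $U_i$ and $V_j$, and then invoke the variational characterization of the largest singular value together with the defining recursion for principal angles.

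First I would write a generic nonzero $u\in\mathcal{N}_i$ uniquely as $u=U_ia$ with $a\in\mathbb{R}^{s_i}\setminus\{0\}$, and likewise $v=V_jb$ with $b\in\mathbb{R}^{s_j}\setminus\{0\}$. Since the columns of $U_i$ (resp.\ $V_j$) are orthonormal we have $U_i^tU_i=I_{s_i}$ and $V_j^tV_j=I_{s_j}$, so $\|u\|=\|a\|$, $\|v\|=\|b\|$, and $\langle u,v\rangle = a^t(U_i^tV_j)b$. Hence
\[
  \mu_{i,j} = \sup_{a\neq 0,\ b\neq 0}\frac{a^t(U_i^tV_j)b}{\|a\|\,\|b\|}.
\]
By the standard Rayleigh-type characterization of singular values — write an SVD $U_i^tV_j=P\Sigma Q^t$, substitute $\tilde a=P^ta$, $\tilde b=Q^tb$, and apply Cauchy--Schwarz to $\tilde a^t\Sigma\tilde b$ — the right-hand side equals $\sigma_{\max}(U_i^tV_j)$, with the supremum attained at the leading left/right singular vectors. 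This establishes $\mu_{i,j}=\sigma_{\max}(U_i^tV_j)$.

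It remains to identify this quantity with $\cos\theta_1$. By the recursive definition, $\cos\theta_1=\max\langle u,v\rangle$ over unit vectors $u\in\mathcal{N}_i$, $v\in\mathcal{N}_j$ (the first step of the recursion carries no orthogonality constraint); since normalizing $u$ and $v$ does not change the quotient $\langle u,v\rangle/(\|u\|\|v\|)$, and the Rayleigh quotient is continuous and homogeneous so the supremum over nonzero vectors is attained on the compact unit spheres, this maximum coincides with $\mu_{i,j}$, hence with $\sigma_{\max}(U_i^tV_j)$. (More is in fact true: setting $\tilde u_l=U_iPe_l$ and $\tilde v_l=V_jQe_l$ from the SVD above yields $\cos\theta_l=\sigma_l(U_i^tV_j)$ for all $l=1,\dots,\min(s_i,s_j)$, as in \cite[pp.~603--604]{BjorckGloub:1973}; only the top singular value is needed here.)

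I do not anticipate a genuine obstacle: the one point requiring a line of care is the passage from a supremum over nonzero vectors to a maximum over unit spheres and the observation that the first principal angle is an unconstrained maximizer, both of which are immediate. The mathematical content is entirely the singular-value variational principle applied to $U_i^tV_j$.
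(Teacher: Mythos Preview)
Your argument is correct and is exactly the standard one: parametrize by the orthonormal bases to reduce $\mu_{i,j}$ to $\sup_{a,b}\frac{a^t(U_i^tV_j)b}{\|a\|\|b\|}=\sigma_{\max}(U_i^tV_j)$, and observe that the first principal angle is the unconstrained maximum so $\cos\theta_1=\mu_{i,j}$. The paper does not supply its own proof but simply defers to \cite[pp.~603--604]{BjorckGloub:1973}, which is precisely this computation, so your approach coincides with the intended one.
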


The next result shows that the BMC $\mu$ can be bounded from above by the MC $\nu$; see Appendix \ref{app:bmic} for the proof.
Hence, the BMC is sharper than a direct extension of the MC, since the BMC does
not depend on the inner-group correlation.
\begin{proposition}\label{prop:bmic}
Let the MC $\nu$ of $\Psi$ satisfy $(s-1) \nu < 1$. Then for the BMC $\mu$ of $\Psi$, there holds
$  \mu \leq \frac{\nu s}{1 - \nu(s-1)}.$
\end{proposition}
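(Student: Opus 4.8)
The plan is to argue directly from the definitions, reducing the bound on $\mu$ to two elementary estimates on Gram matrices: a lower bound on the smallest eigenvalue of each inner-group Gram matrix $\Psi_{G_i}^t\Psi_{G_i}$, and an entrywise upper bound on the cross Gram matrices $\Psi_{G_i}^t\Psi_{G_j}$ for $i\neq j$, both in terms of the MC $\nu$.

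First I would fix $i\neq j$ and take arbitrary nonzero $u\in\mathcal{N}_i$ and $v\in\mathcal{N}_j$. Since the columns of $\Psi_{G_i}$ span $\mathcal{N}_i$, write $u=\Psi_{G_i}a$ with $a\in\mathbb{R}^{s_i}$ and $v=\Psi_{G_j}b$ with $b\in\mathbb{R}^{s_j}$; both $a$ and $b$ are nonzero because $\Psi_{G_i}$ and $\Psi_{G_j}$ have full column rank. Then $\|u\|^2=a^t(\Psi_{G_i}^t\Psi_{G_i})a$, $\|v\|^2=b^t(\Psi_{G_j}^t\Psi_{G_j})b$, and $\langle u,v\rangle=a^t(\Psi_{G_i}^t\Psi_{G_j})b$, so the quotient defining $\mu_{i,j}$ is expressed purely through these three quadratic forms.

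Next I would estimate them using $\nu$. The matrix $\Psi_{G_i}^t\Psi_{G_i}$ has unit diagonal (normalized columns) and off-diagonal entries of modulus at most $\nu$, so by the Gershgorin circle theorem (or a direct computation using $\sum_{l\neq m}|a_l||a_m|=\|a\|_1^2-\|a\|^2\leq(s_i-1)\|a\|^2$) its eigenvalues lie in $[1-(s_i-1)\nu,\,1+(s_i-1)\nu]$; the hypothesis $(s-1)\nu<1$ together with $s_i\leq s$ makes the lower endpoint positive, giving $\|u\|^2\geq(1-(s-1)\nu)\|a\|^2$ and similarly $\|v\|^2\geq(1-(s-1)\nu)\|b\|^2$. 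For the numerator, expanding entrywise and using $|\langle\psi_l,\psi_m\rangle|\leq\nu$ for $l\in G_i$, $m\in G_j$ yields $|\langle u,v\rangle|\leq\nu\|a\|_1\|b\|_1$, and then $\|a\|_1\leq\sqrt{s_i}\,\|a\|$, $\|b\|_1\leq\sqrt{s_j}\,\|b\|$ with $s_i,s_j\leq s$ give $|\langle u,v\rangle|\leq\nu s\,\|a\|\|b\|$.

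Combining these three bounds shows $\langle u,v\rangle/(\|u\|\|v\|)\leq \nu s/(1-(s-1)\nu)$ uniformly in the chosen $u,v$; taking the supremum over $u\in\mathcal{N}_i\setminus\{0\}$ and $v\in\mathcal{N}_j\setminus\{0\}$ bounds $\mu_{i,j}$, and the maximum over $i\neq j$ then bounds $\mu$, which is exactly the claimed inequality. I do not expect a genuine obstacle; the only points needing a little care are verifying that the Gershgorin estimate indeed delivers a strictly positive lower eigenvalue under $(s-1)\nu<1$ (so that dividing by $\|u\|\|v\|$ is legitimate) and tracking the step $\sqrt{s_i}\sqrt{s_j}\leq s$, which is what produces the factor $s$ in the numerator rather than a sharper group-size-dependent constant.
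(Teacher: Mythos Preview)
Your proposal is correct and follows essentially the same argument as the paper's proof: write $u=\Psi_{G_i}a$, $v=\Psi_{G_j}b$, bound $\|u\|^2\geq(1-(s-1)\nu)\|a\|^2$ via the off-diagonal-dominated Gram matrix, bound $|\langle u,v\rangle|\leq\nu\|a\|_1\|b\|_1\leq\nu s\|a\|\|b\|$, and combine. The only cosmetic difference is that the paper carries out the lower eigenvalue estimate by a direct quadratic-form expansion rather than invoking Gershgorin, but you already note that equivalence yourself.
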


Below we always assume  the following condition.
\begin{assumption}\label{assump:mu}
The BMC $\mu$ of $\Psi$ satisfies $\mu  \in (0,{1}/{3T})$.
\end{assumption}

 We have a few comments on Assumption \ref{assump:mu}.
\begin{remark}\label{rmk:correlation1}
First, if the group sizes do not vary much, then the condition $\mu < {1}/{3T}$ holds if $
\nu < 1/C\|x^{\dag}\|_{\ell^0}$. The latter condition with $C\in(2,7)$ is widely used for analyzing lasso
\cite{zhang:2009sharp} and OMP \cite{Tropp:2004,CaiWang:2011}. Hence, the condition in
Assumption \ref{assump:mu} generalizes the classical one. Second,
it allows strong inner-group correlations (i.e., ill-conditioning of
$\Psi_{G_i}$), for which the MC $\nu$ can be very close to one,
and thus it has a built-in mechanism to tackle inner-group correlation. This differs
essentially from existing approaches,
which rely on certain pre-processing techniques \cite{Buhlmann:2013,Witten:2014}.
\end{remark}

\begin{remark}\label{rmk:correlation2}
A similar block MC, defined by $\mu_B = \max_{i\neq j} \|\Psi_{G_i}^{t}\Psi_{G_j}\|/s$,
was used for analyzing group greedy algorithms \cite{Eldar:2010,BenElda:2011} and group
lasso \cite{Bajwa:2015} (without scaling $s$). If every submatrix $\Psi_{G_i}$ is
column orthonormal, i.e., $\Psi_{G_i}^t\Psi_{G_i} = I$, then $\mu_B$ and $\mu$ are identical.
However, to obtain the error estimates in \cite{Eldar:2010,BenElda:2011}, the MC $\nu$
within each group is still needed, which excludes inner-group correlations.
The estimates in \cite{Bajwa:2015} were obtained under the assumption $
\max_{i} \|\Psi_{G_i}^{t}\Psi_{G_i}-I\| \leq 1/{2}$, which again implies that $\Psi_{G_i}$
are well conditioned \cite[Theorem 1]{Bajwa:2015}.
Group restrict eigenvalue conditions \cite{HuangZhang:2010,LouniciPontil:2011} and  group RIP
\cite{ErenVidyasagar:2015} were adopted for analyzing the group lasso. Under these
conditions, strong correlation within groups is also not allowed.
\end{remark}

Now we give a few useful estimates. The proofs can be found in Appendix
\ref{app:G}.
\begin{lemma}\label{lem:est-G}
For any $i,j$, there hold
\begin{equation*}
  \begin{aligned}
   &\|\bar{\Psi}_{G_i}^{-1}\Psi^t_{G_i} y\| \leq \|y\|, \quad \|\Psi_{G_i}\bar{\Psi}_{G_i}^{-1} x_{G_i}\| =\|x_{G_i}\|,\\
   &\|D_{i,j}x_{G_j}\| \left\{\begin{array}{ll} \leq  \mu \|x_{G_j}\|& i \neq j,\\ = \|x_{G_j}\| & i=j .\end{array} \right.
  \end{aligned}
\end{equation*}
\end{lemma}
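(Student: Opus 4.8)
The plan is to reduce every inequality to the basic orthonormality structure hidden in the definitions \eqref{eqn:notation}, and then to invoke Lemma \ref{equdef} for the cross-term bound. First I would observe that the matrix $\Psi_{G_i}\bar{\Psi}_{G_i}^{-1}$ has orthonormal columns: indeed $(\Psi_{G_i}\bar{\Psi}_{G_i}^{-1})^t(\Psi_{G_i}\bar{\Psi}_{G_i}^{-1}) = \bar{\Psi}_{G_i}^{-1}\Psi_{G_i}^t\Psi_{G_i}\bar{\Psi}_{G_i}^{-1} = \bar{\Psi}_{G_i}^{-1}\bar{\Psi}_{G_i}^2\bar{\Psi}_{G_i}^{-1} = I$, using that $\bar{\Psi}_{G_i} = (\Psi_{G_i}^t\Psi_{G_i})^{1/2}$ is symmetric positive definite. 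From this the identity $\|\Psi_{G_i}\bar{\Psi}_{G_i}^{-1} x_{G_i}\| = \|x_{G_i}\|$ is immediate, since an isometry preserves norms. For the first inequality, write $\bar{\Psi}_{G_i}^{-1}\Psi_{G_i}^t y = (\Psi_{G_i}\bar{\Psi}_{G_i}^{-1})^t y$; the operator norm of the transpose of a matrix with orthonormal columns is $1$ (it is the orthogonal projection onto $\mathcal{N}_i$ expressed in an orthonormal basis, followed by nothing), hence $\|(\Psi_{G_i}\bar{\Psi}_{G_i}^{-1})^t y\| \leq \|y\|$.

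Next I would turn to the cross-term $D_{i,j} = \bar{\Psi}_{G_i}^{-1}\Psi_{G_i}^t\Psi_{G_j}\bar{\Psi}_{G_j}^{-1} = (\Psi_{G_i}\bar{\Psi}_{G_i}^{-1})^t(\Psi_{G_j}\bar{\Psi}_{G_j}^{-1})$. Writing $U_i = \Psi_{G_i}\bar{\Psi}_{G_i}^{-1}$ and $U_j = \Psi_{G_j}\bar{\Psi}_{G_j}^{-1}$, we just showed these have orthonormal columns spanning $\mathcal{N}_i$ and $\mathcal{N}_j$ respectively, so $D_{i,j} = U_i^t U_j$ and $\|D_{i,j}\| = \sigma_{\max}(U_i^t U_j) = \mu_{i,j} \leq \mu$ by Lemma \ref{equdef}, whenever $i\neq j$. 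This gives $\|D_{i,j} x_{G_j}\| \leq \mu \|x_{G_j}\|$. When $i = j$, $D_{i,i} = U_i^t U_i = I$, so $\|D_{i,i} x_{G_i}\| = \|x_{G_i}\|$, completing the last claim.

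The only mild subtlety, and the step I would be most careful about, is justifying that $U_i$ genuinely has \emph{orthonormal} (not merely independent) columns — this is exactly where the symmetric square root $\bar{\Psi}_{G_i} = (\Psi_{G_i}^t\Psi_{G_i})^{1/2}$ is doing its work, and it is the reason this particular symmetric whitening was chosen in \eqref{eqn:notation} rather than, say, a Cholesky factor. Once that is in place, everything else is a one-line consequence of the identity $D_{i,j} = U_i^t U_j$ together with the already-proved Lemma \ref{equdef}; there is no real obstacle, and no estimate here requires Assumption \ref{assump:mu}.
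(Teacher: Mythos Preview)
Your proof is correct and follows essentially the same route as the paper: both recognize that $U_i=\Psi_{G_i}\bar\Psi_{G_i}^{-1}$ has orthonormal columns (equivalently $AA^t=I$ with $A=\bar\Psi_{G_i}^{-1}\Psi_{G_i}^t$), deduce the first two relations from this, and then invoke Lemma~\ref{equdef} on $D_{i,j}=U_i^tU_j$ for the cross-term bound. One small aside is inaccurate but harmless to the argument: a Cholesky factor $R$ of $\Psi_{G_i}^t\Psi_{G_i}$ would \emph{also} make $\Psi_{G_i}R^{-1}$ column-orthonormal, so the choice of the symmetric square root is not essential for this lemma.
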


\begin{lemma}\label{lem:est-D}
For any distinct groups $G_{i_1},\cdots,G_{i_M}$, $1\leq M\leq T$, let
\begin{equation*}
D = \left(\begin{array}{ccc}
D_{i_1,i_1}& \cdots & D_{i_1,i_M} \\
\vdots & \vdots & \vdots \\
D_{i_M,i_1} & \cdots & D_{i_M,i_M}
\end{array}\right)\quad\mbox{ and }\quad x = \left(\begin{array}{c}
x_{G_{i_1}} \\ \vdots \\ x_{G_{i_M}}
\end{array}\right).
\end{equation*}
There holds
$
\|Dx \|_{\ell^\infty(\ell^2)} \in [(1 - (M-1)\mu)\|x\|_{\ell^\infty(\ell^2)},(1 + (M-1)\mu)\|x\|_{\ell^\infty(\ell^2)}].
$
\end{lemma}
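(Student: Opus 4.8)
The plan is to reduce everything to a groupwise estimate and then invoke Lemma~\ref{lem:est-G}. Fix an index $k \in \{i_1,\ldots,i_M\}$ and look at the $k$-th block row of $Dx$, namely $(Dx)_{G_k} = \sum_{\ell=1}^{M} D_{k,i_\ell} x_{G_{i_\ell}} = x_{G_k} + \sum_{i_\ell \neq k} D_{k,i_\ell} x_{G_{i_\ell}}$, where I have split off the diagonal term using the identity $D_{k,k} = I$ (this is the ``$i=j$'' case recorded in Lemma~\ref{lem:est-G}, since $\bar\Psi_{G_k}^{-1}\Psi_{G_k}^t\Psi_{G_k}\bar\Psi_{G_k}^{-1} = \bar\Psi_{G_k}^{-1}\bar\Psi_{G_k}^2\bar\Psi_{G_k}^{-1} = I$). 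Now apply the triangle inequality in $\ell^2$ together with the off-diagonal bound $\|D_{k,i_\ell} x_{G_{i_\ell}}\| \le \mu\|x_{G_{i_\ell}}\|$ from Lemma~\ref{lem:est-G}, and then bound each $\|x_{G_{i_\ell}}\| \le \|x\|_{\ell^\infty(\ell^2)}$. There are $M-1$ off-diagonal terms, so the upper bound reads
\begin{equation*}
\|(Dx)_{G_k}\| \le \|x_{G_k}\| + (M-1)\mu\,\|x\|_{\ell^\infty(\ell^2)} \le (1+(M-1)\mu)\,\|x\|_{\ell^\infty(\ell^2)} .
\end{equation*}
Taking the maximum over $k$ gives the right-hand inequality $\|Dx\|_{\ell^\infty(\ell^2)} \le (1+(M-1)\mu)\|x\|_{\ell^\infty(\ell^2)}$.

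For the lower bound I would choose $k$ to be the index achieving $\|x_{G_k}\| = \|x\|_{\ell^\infty(\ell^2)}$, i.e.\ the ``largest'' block. Then the reverse triangle inequality applied to the same decomposition $(Dx)_{G_k} = x_{G_k} + \sum_{i_\ell\neq k} D_{k,i_\ell} x_{G_{i_\ell}}$ yields
\begin{equation*}
\|(Dx)_{G_k}\| \ge \|x_{G_k}\| - \sum_{i_\ell\neq k}\|D_{k,i_\ell} x_{G_{i_\ell}}\| \ge \|x\|_{\ell^\infty(\ell^2)} - (M-1)\mu\,\|x\|_{\ell^\infty(\ell^2)} = (1-(M-1)\mu)\,\|x\|_{\ell^\infty(\ell^2)},
\end{equation*}
and since $\|Dx\|_{\ell^\infty(\ell^2)} \ge \|(Dx)_{G_k}\|$ this is exactly the claimed lower bound. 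Note that under Assumption~\ref{assump:mu}, $(M-1)\mu < (T-1)/(3T) < 1$, so the lower bound is genuinely nonnegative and the interval is nonempty, though the statement itself does not require this.

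There is essentially no hard step here: the whole argument is a two-sided triangle inequality combined with the per-block bounds already proved in Lemma~\ref{lem:est-G}. The only point that needs a little care is the bookkeeping on the number of off-diagonal terms — after peeling off the diagonal block there are exactly $M-1$ remaining summands, which is where the factor $(M-1)\mu$ (rather than $M\mu$) comes from — and the observation that the upper bound holds for every block row whereas the lower bound only needs to be checked for the maximizing block row. Both maxima are over the same finite index set $\{i_1,\ldots,i_M\}$, so passing from the blockwise estimates to the $\ell^\infty(\ell^2)$ statement is immediate.
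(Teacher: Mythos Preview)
Your proof is correct and matches the paper's own argument essentially line for line: split off the diagonal block using $D_{k,k}=I$, bound the $M-1$ off-diagonal terms via Lemma~\ref{lem:est-G}, and apply the (reverse) triangle inequality, choosing the maximizing block of $x$ for the lower bound. The only cosmetic difference is that the paper proves the upper bound by looking directly at the block $k^*$ maximizing $\|(Dx)_{G_{k^*}}\|$, whereas you bound every block row and then take the max, which is equivalent.
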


Lemma \ref{lem:est-D} directly implies the uniqueness of the oracle solution $x^o$;
see Appendix \ref{app:oracle-unique} for the proof.
\begin{corollary}\label{cor:oracle}
If Assumption \ref{assump:mu} holds, then $x^o$ is unique.
\end{corollary}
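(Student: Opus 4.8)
The plan is to show that, under Assumption~\ref{assump:mu}, the submatrix $\Psi_{\cup_{i\in\calA^\dag}G_i}$ has full column rank; by the definition \eqref{eqn:oracle} of $x^o$ as a least-squares solution on $\cup_{i\in\calA^\dag}G_i$, this immediately gives uniqueness.

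First I would reduce to the orthonormalized blocks. Since $|\calA^\dag| = T$, enumerate the active groups as $G_{i_1},\ldots,G_{i_T}$ and suppose $z = (z_{G_{i_1}},\ldots,z_{G_{i_T}})$ lies in the kernel, i.e.\ $\sum_{k=1}^{T}\Psi_{G_{i_k}}z_{G_{i_k}} = 0$. Put $w_{G_{i_k}} = \bar\Psi_{G_{i_k}}z_{G_{i_k}}$; since each $\bar\Psi_{G_{i_k}}$ is symmetric positive definite, hence invertible, we have $z = 0$ iff $w = 0$. Writing $\Psi_{G_{i_k}}z_{G_{i_k}} = (\Psi_{G_{i_k}}\bar\Psi_{G_{i_k}}^{-1})w_{G_{i_k}}$ and left-multiplying the kernel relation by $\bar\Psi_{G_{i_j}}^{-1}\Psi_{G_{i_j}}^{t}$, the definition \eqref{eqn:notation} of $D_{i,j}$ turns this into the block system $\sum_{k=1}^{T} D_{i_j,i_k} w_{G_{i_k}} = 0$ for every $j$, i.e.\ $Dw = 0$ with $D$ the block matrix of Lemma~\ref{lem:est-D} for $M = T$.

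Then I would simply invoke Lemma~\ref{lem:est-D}: $\|Dw\|_{\ell^\infty(\ell^2)} \geq (1-(T-1)\mu)\|w\|_{\ell^\infty(\ell^2)}$. Assumption~\ref{assump:mu} gives $(T-1)\mu < (T-1)/(3T) < 1/3$, so the factor $1-(T-1)\mu$ is strictly positive; hence $Dw = 0$ forces $w = 0$, thus $z = 0$, so $\Psi_{\cup_{i\in\calA^\dag}G_i}$ has full column rank and $x^o$ is unique. There is no genuine obstacle here: the argument is a direct application of Lemma~\ref{lem:est-D} once one notes that full column rank is preserved under the block change of variables $z\mapsto \bar\Psi z$ and that the resulting Gram matrix is exactly $D$. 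The only mild care needed is bookkeeping with the zero-extension convention for vectors supported on sub-groups, and checking that $M = T$ falls within the range $1\le M\le T$ covered by Lemma~\ref{lem:est-D}.
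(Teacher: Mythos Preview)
Your proposal is correct and follows essentially the same route as the paper: both pass to the orthonormalized block variables $\bar x_{G_i}=\bar\Psi_{G_i}x_{G_i}$, identify the resulting Gram/normal matrix with the block matrix $D$ of Lemma~\ref{lem:est-D} (with $M=T$), and use the lower bound $\|Dw\|_{\ell^\infty(\ell^2)}\ge(1-(T-1)\mu)\|w\|_{\ell^\infty(\ell^2)}$ together with Assumption~\ref{assump:mu} to conclude injectivity and hence uniqueness of $x^o$. The only cosmetic difference is that the paper phrases it as ``the normal matrix of the transformed least-squares problem is exactly $D$'', while you phrase it as ``the kernel equation becomes $Dw=0$''; these are the same observation.
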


\section{Theory of the $\ell^0(\ell^2)$ optimization problem}\label{sec:cwm}
Now we analyze the model \eqref{eqn:groupl0},
e.g., existence of a global minimizer, invariance property, support recovery, and
block coordinatewise minimizers.

\subsection{Existence and property of a global minimizer}
First we show the existence of a global minimizer to problem \eqref{eqn:groupl0}; see
Appendix \ref{app:existence} for the proof.
\begin{theorem}\label{thm:existence}
There exists a global minimizer to problem \eqref{eqn:groupl0}.
\end{theorem}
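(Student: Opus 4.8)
The plan is to establish existence by the direct method in the calculus of variations, exploiting the lower semicontinuity of the $\ell^0(\ell^2)$ penalty (Proposition \ref{prop:lsc}) together with coercivity considerations that deal with the kernel of $\Psi$. First I would note that $J_\lambda$ is bounded below by $0$, so the infimum $m = \inf_{x\in\mathbb{R}^p} J_\lambda(x)$ is finite, and I would take a minimizing sequence $\{x^k\}$ with $J_\lambda(x^k)\to m$. The main subtlety is that $J_\lambda$ is \emph{not} coercive on all of $\mathbb{R}^p$: adding any element of $\ker\Psi$ to $x$ leaves $\|\Psi x - y\|$ unchanged and can only keep the penalty the same or increase it, so a naive boundedness argument fails. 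The remedy is to work with the finitely many possible support patterns.

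The key observation is that the group support $A(x) = \{i : \|x_{G_i}\|\neq 0\}$ takes values in the finite set $2^{\{1,\dots,N\}}$. Along the minimizing sequence, I would pass to a subsequence (still denoted $\{x^k\}$) on which the group support is constant, say equal to a fixed index set $A$; this is possible since there are only $2^N$ choices. On this subsequence the penalty term is the constant $\lambda|A|$, so minimizing $J_\lambda$ reduces to minimizing the quadratic $\tfrac12\|\Psi x - y\|^2$ over vectors supported on $\cup_{i\in A}G_i$. Over this subspace the least-squares functional attains its minimum — either $\Psi_{\cup_{i\in A}G_i}$ has full column rank, in which case the minimizer is unique and given by the normal equations, or one simply projects $y$ onto the range of $\Psi_{\cup_{i\in A}G_i}$ and picks any preimage. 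Denote such a minimizer by $\hat x^A$; then $J_\lambda(\hat x^A)\le \liminf_k J_\lambda(x^k) = m$, while $J_\lambda(\hat x^A)\ge m$ by definition of the infimum, so $\hat x^A$ is a global minimizer.

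An alternative, perhaps cleaner, route that avoids case analysis is purely combinatorial: for each $A\subseteq\{1,\dots,N\}$ define $v_A = \min\{\tfrac12\|\Psi x - y\|^2 : \mathrm{supp}(x)\subseteq \cup_{i\in A}G_i\} + \lambda|A|$, which is a well-defined finite number since a least-squares problem always attains its minimum; then $\inf_x J_\lambda(x) = \min_{A}\, v_A$, a minimum over a finite set, which is therefore attained, and any $x$ realizing the optimal $A$ and its associated least-squares minimum is a global minimizer. I would likely present the direct-method version for conceptual clarity but remark on this finite-enumeration viewpoint. The only place requiring care — the ``main obstacle'' — is the non-coercivity just discussed; once one recognizes that the penalty localizes the problem to finitely many subspace least-squares problems, each of which trivially attains its infimum, the proof is short. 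Lower semicontinuity of the penalty (already available from Proposition \ref{prop:lsc}) is what licenses the inequality $J_\lambda(\hat x^A)\le\liminf_k J_\lambda(x^k)$ if one insists on the sequential argument, though in the constant-support subsequence it is not even needed since the penalty is literally constant there.
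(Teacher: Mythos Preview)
Your proposal is correct, and your ``alternative, cleaner route'' via finite enumeration over all group-support patterns $A\subseteq\{1,\dots,N\}$ is exactly the argument the paper gives: it minimizes the least-squares term over each subspace $\{\mathrm{supp}(x)\subseteq\cup_{i\in A}G_i\}$, then takes the best over the finitely many $A$. Your direct-method version is a mild repackaging of the same idea (the subsequence with constant support immediately reduces to one such subspace least-squares problem), so there is no substantive difference in approach.
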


It can be verified directly that the  $\ell^0(\ell^2)$ penalty is invariant under group full-rank column
transformation, i.e., $\|\bar\Psi_{G_i} x_{G_i}\|_{\ell^0}=\|x_{G_i}\|_{\ell^0}$ for
nonsingular $\bar\Psi_{G_i}$, $i=1,2,\ldots,N$. Thus problem
\eqref{eqn:groupl0} can be equivalently transformed into
\begin{equation}\label{eqn:groupl0-orth}
  \tfrac{1}{2}\|\sum_{i=1}^N\Psi_{G_i}\bar{\Psi}_{G_i}^{-1} \bar{x}_{G_i} -y\|^2 + \lambda \|\bar x\|_{\ell^0(\ell^2)}.
\end{equation}
with $\bar{x}_{G_i} = \bar{\Psi}_{G_i}x_{G_i}$.
This invariance does not hold for other group sparse penalties, e.g., group lasso and group MCP.
Further, the BMC $\mu$ is invariant under the transformation, since $\mathrm{span}
(\{\psi_l: l\in G_i\}) =\mathrm{span}(\{(\Psi_{G_i}\bar \Psi_{G_i}^{-1})_l\})$.

\begin{remark}\label{rmk:correlation3}
Most existing approaches do not distinguish inner- and inter-group columns, and thus
require incoherence between the columns within each group in the theoretical analysis. For strong inner-group
correlation, a clustering step is often employed to decorrelate $\Psi$
\cite{Buhlmann:2013,Witten:2014}. In contrast, our approach has a built-in decorrelation mechanism: it is
independent of the conditioning of the submatrices $\{\Psi_{G_i}\}_{i=1}^N$.
\end{remark}

For a properly chosen $\lambda$,
a global minimizer has nice properties, e.g., exact support recovery for small noise and
oracle property; the proof is given in Appendix \ref{app:oracle}.
\begin{theorem}\label{thm:oracle}
Let Assumption \ref{assump:mu} hold, $x$ be a global minimizer of \eqref{eqn:groupl0} with an
active set $\mathcal{A}$, and $\bar{x}_{G_i}^\dag = \bar{\Psi}_{G_i}x_{G_i}^\dag$.
\begin{itemize}
\item[(i)] Let $\Lambda = |\{i\in \mathcal{A}^\dag: \|\bar{x}^\dag_{G_i}\|< 2\sqrt{2\lambda} + 3\epsilon\}|$. If $\lambda > {\epsilon^2}/{2}$, then
$|\calA\setminus\calA^\dag| + |\calA^\dag\setminus\calA| \leq 2 \Lambda.$
\item[(ii)] If $\eta$ is small, i.e., $\epsilon < \min_{i\in \mathcal{A}^\dag}\{\|\bar{x}^\dag_{G_i}\|\}/5,$
then for any $\lambda \in ({\epsilon^2}/{2}, (\min_{i\in \mathcal{A}^\dag}\{\|\bar{x}^\dag_{G_i}\|\} - 2\epsilon)^2/8)$, the oracle
solution $x^o$ is the only global minimizer to $J_\lambda$.
\end{itemize}
\end{theorem}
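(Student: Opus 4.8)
\emph{Plan.} The plan is to use the transformation invariance behind \eqref{eqn:groupl0-orth} to reduce to the case $\Psi_{G_i}^t\Psi_{G_i}=I$: write $A_i=\Psi_{G_i}\bar\Psi_{G_i}^{-1}$ (so $A_i^tA_i=I$, $A_i^tA_j=D_{i,j}$) and work with $\bar x,\bar x^\dag$. For a global minimizer $x$ with active set $\mathcal{A}$ I would first record three facts. \textbf{(a)} From $J_\lambda(x)\le J_\lambda(x^o)\le\tfrac12\epsilon^2+\lambda T$ (because $\|\Psi x^o-y\|\le\|\Psi x^\dag-y\|=\epsilon$ and $x^o$ is supported on at most $T$ groups) together with $\lambda>\epsilon^2/2$ one obtains $|\mathcal{A}|\le T$ and $\|\Psi x-y\|^2\le\epsilon^2+2\lambda(T-|\mathcal{A}|)$. \textbf{(b)} Since $A_i^tA_i=I$, minimizing $J_\lambda$ over a single block $\bar x_{G_i}$ with the others frozen is group hard thresholding, so any block coordinatewise minimizer --- in particular $x$ --- satisfies $\bar x_{G_i}\in H_\lambda(\bar x_{G_i}+d_i)$ with $d_i:=A_i^t(y-\Psi x)$; hence $i\in\mathcal{A}\Rightarrow d_i=0,\ \|\bar x_{G_i}\|\ge\sqrt{2\lambda}$, and $i\notin\mathcal{A}\Rightarrow\|d_i\|\le\sqrt{2\lambda}$. \textbf{(c)} The Cauchy--Schwarz version of Lemma~\ref{lem:est-D}: for any group set $\mathcal B$ with $|\mathcal B|\le2T$ one has $\|\sum_{k\in\mathcal B}A_kv_k\|^2\ge(1-(2T-1)\mu)\sum_{k\in\mathcal B}\|v_k\|^2$, and $1-(2T-1)\mu>\tfrac13$ by Assumption~\ref{assump:mu}; in particular $\Psi_{\cup_{i\in\mathcal B}G_i}$ has full column rank whenever $|\mathcal B|\le T$.

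\emph{Part (i).} Put $e=z-w$ with $z,w$ the zero extensions of $\bar x^\dag,\bar x$, so that $\sum_jA_je_j=\Psi(x^\dag-x)$ and $d_i=e_i+\sum_{j\ne i}D_{i,j}e_j+A_i^t\eta$. The key point is that a ``large'' true group cannot be missed: if $i_0\in\mathcal{A}^\dag\setminus\mathcal{A}$ then $e_{i_0}=\bar x^\dag_{G_{i_0}}$ and $\|d_{i_0}\|\le\sqrt{2\lambda}$, hence
\[
 \|\bar x^\dag_{G_{i_0}}\|\le\sqrt{2\lambda}+\epsilon+\mu\sum_{j\ne i_0}\|e_j\|.
\]
Since $\mathrm{supp}(e)\subseteq\mathcal{A}\cup\mathcal{A}^\dag$ involves at most $2T$ groups, Cauchy--Schwarz and fact \textbf{(c)} give $\mu\sum_{j\ne i_0}\|e_j\|\le\mu\sqrt{2T-1}\,\|e\|\le\sqrt{2\mu}\,\|\Psi(x^\dag-x)\|$; with $\|\Psi(x^\dag-x)\|\le\epsilon+\|\Psi x-y\|\le2\epsilon+\sqrt{2\lambda T}$ and $\mu T<\tfrac13$ this is $<2\epsilon+\sqrt{2\lambda}$, so $\|\bar x^\dag_{G_{i_0}}\|<2\sqrt{2\lambda}+3\epsilon$, a contradiction. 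Thus $\mathcal{A}^\dag\setminus\mathcal{A}$ contains only groups with $\|\bar x^\dag_{G_i}\|<2\sqrt{2\lambda}+3\epsilon$, i.e.\ $|\mathcal{A}^\dag\setminus\mathcal{A}|\le\Lambda$. Finally $|\mathcal{A}|-T=|\mathcal{A}\setminus\mathcal{A}^\dag|-|\mathcal{A}^\dag\setminus\mathcal{A}|\le0$ gives $|\mathcal{A}\setminus\mathcal{A}^\dag|\le|\mathcal{A}^\dag\setminus\mathcal{A}|\le\Lambda$, and adding the two bounds yields $|\mathcal{A}\setminus\mathcal{A}^\dag|+|\mathcal{A}^\dag\setminus\mathcal{A}|\le2\Lambda$.

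\emph{Part (ii).} Let $m=\min_{i\in\mathcal{A}^\dag}\|\bar x^\dag_{G_i}\|$; here I would show directly that no true group is dropped. Suppose $i_0\in\mathcal{A}^\dag\setminus\mathcal{A}$ and set $k'=|\mathcal{A}^\dag\setminus\mathcal{A}|\ge1$. As $\Psi x\in\mathrm{range}\,\Psi_{\mathcal{A}}$,
\[
 \|\Psi x-y\|\ge\mathrm{dist}(y,\mathrm{range}\,\Psi_{\mathcal{A}})\ge\mathrm{dist}(\Psi x^\dag,\mathrm{range}\,\Psi_{\mathcal{A}})-\epsilon,
\]
and removing the shared groups and applying fact \textbf{(c)} to the $\le2T$ groups of $\mathcal{A}\cup(\mathcal{A}^\dag\setminus\mathcal{A})$ gives $\mathrm{dist}(\Psi x^\dag,\mathrm{range}\,\Psi_{\mathcal{A}})^2\ge(1-(2T-1)\mu)\sum_{i\in\mathcal{A}^\dag\setminus\mathcal{A}}\|\bar x^\dag_{G_i}\|^2>\tfrac13k'm^2$. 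Combining $\|\Psi x-y\|^2>(m\sqrt{k'/3}-\epsilon)^2$ with $\|\Psi x-y\|^2\le\epsilon^2+2\lambda(T-|\mathcal{A}|)\le\epsilon^2+2\lambda k'$ from \textbf{(a)} and dividing by $k'$ yields $\lambda>\tfrac{m^2}{6}-\tfrac{\epsilon m}{\sqrt3}$, which is incompatible with $\lambda<\tfrac{(m-2\epsilon)^2}{8}$ (a short computation using $\epsilon<m/5$). Hence $\mathcal{A}^\dag\subseteq\mathcal{A}$, and with $|\mathcal{A}|\le T=|\mathcal{A}^\dag|$ this forces $\mathcal{A}=\mathcal{A}^\dag$. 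Then $J_\lambda(x)\le J_\lambda(x^o)$ together with the least-squares optimality of $x^o$ forces $\|\Psi x-y\|=\|\Psi x^o-y\|$, whence $x=x^o$ by the full-rank part of \textbf{(c)} (uniqueness of $x^o$ is Corollary~\ref{cor:oracle}). Since a global minimizer exists by Theorem~\ref{thm:existence}, $x^o$ is the unique one.

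\emph{Main obstacle.} The delicate part is the constant bookkeeping in the interference estimates. In (i) one needs the aggregate cross-correlation $\mu\sum_{j\ne i_0}\|e_j\|$ to stay \emph{strictly} below $\sqrt{2\lambda}+2\epsilon$, which works only because the factor $\sqrt{2\mu}$ absorbs the $\sqrt T$ produced by the (up to) $2T$ groups involved --- this is exactly where the $\ell^2$ form of the frame inequality (rather than the $\ell^\infty(\ell^2)$ form of Lemma~\ref{lem:est-D}) and Assumption~\ref{assump:mu} in the form $\mu<1/3T$ are consumed. In (ii) one must balance the missed energy $\tfrac13k'm^2$ against the residual budget $2\lambda k'$, which closes within the stated window for $\lambda$ and under $\epsilon<m/5$; the numerology ($3\epsilon$, $(m-2\epsilon)^2/8$, $m/5$) is tuned precisely to these estimates and leaves little slack, so the difficulty is in choosing the inequalities sharply rather than in any single deep idea.
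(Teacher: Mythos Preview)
Your argument is correct, but it follows a genuinely different route from the paper's own proof.

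\textbf{How the paper argues.} The paper's proof of Theorem~\ref{thm:oracle} is two lines long because it defers all the work to Theorem~\ref{thm:support}(i), a statement about block coordinatewise minimizers that is proved via the $\ell^\infty(\ell^2)$ machinery of Lemmas~\ref{lem:est-G}--\ref{lem:est-D} and the one-step primal--dual estimates of Lemma~\ref{lem:update-in-A}. Concretely: from $J_\lambda(x)\le J_\lambda(x^\dag)$ one gets $|\mathcal{A}|\le T$, and then Theorem~\ref{thm:support}(i) yields $\{i\in\mathcal{A}^\dag:\|\bar x^\dag_{G_i}\|\ge 2\sqrt{2\lambda}+3\epsilon\}\subseteq\mathcal{A}$; the counting step to reach $2\Lambda$ is the same as yours. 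Part~(ii) is obtained by specializing to $\Lambda=0$.

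\textbf{How your argument differs.} You never invoke Theorem~\ref{thm:support} or the $\ell^\infty(\ell^2)$ inversion bound of Lemma~\ref{lem:est-D}. Instead you combine three more elementary ingredients: the BCWM inequality $\|d_{i_0}\|\le\sqrt{2\lambda}$, the global $\ell^2$ frame bound $\|\sum_k A_kv_k\|^2\ge(1-(2T-1)\mu)\sum_k\|v_k\|^2$ (your fact~(c), which follows from $\|D_{i,j}\|\le\mu$ exactly as in the proof of Proposition~\ref{prop:bmic}), and the residual budget $\|\Psi x-y\|^2\le\epsilon^2+2\lambda T$ coming directly from global optimality. The key step $\mu\sum_{j\ne i_0}\|e_j\|\le\sqrt{2\mu}\,\|\Psi(x^\dag-x)\|<\sqrt{2\lambda}+2\epsilon$ uses Assumption~\ref{assump:mu} in the form $(2T-1)\mu<2/3$ and $\mu T<1/3$, and the numerology closes cleanly. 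For part~(ii) you drop the BCWM condition entirely and run a projection/energy argument: the missed mass $\tfrac13 k'm^2$ beats the residual allowance $\epsilon^2+2\lambda k'$, and your ``short computation'' $\frac{(m-2\epsilon)^2}{8}<\frac{m^2}{6}-\frac{\epsilon m}{\sqrt3}$ for $\epsilon<m/5$ does hold (the difference is increasing in $\epsilon/m$ and still negative at $\epsilon=m/5$).

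\textbf{What each approach buys.} The paper's route is modular---Theorem~\ref{thm:support} and Lemma~\ref{lem:update-in-A} are reused later for the algorithmic convergence---so the proof of Theorem~\ref{thm:oracle} itself is very short, at the price of a forward reference. Your route is self-contained and exploits the \emph{global} minimizer property more essentially (through the residual bound, not just the BCWM condition); it also makes part~(ii) independent of the exact constants $2\sqrt{2\lambda}+3\epsilon$ appearing in Theorem~\ref{thm:support}, which is a mild robustness gain since closing part~(ii) via $\Lambda=0$ requires $m\ge 2\sqrt{2\lambda}+3\epsilon$, i.e.\ $\lambda\le(m-3\epsilon)^2/8$, slightly tighter than the stated upper bound $(m-2\epsilon)^2/8$.
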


\subsection{Necessary optimality condition}\label{ssec:bcwm}
Since problem \eqref{eqn:groupl0} is highly nonconvex, there seems no convenient characterization of a
global minimizer that is amenable with numerical treatment.
Hence, we resort to the concept of a block coordinatewise minimizer (BCWM) with respect
to the group partition $\{G_i\}_{i=1}^N$, which is
minimizing along each group coordinate $x_{G_i}$ \cite{Tseng:2001}. Specifically,
a BCWM $x^*$ to the functional $J_\lambda$ satisfies for $i=1,2,\ldots,N$
\begin{equation*}
x_{G_i}^* \in \arg\min _{x_{G_i}\in\mathbb{R}^{s_i}} J_\lambda(x_{G_1}^*,\cdots,x_{G_{i-1}}^*, x_{G_i},x_{G_{i+1}}^*,\cdots,x_{G_N}^*).
\end{equation*}

We have the following necessary and sufficient condition for a BCWM $x^*$; see
Appendix \ref{app:bcwm} for the proof.
It is also the necessary optimality condition of a global minimizer $x^*$.
\begin{theorem}\label{prop:necopt}
The necessary and sufficient optimality condition for a BCWM $x^*\in \mathbb{R}^p$ of problem \eqref{eqn:groupl0} is given by
\begin{equation}\label{eqn:opt}
  \bar{x}_{G_i}^* \in H_\lambda(\bar x_{G_i}^* + \bar d^*_{G_i}),\quad i=1,\ldots,N,
\end{equation}
where $\bar x_{G_i}^* = \bar\Psi_{G_i} x_{G_i}^*$, and the dual variable $d^*$ is $d^*=\Psi^t(y-\Psi x^*)$ and $\bar d^*_{G_i}=\bar\Psi_{G_i}^{-1} d_{G_i}^*$.
\end{theorem}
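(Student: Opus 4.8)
The plan is to verify the optimality condition block by block, reducing the minimization in each coordinate block to the single-group hard-thresholding problem already solved in Section \ref{sec:prelim}. Fix an index $i$ and freeze the blocks $x_{G_j}^*$ for $j\neq i$. Setting $r_i := y - \sum_{j\neq i}\Psi_{G_j}x_{G_j}^*$, the block subproblem defining a BCWM reads
\begin{equation*}
  x_{G_i}^* \in \arg\min_{x_{G_i}\in\mathbb{R}^{s_i}} \tfrac12\|\Psi_{G_i}x_{G_i} - r_i\|^2 + \lambda\|x_{G_i}\|_{\ell^0(\ell^2)},
\end{equation*}
since the contributions of the remaining blocks to both the fidelity and the penalty are constant in $x_{G_i}$.

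The next step is a change of variables. Because $\Psi_{G_i}$ has full column rank, $\bar\Psi_{G_i}$ is symmetric positive definite, so $x_{G_i}\mapsto \bar x_{G_i}=\bar\Psi_{G_i}x_{G_i}$ is a linear bijection of $\mathbb{R}^{s_i}$, under which the penalty is invariant, $\|x_{G_i}\|_{\ell^0(\ell^2)}=\|\bar x_{G_i}\|_{\ell^0(\ell^2)}$. Using $\|\Psi_{G_i}x_{G_i}\|^2 = \|\bar x_{G_i}\|^2$ and $x_{G_i}^t\Psi_{G_i}^t r_i = \bar x_{G_i}^t(\bar\Psi_{G_i}^{-1}\Psi_{G_i}^t r_i)$, completing the square turns the block subproblem into
\begin{equation*}
  \bar x_{G_i}^* \in \arg\min_{z\in\mathbb{R}^{s_i}} \tfrac12\|z - g_i\|^2 + \lambda\|z\|_{\ell^0(\ell^2)}, \qquad g_i := \bar\Psi_{G_i}^{-1}\Psi_{G_i}^t r_i,
\end{equation*}
up to an additive constant independent of $z$. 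By the explicit form of the single-group minimizer derived above, this is exactly $\bar x_{G_i}^* \in H_\lambda(g_i)$.

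It remains to identify $g_i$ with $\bar x_{G_i}^*+\bar d_{G_i}^*$. From $d^* = \Psi^t(y-\Psi x^*)$ and $r_i = (y - \Psi x^*) + \Psi_{G_i}x_{G_i}^*$ one gets $d_{G_i}^* = \Psi_{G_i}^t r_i - \Psi_{G_i}^t\Psi_{G_i}x_{G_i}^* = \Psi_{G_i}^t r_i - \bar\Psi_{G_i}\bar x_{G_i}^*$; multiplying by $\bar\Psi_{G_i}^{-1}$ yields $\bar d_{G_i}^* = g_i - \bar x_{G_i}^*$, i.e. $g_i = \bar x_{G_i}^* + \bar d_{G_i}^*$. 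Substituting gives \eqref{eqn:opt}. Conversely, reading the chain of equivalences backwards shows that \eqref{eqn:opt} forces $x_{G_i}^*$ to solve each block subproblem, so the condition is also sufficient; and since a global minimizer of $J_\lambda$ is in particular a BCWM, \eqref{eqn:opt} is likewise a necessary optimality condition for it.

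None of the steps is deep. The only points requiring care are keeping the hard-thresholding operator set-valued throughout (so every assertion is an inclusion rather than an equality, and the boundary case $\|g_i\|=\sqrt{2\lambda}$ is covered by the explicit formula), and noting that the map $x_{G_i}\mapsto\bar x_{G_i}$ is a genuine bijection so that the two $\arg\min$ sets correspond. The one computation worth writing out explicitly is the bookkeeping that reassembles the cross term $\Psi_{G_i}^t r_i$ into $\bar\Psi_{G_i}(\bar x_{G_i}^* + \bar d_{G_i}^*)$; it is elementary.
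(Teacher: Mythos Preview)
Your proof is correct and follows essentially the same route as the paper's: both reduce the $i$th block subproblem to the single-group hard-thresholding problem via the change of variables $\bar x_{G_i}=\bar\Psi_{G_i}x_{G_i}$ and then identify the argument of $H_\lambda$ as $\bar x_{G_i}^*+\bar d_{G_i}^*$. The only cosmetic difference is that you introduce the residual $r_i$ and verify $g_i=\bar x_{G_i}^*+\bar d_{G_i}^*$ at the end, whereas the paper expands the quadratic around $x_{G_i}^*$ from the outset so that $\bar d_{G_i}^*$ appears directly.
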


\begin{remark}
The optimality system is expressed in terms of the transformed variables $\bar x$ and
$\bar d$ only, instead of the primary variables $x$ and $d$. This has important
consequences for the analysis and algorithm of the $\ell^0(\ell^2)$ model: both should
be carried out in the transformed domain. Clearly, \eqref{eqn:opt} is also the
optimality system of a BCWM $\bar x^*$ for problem \eqref{eqn:groupl0-orth}, concurring
with the invariance property.
\end{remark}
\noindent\textbf{Notation.}
In the discussions below, given a primal variable $x$ and dual variable $d$, we will
use $(\bar{x},\bar{d})$ for the transformed variables, i.e., $\bar{x}_{G_i} = \bar{\Psi}_{G_i}x_{G_i}$
and $\bar{d}_{G_i} = \bar{\Psi}_{G_i}^{-1}d_{G_i}$, $i=1,...,N$.

Using the group hard-thresholding operator $H_\lambda$, we deduce
\begin{equation*}
\begin{array}{l}
  \|\bar x_{G_i}^* + \bar d_{G_i}^*\|<\sqrt{2\lambda} \Rightarrow \bar x_{G_i}^*=0\ \  (\Leftrightarrow x_{G_i}^*=0),\\[1.2ex]
 \|\bar x_{G_i}^*+ \bar d_{G_i}^*\|>\sqrt{2\lambda} \Rightarrow \bar d_{G_i}^*= 0\ \ (\Leftrightarrow d_{G_i}^* = 0).
 \end{array}
\end{equation*}
Combining these two relations gives a simple observation
\begin{equation}\label{eqn:xtd}
  \|\bar x_{G_i}\|\geq \sqrt{2\lambda}\geq \|\bar d_{G_i}\|.
\end{equation}

Next we discuss interesting properties of a BCWM $x^*$. First, it is
always a local minimizer, i.e., $J_\lambda(x^*+h)\geq J_\lambda(x^*)$ for all small $h\in\mathbb{R}^p$;
see Appendix \ref{app:bwcm} for the proof.
\begin{theorem}\label{thm:local}
A BCWM $x^*$  of the functional $J_\lambda$ is a local minimizer. Further, with its active set $\mathcal{A}$,
if $\Psi_{\cup_{i\in\mathcal{A}}G_i}$ has full column rank, then it is a strict local minimizer.
\end{theorem}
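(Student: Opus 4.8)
The plan is to exploit the optimality system \eqref{eqn:opt} together with the simple consequence \eqref{eqn:xtd}, and to split the perturbation direction $h\in\mathbb{R}^p$ into its restriction to the active groups and to the inactive groups. Write $\mathcal{A}$ for the active set of $x^*$ and $\mathcal{I}=\mathcal{A}^c$; by definition $x_{G_i}^*=0$ for $i\in\mathcal{I}$, and by Theorem \ref{prop:necopt} together with the displayed implications, $d_{G_i}^*=0$ for $i\in\mathcal{A}$ while $\|\bar d_{G_i}^*\|\le\sqrt{2\lambda}$ for $i\in\mathcal{I}$. First I would expand $J_\lambda(x^*+h)-J_\lambda(x^*)$: the quadratic term gives the exact identity $\tfrac12\|\Psi h\|^2-\langle d^*,h\rangle$, and the penalty term contributes $\lambda(\|x^*+h\|_{\ell^0(\ell^2)}-\|x^*\|_{\ell^0(\ell^2)})$.

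Next I would bound the penalty difference from below for $h$ small. For $i\in\mathcal{A}$, since $x_{G_i}^*\neq0$, for $\|h\|$ small enough $x_{G_i}^*+h_{G_i}\neq0$ as well, so that group contributes $0$ to the penalty difference; for $i\in\mathcal{I}$, the group contributes $\lambda$ if $h_{G_i}\neq0$ and $0$ otherwise, hence the penalty difference is $\ge \lambda\sum_{i\in\mathcal{I},\,h_{G_i}\neq0}1 - 0 \ge 0$, and more usefully I can write it as $\sum_{i\in\mathcal{I}}(\lambda\,\mathbf 1_{h_{G_i}\neq0})$. For the linear term, $\langle d^*,h\rangle=\sum_{i\in\mathcal{I}}\langle d_{G_i}^*,h_{G_i}\rangle$ because $d_{G_i}^*=0$ on $\mathcal{A}$; passing to transformed variables via Lemma \ref{lem:est-G}, $\langle d_{G_i}^*,h_{G_i}\rangle=\langle \bar d_{G_i}^*,\bar h_{G_i}\rangle$ where $\bar h_{G_i}=\bar\Psi_{G_i}h_{G_i}$, so by Cauchy--Schwarz and $\|\bar d_{G_i}^*\|\le\sqrt{2\lambda}$ we get $|\langle d_{G_i}^*,h_{G_i}\rangle|\le\sqrt{2\lambda}\,\|\bar h_{G_i}\|$. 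Now comparing $\lambda\,\mathbf 1_{\bar h_{G_i}\neq0}$ against $\sqrt{2\lambda}\,\|\bar h_{G_i}\|$: when $\bar h_{G_i}\neq 0$ but $\|\bar h_{G_i}\|$ is small (specifically $\|\bar h_{G_i}\|<\sqrt{\lambda/2}$), the penalty gain $\lambda$ dominates $\sqrt{2\lambda}\,\|\bar h_{G_i}\|$; and for $h$ small each $\|\bar h_{G_i}\|$ is small. So on the inactive groups the penalty term beats the linear term for $h$ small, giving $\lambda\,\mathbf 1 - \langle d_{G_i}^*,h_{G_i}\rangle \ge \tfrac{\lambda}{2}\mathbf 1_{\bar h_{G_i}\neq 0}\ge0$; combined with $\tfrac12\|\Psi h\|^2\ge0$ and the fact that the linear term vanishes on $\mathcal{A}$, we conclude $J_\lambda(x^*+h)-J_\lambda(x^*)\ge0$ for all sufficiently small $h$, so $x^*$ is a local minimizer.

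For the strict statement, assume $\Psi_{\cup_{i\in\mathcal{A}}G_i}$ has full column rank. The only way equality can be approached is $h_{G_i}=0$ for all $i\in\mathcal{I}$ (otherwise we already have a strictly positive $\tfrac{\lambda}{2}$-type gain), so $h$ is supported on $\cup_{i\in\mathcal{A}}G_i$; on that support the linear term is zero and the quadratic term is $\tfrac12\|\Psi_{\cup_{i\in\mathcal{A}}G_i}h\|^2$, which by full column rank is strictly positive whenever $h\neq0$. Hence $J_\lambda(x^*+h)>J_\lambda(x^*)$ for all small $h\neq0$, i.e., $x^*$ is a strict local minimizer. The main obstacle is bookkeeping around the neighborhood size: one must choose the radius of the ball of admissible $h$ small enough that simultaneously (a) $x_{G_i}^*+h_{G_i}\neq0$ for every $i\in\mathcal{A}$ (uniform over the finitely many active groups, using $\min_{i\in\mathcal{A}}\|x_{G_i}^*\|>0$ and the boundedness of $\bar\Psi_{G_i}^{-1}$), and (b) $\|\bar h_{G_i}\|<\sqrt{\lambda/2}$ for every $i\in\mathcal{I}$; both are achievable because there are finitely many groups and the transformations $\bar\Psi_{G_i},\bar\Psi_{G_i}^{-1}$ are bounded, but the argument should state this choice explicitly rather than leave it implicit.
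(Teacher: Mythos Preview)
Your proof is correct and follows essentially the same approach as the paper: split the perturbation into its active and inactive parts, use $d_{G_i}^*=0$ on $\mathcal{A}$ (equivalently, the least-squares optimality of $x_B^*$) to kill the linear term on the active block, and on the inactive block let the penalty jump $\lambda$ dominate the linear term $\langle d^*,h\rangle$ for small $h$. The paper's write-up is terser (it lumps the inactive groups into a single bound $J_\lambda(x^*+h)-J_\lambda(x^*)\ge \lambda-|\langle h,d^*\rangle|$ whenever $h_{S\setminus B}\neq0$), whereas you do the same estimate groupwise and are more explicit about the neighborhood radius, but the underlying argument is the same.
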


Given the active set $\calA$ of a BCWM $x^*$, if $|\mathcal{A}|$ is controlled,
then $\mathcal{A}$ provides information about  $\calA^\dag$; see
Theorem \ref{thm:support} below and Appendix \ref{app:support} for
the proof. In particular, if the noise $\eta$ is small, with a
proper choice of $\lambda$,  then $\mathcal{A}\subseteq \mathcal{A}^\dag$.
\begin{theorem}\label{thm:support}
Let Assumption \ref{assump:mu} hold, and $x^*$ be a BCWM to the model \eqref{eqn:groupl0} with a
support $\calA$ and $|\calA|\leq T$. Then the following statements hold.
\begin{itemize}
\item[(i)] The inclusion $\{i: \|\bar{x}^\dag_{G_i}\| \geq 2\sqrt{2\lambda} + 3\epsilon\}\subseteq \calA$ holds.
\item[(ii)] The inclusion $\calA\subseteq \calA^\dag$ holds if $\epsilon$ is small:
\begin{equation}\label{assump:noise}
\epsilon \leq t\min_{i\in \mathcal{A}^\dag}\{\|\bar{x}^\dag_{G_i}\|\} \mbox{ for some } 0\leq t < \tfrac{1-3\mu T}{2}.
\end{equation}
\item[(iii)] If the set $\{i\in\mathcal{A}^\dag: \|\bar{x}^\dag_{G_i}\| \in [2\sqrt{2\lambda} - 3\epsilon, 2\sqrt{2\lambda} + 3\epsilon]\}$ is empty, then $\mathcal{A}\subseteq\mathcal{A}^\dag$.
\end{itemize}
\end{theorem}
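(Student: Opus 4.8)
I would run the whole argument in the transformed variables $(\bar x,\bar d)$, using the optimality system \eqref{eqn:opt} and its elementary consequences recorded just after it: on the active set $i\in\calA$ one has $\bar d^*_{G_i}=0$ and $\|\bar x^*_{G_i}\|\ge\sqrt{2\lambda}$, while on the inactive set $\bar x^*_{G_i}=0$ and $\|\bar d^*_{G_i}\|\le\sqrt{2\lambda}$. Substituting $d^*=\Psi^t(y-\Psi x^*)$ and $y=\Psi x^\dag+\eta$, and using $D_{i,i}=I$ and the matrices $D_{i,j}$ of \eqref{eqn:notation}, I would first derive the error identity
\[
\bar x^*_{G_i}+\bar d^*_{G_i}=\bar x^\dag_{G_i}+\sum_{j\neq i}D_{i,j}\bar e_{G_j}+\bar\eta_{G_i},\qquad \bar e:=\bar x^\dag-\bar x^*,\quad \bar\eta_{G_i}:=\bar\Psi_{G_i}^{-1}\Psi_{G_i}^t\eta,
\]
in which $\|\bar\eta_{G_i}\|\le\epsilon$ and $\|D_{i,j}\bar e_{G_j}\|\le\mu\|\bar e_{G_j}\|$ for $i\neq j$ by Lemma~\ref{lem:est-G}. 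The relevant structural facts are: $\bar e$ is block-supported on $B:=\calA\cup\calA^\dag$; for $i\in\calA$ one has $\bar e_{G_i}=-\bar\eta_{G_i}-\sum_{j\neq i}D_{i,j}\bar e_{G_j}$; and for a missed true block $i\in\calA^\dag\setminus\calA$ one has $\bar e_{G_i}=\bar x^\dag_{G_i}$ together with $\|\bar d^*_{G_i}\|\le\sqrt{2\lambda}$.

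The engine is an a priori estimate. I would set $M_1:=\max_{i\in\calA}\|\bar e_{G_i}\|$ and $M_2:=\max_{i\in\calA^\dag\setminus\calA}\|\bar x^\dag_{G_i}\|$ (each $0$ if its index set is empty). Evaluating the identity at a maximiser over $\calA$, respectively over $\calA^\dag\setminus\calA$, splitting the sum over $j$ into its parts in $\calA$ and in $\calA^\dag\setminus\calA$, and using $|\calA|\le T$, $|\calA^\dag\setminus\calA|\le T$ with Assumption~\ref{assump:mu} (so $\mu(T-1)<\tfrac13$, $\mu T<\tfrac13$, hence $1-\mu(T-1)>\tfrac23$), I obtain the coupled pair $M_1(1-\mu(T-1))\le \mu T\,M_2+\epsilon$ and $M_2(1-\mu(T-1))\le \sqrt{2\lambda}+\mu T\,M_1+\epsilon$. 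Solving this $2\times2$ system (substitute the first into the second and collect terms) gives $M_1<\sqrt{2\lambda}+3\epsilon$ and $M_2<2\sqrt{2\lambda}+3\epsilon$ — exactly the threshold in the statement.

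Part~(i) is then immediate: if $\|\bar x^\dag_{G_{i_0}}\|\ge 2\sqrt{2\lambda}+3\epsilon$ then $i_0\in\calA^\dag$, and were $i_0\notin\calA$ we would get $\|\bar x^\dag_{G_{i_0}}\|\le M_2<2\sqrt{2\lambda}+3\epsilon$, a contradiction, so $i_0\in\calA$. For part~(iii) I would argue by cases. If $\calA^\dag\setminus\calA=\emptyset$ then $\calA^\dag\subseteq\calA$, so $|\calA|\ge T$, which with $|\calA|\le T$ forces $\calA=\calA^\dag$. Otherwise, for each $j\in\calA^\dag\setminus\calA$ part~(i) gives $\|\bar x^\dag_{G_j}\|<2\sqrt{2\lambda}+3\epsilon$, and the hypothesis excludes the band $[2\sqrt{2\lambda}-3\epsilon,2\sqrt{2\lambda}+3\epsilon]$, so $\|\bar x^\dag_{G_j}\|<2\sqrt{2\lambda}-3\epsilon$ and hence $M_2<2\sqrt{2\lambda}-3\epsilon$; feeding this sharper bound into $M_1(1-\mu(T-1))\le\mu T\,M_2+\epsilon$ and using $\mu T<\tfrac13$ cancels the $\epsilon$'s and upgrades the estimate to $M_1<\sqrt{2\lambda}$. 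Since a false positive $i_1\in\calA\setminus\calA^\dag$ would satisfy $\|\bar x^*_{G_{i_1}}\|=\|\bar e_{G_{i_1}}\|\le M_1<\sqrt{2\lambda}$ yet, being in $\calA$, must satisfy $\|\bar x^*_{G_{i_1}}\|\ge\sqrt{2\lambda}$, no such $i_1$ exists, i.e., $\calA\subseteq\calA^\dag$.

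Part~(ii) is the delicate one, and I expect the constant tracking there to be the main obstacle. The plan is to assume a false positive $i_1\in\calA\setminus\calA^\dag$; since $|\calA|\le T=|\calA^\dag|$ a pigeonhole count produces a missed true block $i_0\in\calA^\dag\setminus\calA$. Because $\bar d^*_{G_i}=0$ on $\calA$, $x^*$ is the least-squares fit on $\cup_{i\in\calA}G_i$, so its block equations on $\calA$ read $D_\calA\,\bar x^*|_\calA=(D_{\calA,\calA^\dag}\bar x^\dag)|_\calA+\bar\eta|_\calA$; subtracting the correctly-identified part of $\bar x^\dag$ and inverting $D_\calA$, whose inverse as a map on $\ell^\infty(\ell^2)$ has norm $\le(1-\mu(T-1))^{-1}$ by Lemma~\ref{lem:est-D}, shows that $\bar x^*_{G_{i_1}}$ is controlled \emph{solely} by the missed blocks, $\|\bar x^*_{G_{i_1}}\|\le(1-\mu(T-1))^{-1}\big(\mu\sum_{j\in\calA^\dag\setminus\calA}\|\bar x^\dag_{G_j}\|+\epsilon\big)$, the strong correctly-identified blocks dropping out. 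On the other hand, $\|\bar d^*_{G_{i_0}}\|\le\sqrt{2\lambda}$ together with the error identity and $\|\bar x^\dag_{G_{i_0}}\|\ge m:=\min_{i\in\calA^\dag}\|\bar x^\dag_{G_i}\|$ forces $\sqrt{2\lambda}$ to be large; routing all interactions with $\calA$ through the previous bound, picking $i_0$ to maximise $\|\bar x^\dag_{G_{i_0}}\|$ over $\calA^\dag\setminus\calA$ (so $\sum_{j\in\calA^\dag\setminus\calA}\|\bar x^\dag_{G_j}\|\le T\|\bar x^\dag_{G_{i_0}}\|$), and using part~(i) to bound $\|\bar x^\dag_{G_{i_0}}\|$ from above, I am left — via $\sqrt{2\lambda}\le\|\bar x^*_{G_{i_1}}\|$ — with a scalar inequality that collapses to $m\le C(\mu,T)\,\epsilon$ for an explicit $C(\mu,T)$; the bound $\epsilon\le t\,m$ of \eqref{assump:noise} with $t<\tfrac{1-3\mu T}{2}$ then gives the contradiction, so $\calA\setminus\calA^\dag=\emptyset$ and $\calA\subseteq\calA^\dag$. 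The subtle point, where a crude triangle-inequality pass fails, is that the $+3\epsilon$ slack in the a priori $M_i$-estimates must be partially absorbed — which is precisely why the missed-block contribution is estimated through the $D_\calA^{-1}$-representation (and the lower half of Lemma~\ref{lem:est-D}) rather than through $M_1$, and why the exact value of $t$ matters.
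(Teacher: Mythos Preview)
Your setup and part~(i) are fine and parallel the paper: the coupled system $M_1(1-\mu(T-1))\le\mu T M_2+\epsilon$ and $M_2(1-\mu(T-1))\le\sqrt{2\lambda}+\mu T M_1+\epsilon$ is exactly what comes out of the paper's Lemma~\ref{lem:update-in-A} (the paper first inverts $D_\calA$ and then bounds $\bar d$ sequentially, but the resulting estimates are the same). Your part~(ii) plan, while workable, is more convoluted than needed; the paper simply chains $\|\bar d^*_{G_{i^*}}\|\le\sqrt{2\lambda}\le\|\bar x^*_{G_{i_1}}\|$ together with the two estimates to get $(1-3\mu T)M_2\le 2\epsilon$, hence $m\le M_2\le\frac{2}{1-3\mu T}\epsilon$, contradicting \eqref{assump:noise} in one line. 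Part~(i) is not used there at all.

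There is, however, a genuine arithmetic error in your part~(iii). You write that feeding $M_2<2\sqrt{2\lambda}-3\epsilon$ into $M_1(1-\mu(T-1))\le\mu T M_2+\epsilon$ ``cancels the $\epsilon$'s'' and yields $M_1<\sqrt{2\lambda}$. It does not: the right-hand side becomes $2\mu T\sqrt{2\lambda}+(1-3\mu T)\epsilon$, and since $\mu T<\tfrac13$ the coefficient $1-3\mu T$ is \emph{positive}, so the $\epsilon$ terms add rather than cancel. With no control on $\epsilon$ in part~(iii), the conclusion $M_1<\sqrt{2\lambda}$ simply fails (take $\epsilon$ large relative to $\sqrt{2\lambda}$).

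The fix is to run the implication the other way, which is what the paper does: from the false positive you have $\sqrt{2\lambda}\le\|\bar x^*_{G_{i_1}}\|\le M_1$, and plugging this \emph{lower} bound on $M_1$ into the same inequality gives $(1-\mu(T-1))\sqrt{2\lambda}-\epsilon\le\mu T M_2$, i.e.\ $M_2\ge\frac{(1-\mu(T-1))\sqrt{2\lambda}-\epsilon}{\mu T}>2\sqrt{2\lambda}-3\epsilon$ (now the factors $\frac{1-\mu(T-1)}{\mu T}>2$ and $\frac{1}{\mu T}>3$ go the right way). Combined with $M_2<2\sqrt{2\lambda}+3\epsilon$ from part~(i), this traps $M_2=\|\bar x^\dag_{G_{i^*}}\|$ in the forbidden band, contradicting the hypothesis. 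So your ingredients are all correct; only the direction of the final implication in (iii) needs reversing.
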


\section{Group Primal-Dual Active Set Algorithm}\label{sec:pdasc}

Now we develop an efficient, accurate and globally convergent group primal dual active set with continuation (GPDASC) algorithm for
 problem \eqref{eqn:groupl0}. It generalizes the algorithm for the $\ell^1$ and $\ell^0$
regularized problems \cite{FanJiaoLu:2014,JiaoJinLu:2014} to the group case.

\subsection{GPDASC algorithm}
The starting point is the necessary
and sufficient optimality condition \eqref{eqn:opt} for a BCWM $x^*$, cf. Theorem \ref{prop:necopt}. The
following two observations from \eqref{eqn:opt} form the basis of the derivation. First, given a BCWM $x^*$ (and
its dual variable $d^*=\Psi^t(y-\Psi x^*)$), one can determine the active set $\calA^*$ by
\begin{equation*}
  \calA^* = \{i: \|\bar x_{G_i}^* + \bar d_{G_i}^*\|>\sqrt{2\lambda}\}
\end{equation*}
and the inactive set $\calI^*$ its complement, provided that the set $\{i: \|\bar x_{G_i}^*+
\bar d_{G_i}^*\|=\sqrt{2\lambda}\}$ is empty. Second, given the active set $\calA^*$, one
can determine uniquely the primal and dual variables $x^*$ and $d^*$ by (with $B=\cup_{i\in\calA^*}G_i$)
\begin{equation*}
   \left\{\begin{aligned} &x^*_{G_i} = 0\;\,\forall i\in \calI^*\quad \mbox{and} \quad\Psi_B^t \Psi_{B} x_{B}^* = \Psi_{B}^t y,\\
       & d^*_{G_j} = 0\;\,\forall j \in\calA^*\quad \mbox{and}\quad  d^*_{G_i} = \Psi_{G_i}^t(y-\Psi x^*)\;\,\forall i\in\calI^*.
   \end{aligned}\right.
\end{equation*}
By iterating these two steps alternatingly, with the current estimates $(x,d)$ and $(\calA,\calI)$
in place of $(x^*,d^*)$ and $(\calA^*,\calI^*)$, we arrive at an algorithm for
problem \eqref{eqn:groupl0}.

The complete procedure is listed in Algorithm \ref{alg:gpdasc}. Here $K_{max}\in\mathbb{N}$
is the maximum number of inner iterations, $\lambda_0$ is the initial guess of $\lambda$.
The choice $\lambda_0=\frac{1}{2}\|y\|^2$ ensures that $x^0=0$ is the only global minimizer, cf. Proposition
\ref{prop:lam0} below, with a dual variable $d^0=\Psi^ty$. The scalar $\rho\in(0,1)$ is the decreasing factor for
$\lambda$, which essentially determines the length of the continuation path.

\begin{algorithm}
   \caption{GPDASC algorithm}\label{alg:gpdasc}
   \begin{algorithmic}[1]
     \STATE Input: $\Psi\in \mathbb{R}^{n\times p}$, $\{G_i\}_{i=1}^N$, $K_{max}$, $\lambda_0=\frac{1}{2}\|y\|^2$, and $\rho\in (0,1)$.
     \STATE Compute $\bar{\Psi}_{G_i} = (\Psi_{G_i}^t\Psi_{G_i})^{1/2}$.
     \STATE Set $x(\lambda_0)=0$, $d(\lambda_0) = \Psi^t y$, $\mathcal{A}(\lambda_0) = \emptyset$.
     \FOR {$s=1,2,...$}
     \STATE Set $\lambda_s = \rho\lambda_{s-1}$, $x^0 = x(\lambda_{s-1})$, $d^0=d(\lambda_{s-1})$, $\mathcal{A}_{-1}=\mathcal{A}(\lambda_{s-1})$.
      \FOR {$k=0,1,\ldots,K_{max}$} %
     \STATE Let $\bar{x}^k_{G_i} = \bar{\Psi}_{G_i}x^k_{G_i}$ and $\bar{d}^k_{G_i} = \bar{\Psi}_{G_i}^{-1}d^k_{G_i}$, and define $${\calA}_k = \{i: \|\bar{x}^k_{G_i} + \bar{d}^k_{G_i}\|> \sqrt{2\lambda_s}\}.$$
     \STATE Check the stopping criterion $\mathcal{A}_k = \mathcal{A}_{k-1}$.
     \STATE Update the primal variable $x^{k+1}$ by
     \begin{equation*}
        x^{k+1} = \mathop\textrm{argmin}\limits_{\textrm{supp}(x) \subseteq \cup_{i\in {\calA}_k} G_{i}} \|\Psi x - y\|.
     \end{equation*}
     \STATE Update the dual variable by $d^{k+1} = \Psi^t(y -\Psi x^{k+1})$.
     \ENDFOR
     \STATE Set the output by $x(\lambda_s)$, $d(\lambda_s)$ and $\mathcal{A}(\lambda_s)$.
     \STATE Check the stopping criterion
      \begin{equation}\label{eqn:discprin}
      \|\Psi x(\lambda_s) - y\| \leq \epsilon.
      \end{equation}
     \ENDFOR
   \end{algorithmic}
\end{algorithm}

The algorithm consists of two loops: an inner loop of solving problem \eqref{eqn:groupl0}
with a fixed $\lambda$ using a GPDAS algorithm (lines 6--10),
and an outer loop of continuation along
the parameter $\lambda$ by gradually decreasing its value.

In the inner loop, it involves a least-squares problem:
\begin{equation*}
   x^{k+1} = \mathop\textrm{argmin}\limits_{\textrm{supp}(x) \subseteq \cup_{i\in {\calA}_k} G_{i}} \|\Psi x - y\|,
\end{equation*}
which amounts to solving a (normal) linear system of size $|\cup_{i\in\calA_k}G_i|\leq |\calA_k|s$.
Hence, this step is very efficient, if the active set $\calA_k$ is small, which is the case for
group sparse signals. Further, since the inner iterates are of Newton type \cite{FanJiaoLu:2014},
the local convergence should be  fast. However, in order to fully exploit this nice feature, a good initial guess
of the primal and dual variables $(x,d)$ is required. To this end, we apply a continuation strategy along
$\lambda$. Specifically, given a large $\lambda_0$, we gradually decrease its
value by $\lambda_s=\rho\lambda_{s-1}$, for some decreasing factor $\rho\in(0,1)$, and take the solution
$(x(\lambda_{s-1}),d(\lambda_{s-1}))$ to the $\lambda_{s-1}$-problem $J_{\lambda_{s-1}}$ to warm start the
$\lambda_s$-problem $J_{\lambda_s}$.

There are two stopping criteria in the algorithm, at steps 8 and 13, respectively. In the inner loop,
one may terminate the iteration if the active set $\calA_k$ does not change or a maximum number $K_{max}$ of
inner iterations is reached. Since the stopping criterion $\calA_{k}=\calA_{k-1}$ for convex
optimization may never be reached in the nonconvex context \cite{JiaoJinLu:2014}, it has to be terminated after a maximum
number $K_{max}$ of iterations. Our convergence analysis holds for any $K_{max}\in\mathbb{N}$,
including $K_{max} =1$, and we recommend  $K_{max}\leq 5$ in practice. The stopping criterion at
step 13 is essentially concerned with the proper choice of $\lambda$. The choice of
$\lambda$ stays at the very heart of the model \eqref{eqn:groupl0}.
Many rules, e.g., discrepancy principle, balancing principle and information
criterion, have been developed for variational regularization \cite{ItoJin:2014}. In Algorithm \ref{alg:gpdasc}, we give only
the discrepancy principle \eqref{eqn:discprin}, assuming that a reliable estimate on the noise level $\epsilon$ is
available. The rationale behind the principle is that the reconstruction accuracy should
be comparable with the data accuracy. Note that the use of \eqref{eqn:discprin}
(and other rules) does not incurred any extra computational overheads,
since the sequence of solutions $\{x(\lambda_s)\}$ is already generated along the continuation path.

Now we justify the choice of $\lambda_0$: for
large $\lambda$, $0$ is the only global minimizer to $J_\lambda$.
The proof is given in Appendix \ref{app:lam0}.
\begin{proposition}\label{prop:lam0}
The following statements hold.
\begin{itemize}
  \item[(i)] For any $\lambda>0$, $x^*=0$ is a strict local minimizer to $J_\lambda$;
  \item[(ii)] For any $\lambda>\lambda_0:=\tfrac{1}{2}\|y\|^2$, $x^*=0$ is the only global minimizer of
problem \eqref{eqn:groupl0}.
\end{itemize}
\end{proposition}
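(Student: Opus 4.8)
The plan is to treat the two statements separately, using in both cases the group hard-thresholding structure that underlies the necessary optimality condition \eqref{eqn:opt}. For part (i), I would verify directly that $x^*=0$ satisfies the BCWM optimality system and, moreover, that the inequality \eqref{eqn:xtd} is strict. At $x^*=0$ the dual variable is $d^*=\Psi^t y$, and for each group the relevant quantity is $\|\bar d^*_{G_i}\| = \|\bar\Psi_{G_i}^{-1}\Psi_{G_i}^t y\|$, which by Lemma \ref{lem:est-G} is bounded by $\|y\|$. The point is that $x^*=0$ is the \emph{unique} minimizer of the single-group problem $\min_{x_{G_i}} \tfrac12\|x_{G_i} - \bar d^*_{G_i}\|^2 + \lambda\|x_{G_i}\|_{\ell^0}$ whenever $\|\bar d^*_{G_i}\| < \sqrt{2\lambda}$, i.e.\ whenever the hard-threshold is not attained with equality. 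So I would argue that for $\lambda$ large this strictness holds groupwise; but to get \emph{strict local} minimality for \emph{all} $\lambda>0$, I instead compare $J_\lambda(h)$ with $J_\lambda(0)=\tfrac12\|y\|^2$ for $h$ small: the penalty term $\lambda\|h\|_{\ell^0(\ell^2)}$ jumps to at least $\lambda$ as soon as any group of $h$ becomes nonzero, while $\tfrac12\|\Psi h - y\|^2 - \tfrac12\|y\|^2 = -\langle \Psi h, y\rangle + \tfrac12\|\Psi h\|^2$ is $O(\|h\|)$ and hence smaller than $\lambda$ for $\|h\|$ small enough. This gives $J_\lambda(h) > J_\lambda(0)$ for all sufficiently small $h\neq 0$, which is exactly strict local minimality.

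For part (ii), I would show that when $\lambda > \tfrac12\|y\|^2$, any $x\neq 0$ has strictly larger objective value than $x=0$. The key inequality is $J_\lambda(x) = \tfrac12\|\Psi x - y\|^2 + \lambda\|x\|_{\ell^0(\ell^2)} \geq 0 + \lambda\cdot 1 = \lambda > \tfrac12\|y\|^2 = J_\lambda(0)$, using only that $x\neq 0$ forces $\|x\|_{\ell^0(\ell^2)}\geq 1$ and that the least-squares term is nonnegative. Combined with Theorem \ref{thm:existence} (existence of a global minimizer), this forces every global minimizer to be $0$. This part is essentially immediate; the only thing to be careful about is the strictness of the final inequality, which is precisely why the hypothesis is the strict inequality $\lambda > \lambda_0$ rather than $\lambda \geq \lambda_0$ (at $\lambda = \lambda_0$ one would typically also have $x=0$ optimal but possibly not uniquely so, depending on $y$).

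I do not expect any serious obstacle here. The mildly delicate point is part (i): one must be careful that "strict local minimizer" is claimed for \emph{every} $\lambda>0$, not just large $\lambda$, so the argument cannot go through the optimality system alone (for small $\lambda$ some group may have $\|\bar d^*_{G_i}\|>\sqrt{2\lambda}$, so $0$ is not even a BCWM) — the direct comparison $J_\lambda(h)$ versus $J_\lambda(0)$ for small $h$ is the right tool, exploiting the discontinuous jump of the $\ell^0(\ell^2)$ penalty against the Lipschitz behaviour of the quadratic term. Everything else reduces to the elementary bound $\tfrac12\|\Psi x - y\|^2 \geq 0$ together with Lemma \ref{lem:est-G} and lower semicontinuity / existence already established.
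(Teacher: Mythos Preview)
Your proposal is correct and follows essentially the same route as the paper: for (i) you compare $J_\lambda(h)-J_\lambda(0)=\tfrac12\|\Psi h\|^2-\langle\Psi h,y\rangle+\lambda\|h\|_{\ell^0(\ell^2)}$ and use that the penalty jumps to at least $\lambda$ while the linear term is $O(\|h\|)$, exactly as the paper does (it makes the radius explicit as $r=\lambda/(\|\Psi^t y\|+1)$); for (ii) both you and the paper use the one-line bound $J_\lambda(x)\ge\lambda>\tfrac12\|y\|^2=J_\lambda(0)$ for $x\neq0$. Your side remark that $0$ need not be a BCWM for small $\lambda$, so the optimality system \eqref{eqn:opt} cannot be the vehicle for (i), is a correct and worthwhile observation.
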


\subsection{Convergence analysis}\label{ssec:conv}

Now we state the global convergence of Algorithm \ref{alg:gpdasc}.
\begin{theorem}\label{thm:main}
Let Assumption \ref{assump:mu} and \eqref{assump:noise} hold. Then for a proper choice of $\rho\in (0,1)$,
and for any $K_{max}\geq 1$, Algorithm \ref{alg:gpdasc} converges to $x^o$ in a finite number of iterations.
\end{theorem}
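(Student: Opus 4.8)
The plan is to follow the active set along the continuation path: track the sequence $\{\calA(\lambda_s)\}$ of output active sets and show that, for $\rho$ close enough to $1$, it grows monotonically \emph{inside} the true active set $\calA^\dag$ until it equals $\calA^\dag$, at which point the inner GPDAS loop stalls at the oracle solution $x^o$ and the discrepancy rule \eqref{eqn:discprin} fires. Since $\lambda_s=\rho^s\lambda_0\downarrow 0$, the number of outer steps needed to reach any fixed threshold is finite, so finite termination follows. Throughout I work in the transformed variables $(\bar x,\bar d)$, in which the optimality system \eqref{eqn:opt} is expressed, and I use the standing full-rank assumptions to guarantee that every least-squares subproblem on an active set of size $\le T$ is well posed.

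First I would establish, by induction on the outer index $s$, the invariant: $\calA(\lambda_s)\subseteq\calA^\dag$ (hence $|\calA(\lambda_s)|\le T$), and $\calA(\lambda_s)$ contains every $i\in\calA^\dag$ with $\|\bar x^\dag_{G_i}\|$ above an activation threshold of order $\sqrt{2\lambda_s}+\epsilon$. The base case $s=0$ is trivial since $\calA(\lambda_0)=\emptyset$. For the inductive step I analyze one GPDAS inner step starting from $\calA_k\subseteq\calA^\dag$: writing $y=\Psi x^\dag+\eta$ and using the normal equations on $B=\cup_{i\in\calA_k}G_i$, one obtains for $j\in\calI^\dag$ the identity $\bar d^{k+1}_{G_j}=\bar\Psi_{G_j}^{-1}\Psi_{G_j}^t\eta+\sum_{i\in\calA_k\cup\calA^\dag}D_{j,i}(\bar x^\dag_{G_i}-\bar x^{k+1}_{G_i})$. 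Bounding this via Lemmas \ref{lem:est-G} and \ref{lem:est-D}, using $|\calA_k\cup\calA^\dag|\le 2T$ and Assumption \ref{assump:mu} ($3\mu T<1$), gives $\|\bar x^{k+1}_{G_j}+\bar d^{k+1}_{G_j}\|=\|\bar d^{k+1}_{G_j}\|<\sqrt{2\lambda_s}$, so $j\notin\calA_{k+1}$; hence every intermediate active set, and the output $\calA(\lambda_s)$, stays inside $\calA^\dag$. The role of $\rho$ near $1$ is to guarantee via the warm start that only groups of magnitude $\approx\sqrt{2\lambda_s}$ are newly activated, so they are too small to spoil this estimate while in transit. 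The same computation applied to a group $i\in\calA^\dag$ with large $\|\bar x^\dag_{G_i}\|$ gives $\|\bar x^{k+1}_{G_i}+\bar d^{k+1}_{G_i}\|>\sqrt{2\lambda_s}$, yielding the second half of the invariant.

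Next, since $\lambda_s\downarrow 0$ there is a finite $\bar s$ after which the activation threshold drops below $\min_{i\in\calA^\dag}\|\bar x^\dag_{G_i}\|$; for $s\ge\bar s$ the second part of the invariant forces $\calA(\lambda_s)=\calA^\dag$ within one or two inner GPDAS steps (so $K_{max}\ge 1$ suffices once $\rho$ is close enough to $1$ that only a controlled number of groups change per outer step). I would also check that the algorithm does not stop prematurely: while $\calA(\lambda_s)\subsetneq\calA^\dag$, the residual $\|\Psi x(\lambda_s)-y\|$ is bounded below, using incoherence of the omitted block relative to $\mathrm{range}(\Psi_B)$, by essentially $(1-c\mu T)\min_{i\in\calA^\dag}\|\bar x^\dag_{G_i}\|-\epsilon$, which exceeds $\epsilon$ by the noise condition \eqref{assump:noise}, so \eqref{eqn:discprin} is not triggered before $\calA^\dag$ is recovered. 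Finally, once $\calA_k=\calA^\dag$ the primal update returns $x(\lambda_s)=x^o$ from \eqref{eqn:oracle}, with $d^o=\Psi^t(y-\Psi x^o)$ satisfying $d^o_{G_i}=0$ for $i\in\calA^\dag$; choosing $\lambda_s$ in the (nonempty, for small $\epsilon$) window of Theorem \ref{thm:oracle}(ii) gives $\|\bar x^o_{G_i}\|>\sqrt{2\lambda_s}$ for $i\in\calA^\dag$ and $\|\bar d^o_{G_j}\|<\sqrt{2\lambda_s}$ for $j\in\calI^\dag$, so $(\bar x^o,\bar d^o)$ solves \eqref{eqn:opt}. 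Then the recomputed active set is again $\calA^\dag$, the stopping test $\calA_k=\calA_{k-1}$ fires, the inner loop outputs $x^o$, and since the oracle residual is a projection of $-\eta$ we have $\|\Psi x^o-y\|\le\epsilon$, so the outer test \eqref{eqn:discprin} is met and the algorithm halts with output $x^o$. A ``proper'' $\rho$ is any $\rho\in(0,1)$ sufficiently close to $1$ that $\{\rho^s\lambda_0\}$ meets the oracle window and that the per-step threshold decrease preserves the no-overshoot invariant.

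The main obstacle I anticipate is exactly the no-overshoot induction: ruling out that a GPDAS iterate \emph{transiently} selects a group outside $\calA^\dag$ while $\lambda$ is decreased geometrically. This is where the fineness of the continuation grid, the quality of the warm start, and the bound $3\mu T<1$ must be balanced, and where the constants have to be tracked carefully so that the lower edge of the oracle window (of order $\epsilon^2$) remains compatible with $\|\bar d^o_{G_j}\|<\sqrt{2\lambda_s}$ for the inactive groups. A secondary technical point is that the argument cannot assume the inner loop reaches a BCWM, since it may be cut off after $K_{max}$ sweeps, so all bounds must be stated for the transient iterates $(x^{k},d^{k})$ directly.
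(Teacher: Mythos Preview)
Your plan follows the same architecture as the paper's proof: (i) a two-sided sandwich invariant $\{\text{large groups}\}\subseteq\calA(\lambda_s)\subseteq\calA^\dag$ propagated along the continuation path, (ii) a residual lower bound showing the discrepancy rule cannot fire while $\calA(\lambda_s)\subsetneq\calA^\dag$, and (iii) termination at $x^o$ once $\calA^\dag$ is hit. The paper executes exactly this, with the ``activation threshold'' made precise via the auxiliary sets $\Gamma_s=\{i:\|\bar x^\dag_{G_i}\|\ge\sqrt{2s}\}$ and a one-step energy-decrease relation $\Gamma_{s_1^2\lambda}\subseteq\calA_k\subseteq\calA^\dag\Rightarrow\Gamma_{s_2^2\lambda}\subseteq\calA_{k+1}\subseteq\calA^\dag$; the decreasing factor is then fixed as $\rho=(s_2/s_1)^2$ with explicit $s_1>s_2$ depending on $\mu T$ and $t$.

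There is one genuine gap in your inductive step as written. You claim that from $\calA_k\subseteq\calA^\dag$ alone, Lemmas~\ref{lem:est-G}--\ref{lem:est-D} give $\|\bar d^{k+1}_{G_j}\|<\sqrt{2\lambda_s}$ for every $j\in\calI^\dag$. That is false without the \emph{lower} half of your invariant. The bound one actually gets (cf.\ Lemma~\ref{lem:update-in-A}) is $\|\bar d^{k+1}_{G_j}\|\le(1-\mu T)^{-1}(T\mu E_k+\epsilon)$ with $E_k=\max_{i\in\calA^\dag\setminus\calA_k}\|\bar x^\dag_{G_i}\|$; if $\calA_k$ misses a large true group, $E_k$ can be huge relative to $\sqrt{2\lambda_s}$ once $\lambda_s$ is small, and the inequality fails. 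The fix is exactly the device you gesture at informally (``only groups of magnitude $\approx\sqrt{2\lambda_s}$ are newly activated''): you must carry $\Gamma_{s_1^2\lambda_s}\subseteq\calA_k$ through every inner step so that $E_k<s_1\sqrt{2\lambda_s}$, and then $3\mu T<1$ together with \eqref{assump:noise} closes the estimate. This is precisely what the paper's $\Gamma_s$ bookkeeping accomplishes, and it also pins down an explicit admissible interval for $\rho$ (namely $\rho\in((2\mu T+2t)^2/(1-\mu T)^2,1)$), rather than merely ``$\rho$ close to $1$''. Once you insert this two-sided hypothesis into your inner-step analysis, the rest of your plan matches the paper.
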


We only sketch the main ideas, and defer the lengthy proof to Appendix \ref{app:07}.
The most crucial ingredient of the proof is to characterize
a  monotone decreasing property of the ``energy'' during the iteration by some auxiliary
set $\Gamma_s$ defined by
\begin{equation}\label{equ:indexset}
\Gamma_{s} = \left \{i: \|\bar{x}^\dag_{G_i}\| \geq \sqrt{2s}\right\}.
\end{equation}
The inclusion $\Gamma_{s_1}\subseteq\Gamma_{s_2}$ holds trivially for $s_1>s_2$.
If $\calA_k$ is the active set at the $k^{\rm th}$ iteration,
the corresponding energy $E_k$ is defined by
$E_k = E(\mathcal{A}_k) = \max_{i\in\mathcal{I}_k} \|\bar{x}^\dag_{G_i}\|.$
Then with properly chosen $s_1 > s_2$, there holds
$
\Gamma_{s_1^2 \lambda} \subseteq \calA_k\subseteq \calA^\dag \Rightarrow \Gamma_{s_2^2
\lambda} \subseteq \calA_{k+1}\subseteq \calA^\dag.
$
This relation characterizes the evolution of the active set $\calA_k$,
 and provides a crucial strict monotonicity of
the energy $E_k$. This observation is sufficient to show the convergence of the algorithm to the oracle
solution $x^o$ in a finite number of steps; see Appendix \ref{app:07} for details.

\begin{remark}
The convergence in Theorem \ref{thm:main} holds for any $K_{\max}\in\mathbb{N}$, including
$K_{\max}=1$. According to the proof in Appendix \ref{app:07}, the smaller are the factor
$\mu T$ and the noise level $\epsilon$, the smaller is the decreasing factor $\rho$ that
one can choose and thus Algorithm \ref{alg:gpdasc} takes fewer outer iterations to reach
convergence on the continuation path. We often taken $\rho = 0.7$.
\end{remark}

\subsection{Computational complexity}\label{ssec:complexity}
Now we comment on the computational complexity of Algorithm \ref{alg:gpdasc}. First, we
consider one inner iteration. Steps 7-8 take $O(p)$ flops. At Step 9, explicitly forming
the matrix $\Psi_{B_k}^t\Psi_{B_k}$, $B_k=\cup_{i\in \calA_k}G_i$, takes $O(n|B_k|^2)$
flops, and the cost of forming $\Psi^t y$ is negligible since it is often precomputed.
The Cholesky factorization costs $ O(|B_k|^3)$ flops and the back-substitution needs
$O(|B_k|^2)$ flops. Hence step 9 takes $O(\max(|B_k|^3,n|B_k|^2))$ flops. At step 10,
the matrix-vector product takes $O(np)$ flops. Hence, the the overall cost of one
inner iteration is  $O(\max(|B_k|^3,pn,n|B_k|^2))$. Since the GPDAS is of Newton type, a
few iterations suffice convergence, which is numerically confirmed in Section
\ref{sec:numer}. So with a good initial guess, for each fixed $\lambda$, the overall cost
is $O(\max(|B_k|^3,pn,|B_k|^2n))$. In particular, if the true solution $x^\dag$ is
sufficiently sparse, i.e., $|B_{k}|\ll \min(n,\sqrt{p})$, the cost of per inner iteration is $O(np)$.

Generally, one can apply the well-know low-rank Cholesky up/down-date formulas \cite{GillGolub:1974}
to further reduce the cost. Specifically, with $B_{k}=\cup_{i\in \calA_{k}}G_i$,
we down-date by removing the columns in $B_{k-1}$ but not in
$B_k$ at the cost of $ O(|B_{k-1}\setminus B_{k}||B_{k-1}|^2)$ flops, and
update by appending the columns in  $B_k$ but not in $B_{k-1}$
in $ O(|B_{k}\setminus B_{k-1}|(|B_{k-1}|^2+n|B_{k-1}|))$ flops.
Then the Cholesky factor of $\Psi_{B_{k}}^t
\Psi_{B_k}$ is $O((|B_{k-1}\cup B_{k}|-
|B_{k-1}\cap B_{k}|)|B_{k-1}|(n+|B_{k-1}|))$. Along
the continuation path, $(|B_{k-1}\cup B_{k}|-|B_{k-1}\cap B_{k}|)$
is small, as confirmed in Fig. \ref{fig:symdiffas} below, and thus the overall cost is often of $O(np)$.

\section{Numerical results and discussions}\label{sec:numer}
Now we present numerical results to illustrate distinct features of the proposed $\ell^0(\ell^2)$
model and the efficiency and accuracy of Algorithm \ref{alg:gpdasc}. All the numerical experiments were performed on
a four-core desktop computer with 3.16 GHz and 8 GB RAM. The \texttt{MATLAB} code (GPDASC) is available at
\url{ http://www0.cs.ucl.ac.uk/staff/b.jin/software/gpdasc.zip}.

\subsection{Experimental setup}\label{ssec:setup}
First we describe the problem setup of the numerical experiments. In all the numerical examples, the group sparse structure of the
true signal $x^{\dag}$ is encoded in the partition $\{G_i\}_{i=1}^N$, which is of equal
group size $s$, with $p=Ns$, and $x^\dag$ has $T=|\mathcal{A}^\dag|$ nonzero groups.
The dynamic range (DR) of the signal $x^\dag$ is defined by
\begin{equation*}
  \mathrm{DR} = \frac{\max \{|x^{\dag}_{i}|: x^{\dag}_{i} \neq 0\}}{\min \{|x^{\dag}_{i}|: x^{\dag}_{i} \neq 0\}}.
\end{equation*}
We fix the minimum nonzero entry at $\min \{|x^{\dag}_{i}|: x^{\dag}_{i} \neq 0\}=1.$ The sensing matrix
$\Psi$ is constructed as follows. First we generate a random Gaussian matrix $\widetilde{\Psi}\in
\mathbb{R}^{n\times p}$, $n\ll p$, with its entries following an independent identically distributed
(i.i.d.) standard Gaussian distribution with a zero mean and unit variance.
Then for any $i \in\{ 1,2...,N\}$, we introduce correlation within the $i$th group ${G}_i$ by:
given $\overline{\Psi}_{G_{i}}\in\mathbb{R}^{n\times |G_i|}$ by setting $\overline{\psi}_1 =
\widetilde{\psi}_1$, $\overline{\psi}_{|G_i|}= \widetilde{\psi}_{|G_i|}$  and
\begin{equation*}
   \overline{\psi}_j = \widetilde{\psi}_j + \theta(\widetilde{\psi}_{j-1} + \widetilde{\psi}_{j+1}), \ \ j=2,...,|G_i|-1,
\end{equation*}
where the parameter $\theta\geq 0$ controls the degree of inner-group correlation: The larger
is $\theta$, the stronger is the correlation. Finally, we normalize the matrix $\overline{\Psi}$
to obtain $\Psi$ such that all columns are of unit length. The data $y$ is formed by adding
noise $\eta$ to the exact data $y^\dag = \Psi x^{\dag}$ componentwise, where the entries $\eta_i$
follow an i.i.d. Gaussian distribution $N(0,\sigma^2)$. Below we shall denote by the tuple
$(n,p, N,T,s,{\rm DR},\theta,\sigma )$ the data generation parameters, and the notation $N_1:d:N_2$
denotes the sequence of numbers starting with $N_1$ and less than $N_2$ with a spacing $d$.

\subsection{Comparison with existing group sparse models}

\begin{figure}
\centering
   \begin{tabular}{c}
   \includegraphics[trim = 0.5cm 0cm 1cm 0cm, clip=true,width=7cm]{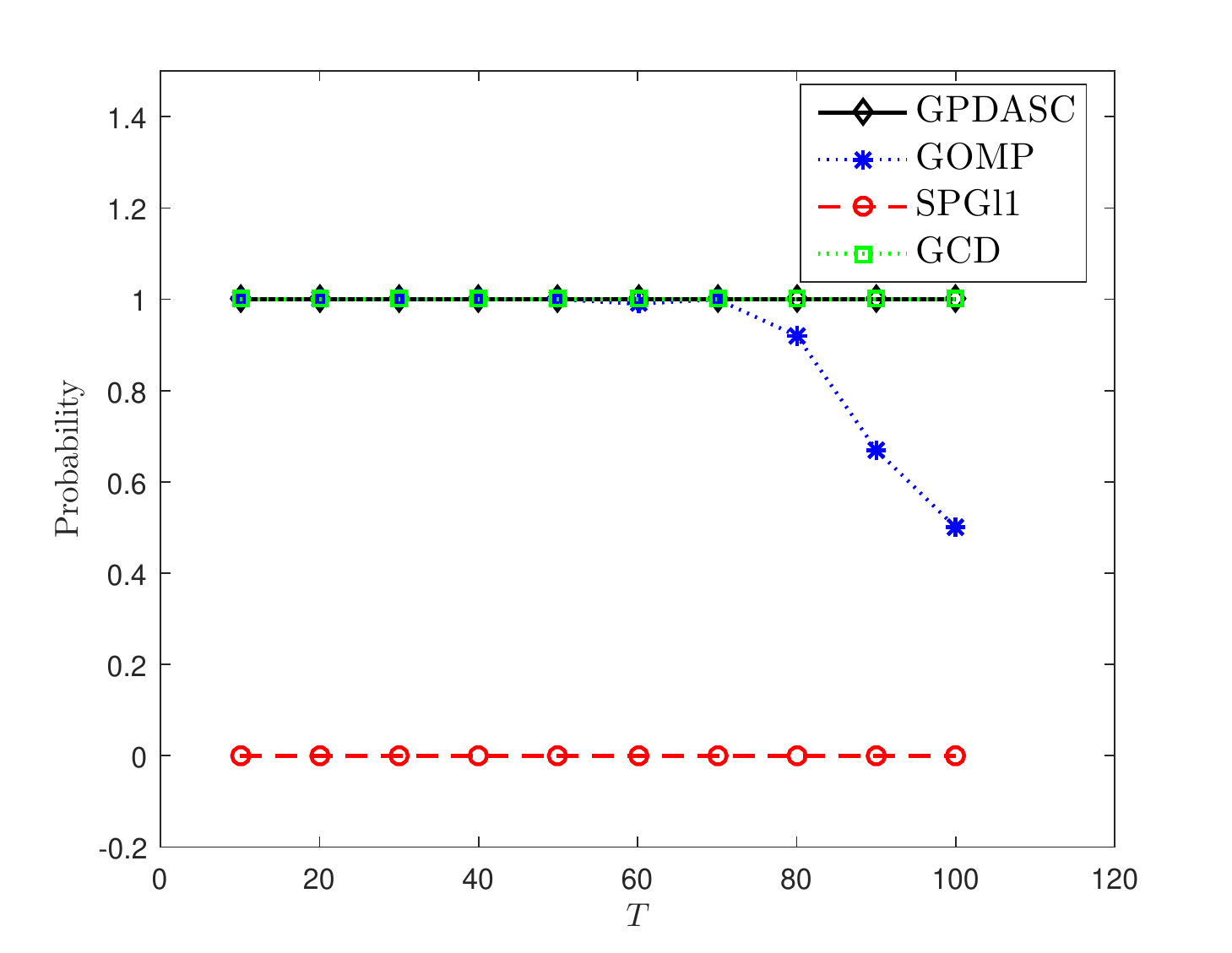} \\
   (a) $(800,2\times10^3,500,10:10:100,4,10,0,10^{-3})$\\
   \includegraphics[trim = 0.5cm 0cm 1cm 0cm, clip=true,width=7cm]{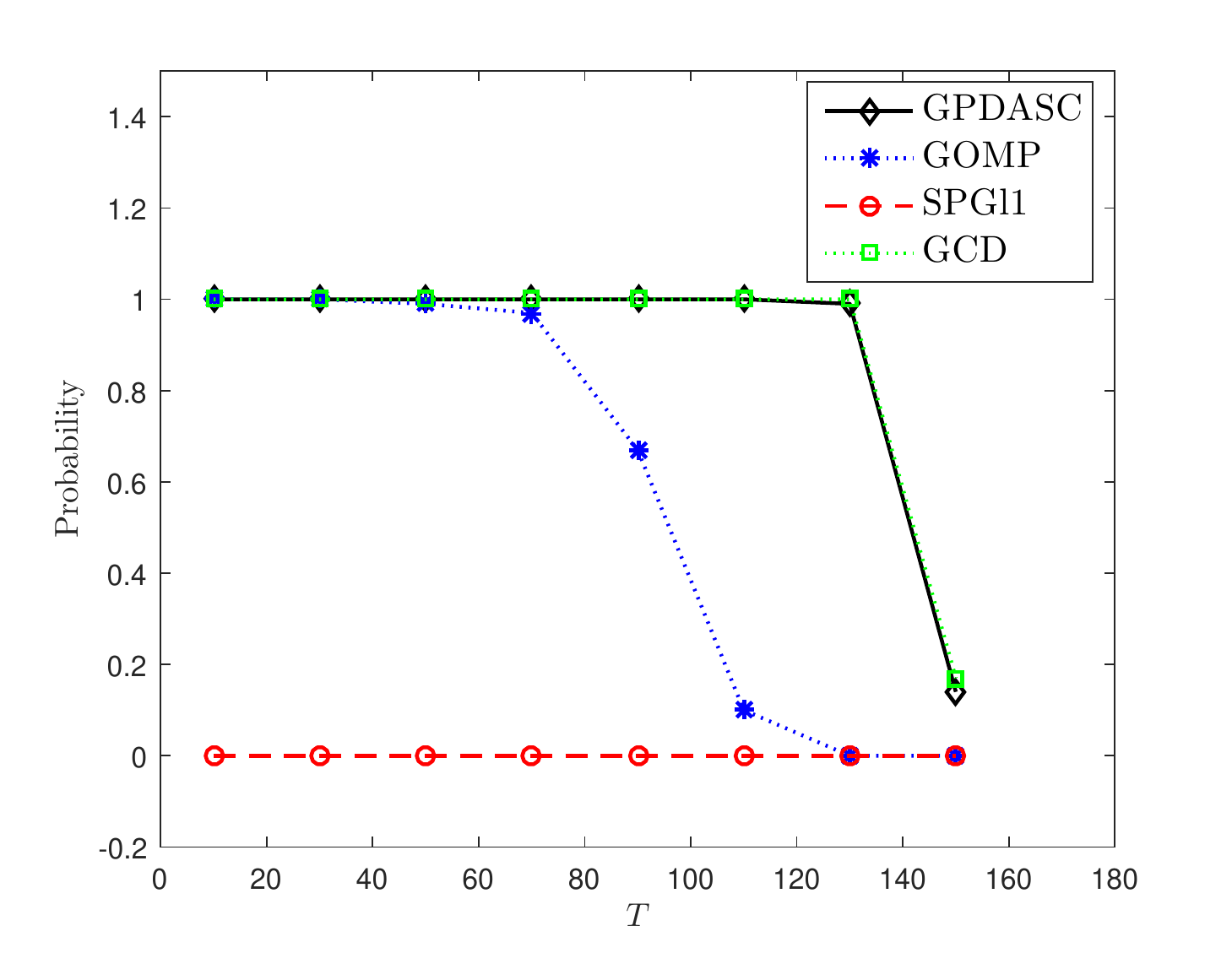}\\
   (b) $(800,2\times10^3,500,10:10:100,4,10,3,10^{-3})$
  \end{tabular}
  \caption{The probability of exact support recovery.}\label{fig:exrgm}
\end{figure}

First we compare our $\ell^0(\ell^2)$ model \eqref{eqn:groupl0} (and Algorithm \ref{alg:gpdasc}) with
three state-of-the-art group sparse recovery models and algorithms, i.e., group lasso model
\begin{equation*}
   \min_{x \in \mathbb{R}^{p}}   \|x\|_{\ell^1(\ell^2)}\quad \textrm{ subject to} \quad \|\Psi x-y\|\leq \epsilon
\end{equation*}
(solved by the group SPGl1 method \cite{VandDenBerg:2008}, available at \url{ http://www.cs.ubc.ca/~mpf/spgl1/},
last accessed on December 23, 2015), group MCP (GMCP) model \cite{WangLiHuang:2008,HuangMaZhang:2009,
HuangBrehenyMa:2012} (solved by a group coordinate descent (GCD) method \cite{BrehenyHuang:2015}),
and group OMP (GOMP) \cite{Eldar:2010,BenElda:2011}. We refer to these references for their
implementation details. Since the algorithm essentially determines the performance of each method,
we shall indicate the methods by the respective algorithms, i.e., SPGl1, GCD, GOMP and GPDASC.
In the comparison, we examine separately support recovery, and computing time and reconstruction error.
All the reported results are the average of 100 independent simulations of the experimental setting.

First, to show exact support recovery, we consider the following two problem settings: $
(800,2\times10^3,500, 10:10:100,4,10,0,10^{-3})$ and $(800,2\times10^3,500,10:10:100,4,10,3,10^{-3})$, for which
the condition numbers of the submatrices $\Psi_{G_i}$ are $O(1)$ and $O(10^2)$, respectively, for the
case $\theta=0$ and $\theta=3$, respectively. Given the group size $s=4$, the condition number $O(10^2)$
is fairly large, and thus the latter is numerically far more challenging than the former. The
numerical results are presented in Fig. \ref{fig:exrgm}, where
the exact recovery is measured by $\mathcal{A}^* = \mathcal{A}^\dag$, with
$\mathcal{A}^\dag$ and $\mathcal{A}^*$ being the true and recovered active sets, respectively.

\begin{figure}[ht!]
\centering
   \begin{tabular}{cc}
   \includegraphics[trim = 0.cm 0cm 0cm 0cm, clip=true,width=7.5cm]{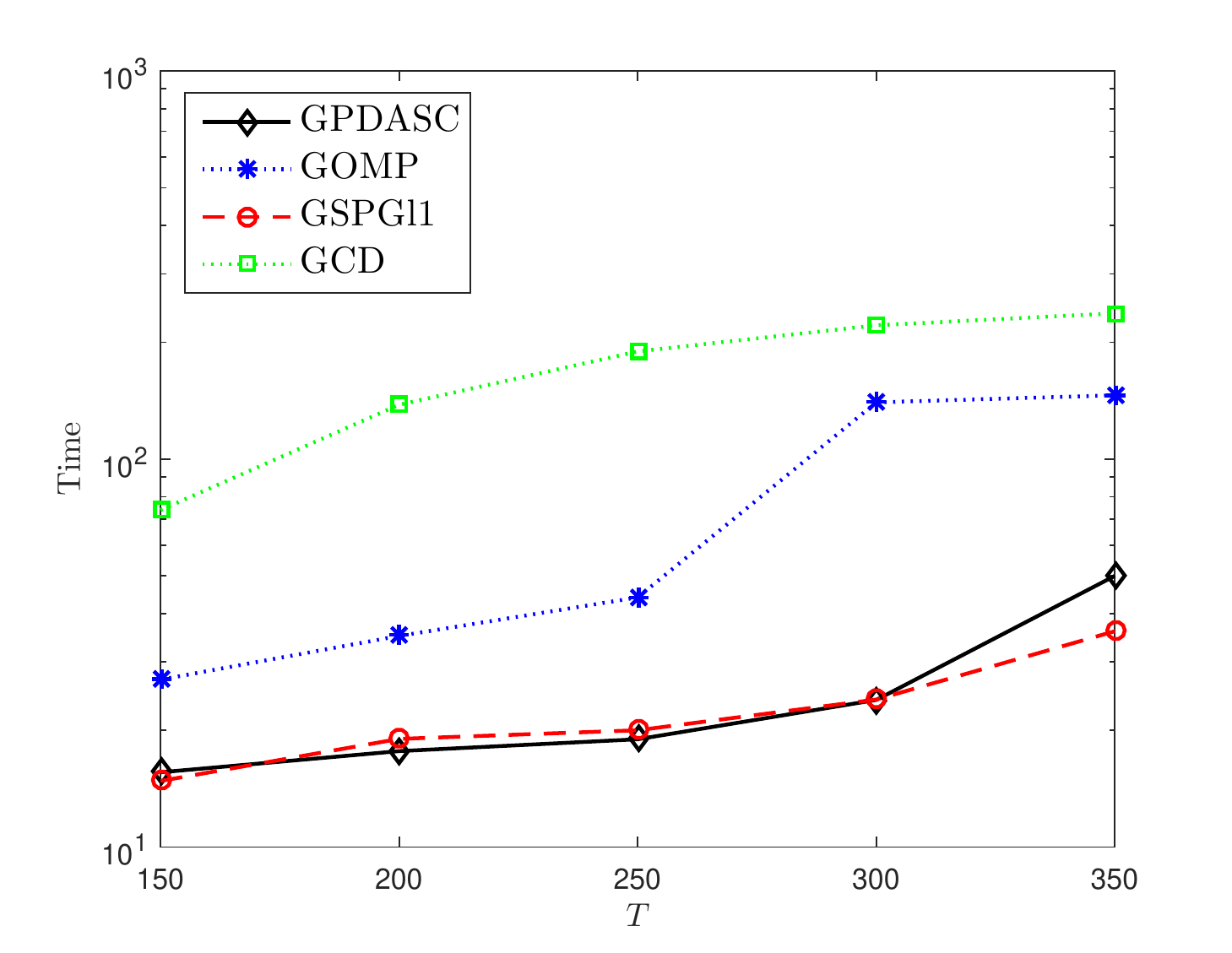} \\
   (a) computing time (in seconds)\\
   \includegraphics[trim = 0cm 0cm 0cm 0cm, clip=true,width=7.5cm]{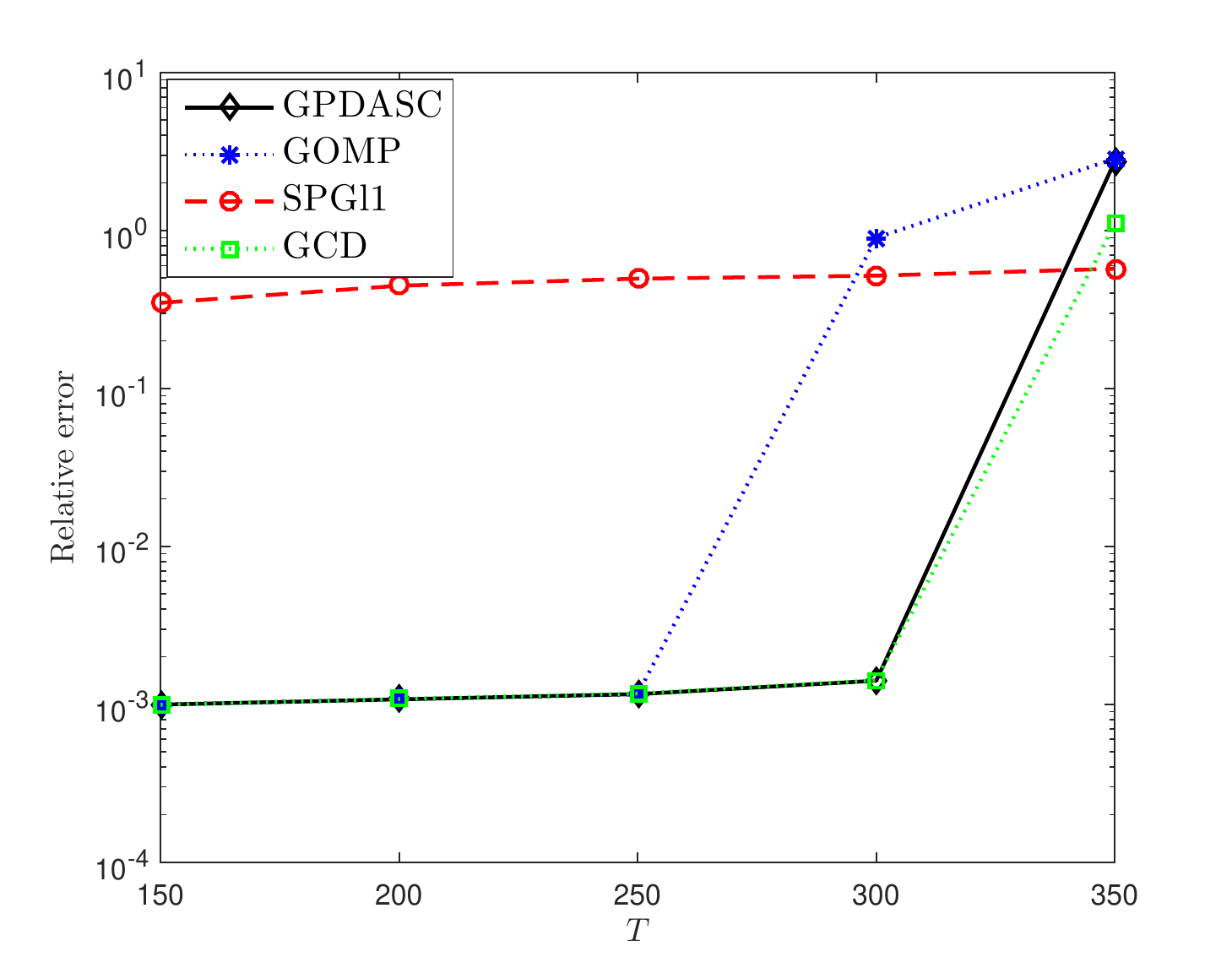}\\
    (b) relative error
  \end{tabular}
  \caption{Computing time and relative error for GPDASC, GOMP,  SPGl1, and GCD for the problem setting $
  (2\times10^3,1\times10^4,2.5\times10^3,150:50:350,4,100,1,10^{-2})$. All computations were performed with the same continuation path.  }
  \label{fig:timeerror1}
\end{figure}

Numerically, it is observed that as the (group) sparsity level $T$ and correlation parameter $\theta$
increase, the $\ell^0(\ell^2)$ model and GMCP are the best performers in the test. Theoretically,
this is not surprising: the $\ell^0(\ell^2)$ model represents the golden-standard for group sparse
recovery, like the $\ell^0$ model for the usual sparsity, and GMCP is a close nonconvex proxy to the
$\ell^0(\ell^2)$ model. Note that GMCP as implemented in \cite{HuangBrehenyMa:2012} is robust with
respect to the inner-group correlation, since it performs a preprocessing step to decorrelate
$\Psi$ by reorthonormalizing the columns within each group. However, unlike the $\ell^0(\ell^2)$
penalty, this step generally changes the GMCP objective function, due to a lack of transform invariance,
and thus may complicate the theoretical analysis of the resulting recovery method.
Meanwhile, as a greedy approximation, GOMP does a fairly good job overall: for small $\theta$,
it can almost perform as well as the $\ell^0(\ell^2)$ model, but deteriorates greatly for large $\theta$.
By its very construction, GOMP from \cite{Eldar:2010} does not take care of the inner-group
correlation directly. Surprisingly, group lasso fails most of the time. A closer look at the recovered signals shows
that it tends to choose a slightly larger active set than $\mathcal{A}^\dag$ in the
noisy case, and this explains its relatively poor performance in terms of
the exact recovery probability, although the relative error is not too large. Intuitively,
this concurs with the fact that the convex relaxation often trades the computational efficiency by
compromising the reconstruction accuracy.

Next we compare their computing time and reconstruction error on the following two problem
settings: $(2\times10^3,1\times10^4,2.5\times10^3,200:25:400,4,100,1,10^{-2})$
and $(5\times10^3,2\times10^4,5\times10^{3},500:50:800,
4, 100,10,10^{-3})$, for which the condition number of the submatrices $\Psi_{G_i}$
is of $O(10)$ and $O(10^3)$, respectively. The case $\theta=10$ involves very strong inner-group
correlation, and it is very challenging. The numerical
results are presented in Figs. \ref{fig:timeerror1} and \ref{fig:timeerror2}.

\begin{figure}[ht!]
\centering
   \begin{tabular}{cc}
   \includegraphics[trim = 0.5cm 0cm 1cm 0cm, clip=true,width=6.85cm]{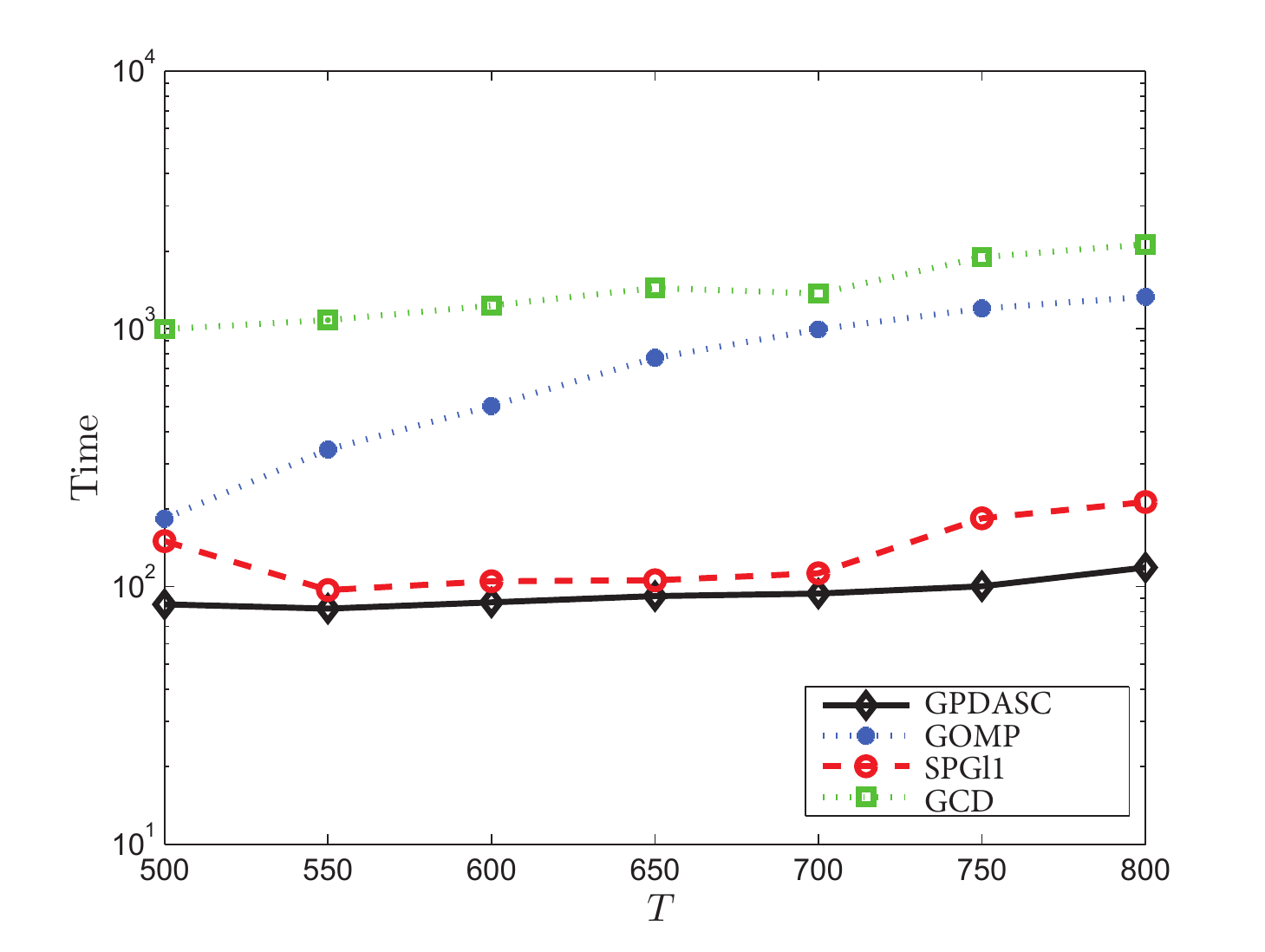} \\
   (a) computing time (in second)\\
   \includegraphics[trim = 0.7cm 0cm 1cm 0cm, clip=true,width=7cm]{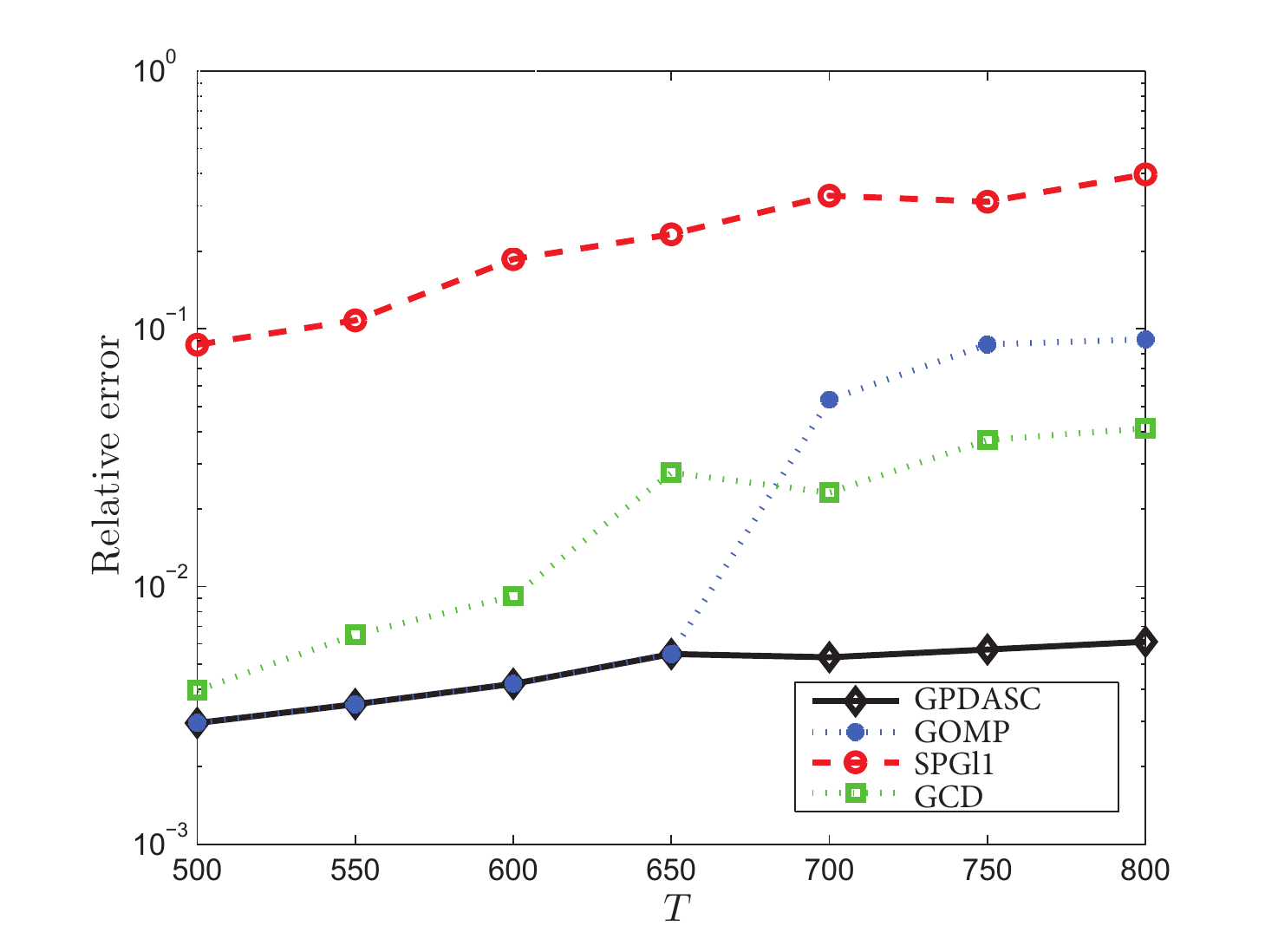}\\
   (b) relative error
  \end{tabular}
  \caption{Computing time and relative error for GPDASC, GOMP,  SPGl1,
   and GCD for the problem setting $(5\times10^3,2\times10^4,5\times10^{-3},500:50:800,4,100,10,10^{-3})$. All computations were performed with the same continuation path.}
  \label{fig:timeerror2}
\end{figure}

For $\theta=1$, the proposed GPDASC for the $\ell^0(\ell^2)$ model is at least three to four times
faster than GCD and GOMP,  cf. Fig. \ref{fig:timeerror1}. The efficiency of GPDASC stems from its
Newton nature and the continuation strategy, apart from solving least-squares problems only on the active
set. We shall examine its convergence more closely below. Group lasso is also computationally
attractive, since due to its convexity, it admits an efficient solver SPGl1. The coupling with a
continuation strategy is beneficial to the efficiency of SPGl1 \cite{FanJiaoLu:2014}. Meanwhile, the
reconstruction errors of the $\ell^0(\ell^2)$ and GMCP are comparable, which is slightly better than GOMP, and
they are much accurate than that of group lasso, as observed earlier. In the case of strong inner-group
correlation (i.e., $\theta=10$), the computing time of GPDASC does not change much, but that of
other algorithms has doubled. Further, the relative error by the $\ell^0(\ell^2)$
model does not deteriorate with the increase of the correlation parameter $\theta$, due to its inherent built-in
decorrelation mechanism, cf. Section \ref{sec:cwm}, and thus it is far smaller than
that by other methods, especially when the group sparsity level $T$ is large. In summary, these experiments show
clearly that the proposed $\ell^0(\ell^2)$ model is very competitive
in terms of computing time, reconstruction error and exact support recovery.

\subsection{Superlinear local convergence of Algorithm \ref{alg:gpdasc}}\label{ssec:iter}

\begin{figure}
\centering
   \begin{tabular}{cc}
   \includegraphics[trim = 0.5cm 0cm 1cm 0cm, clip=true,width=7cm]{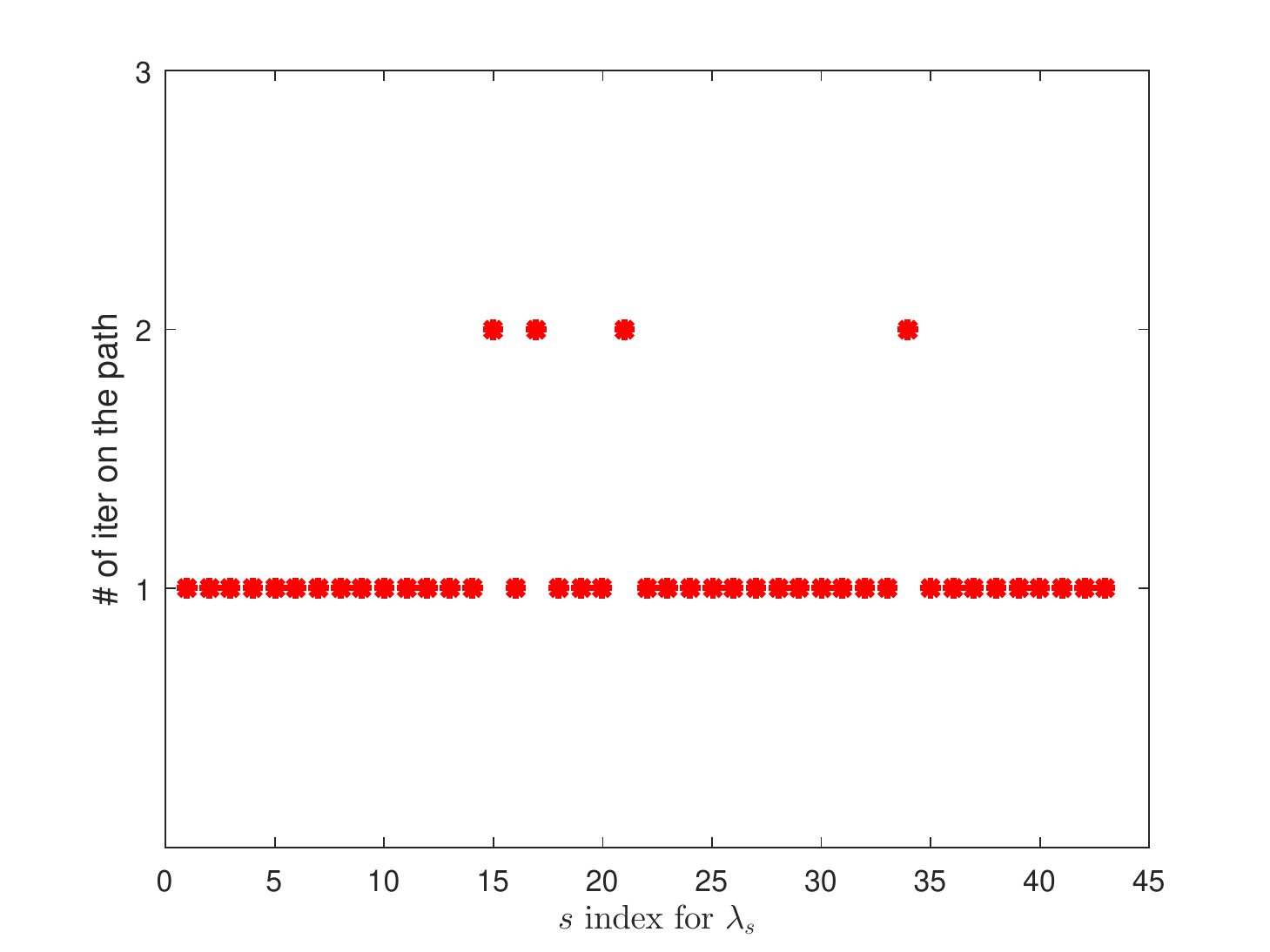} \\
   (a) $(500,10^3,250,50,4,100,0,10^{-3})$ \\
   \includegraphics[trim = 0.5cm 0cm 1cm 0cm, clip=true,width=7cm]{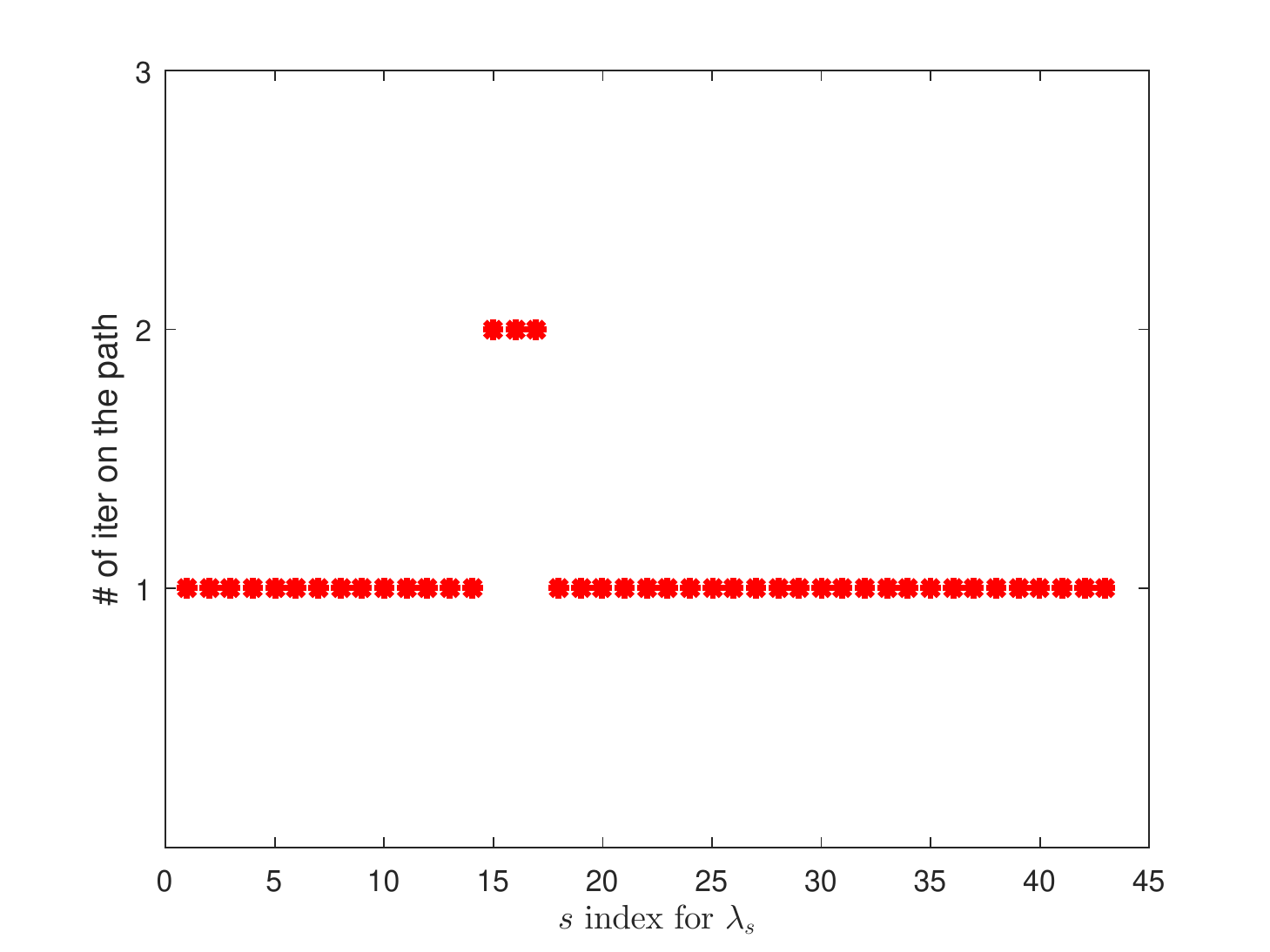}\\
    (b) $(500,10^3,250,50,4,100,3,10^{-3})$
  \end{tabular}
  \caption{The number of iterations along the continuation path, for each fixed regularization parameter
   $\lambda_s$.}\label{fig:locsup}
\end{figure}

We illustrate the convergence behavior of Algorithm \ref{alg:gpdasc} with two problem
settings: $(500,10^3,250,50,4,100,0,10^{-3})$ and $(500,10^3,250,50,4,100,3,10^{-3})$. To
examine the local convergence, we show the number of iterations for each fixed $\lambda_s$
along the continuation path in Fig. \ref{fig:locsup}. It is observed that the stopping
criterion at the inner iteration, i.e., Step 8 of Algorithm \ref{alg:gpdasc}, is usually
reached with one or two iterations, irrespective of the inner-group correlation strength
or the regularization parameter $\lambda_s$. Hence, Algorithm \ref{alg:gpdasc} converges
locally supperlinearly, like that for the convex $\ell^1$ penalty \cite{FanJiaoLu:2014},
and the continuation strategy can provide a good initial guess for each inner iteration
such that the fast local convergence of the GPDAS is fully exploited. This confirms the
complexity analysis in Section \ref{ssec:complexity}. The highly desirable
$\theta$-independence convergence is attributed to the built-in de-correlation effect of
the $\ell^0(\ell^2)$ model.

To gain further insights, we present in Fig. \ref{fig:symdiffas} the variation of the active
set along the continuation path using the setting as that of Fig. \ref{fig:locsup}. It is
observed that the interesting monotonicity relation  $\calA_{s}\subset \calA_{s+1}$ holds along the
continuation path. The difference of active sets between two neighboring regularization parameters $\lambda_s$
is generally small (less than five, and mostly one or two), and thus each GPDAS update is
efficient, with a cost comparable with that of one step gradient descent, if using the
low-rank Cholesky up/down-date \cite{GillGolub:1974}, cf. Section \ref{ssec:complexity}.
Further, the empirical observation that each inner iteration often takes only one
iteration corroborates the convergence theory in Theorem \ref{thm:main}, i.e.,
the algorithm converges globally even if each inner loop takes one iteration.
\begin{figure}
\centering
   \begin{tabular}{cc}
   \includegraphics[trim = 0.5cm 0cm 1cm 0cm, clip=true,width=7cm]{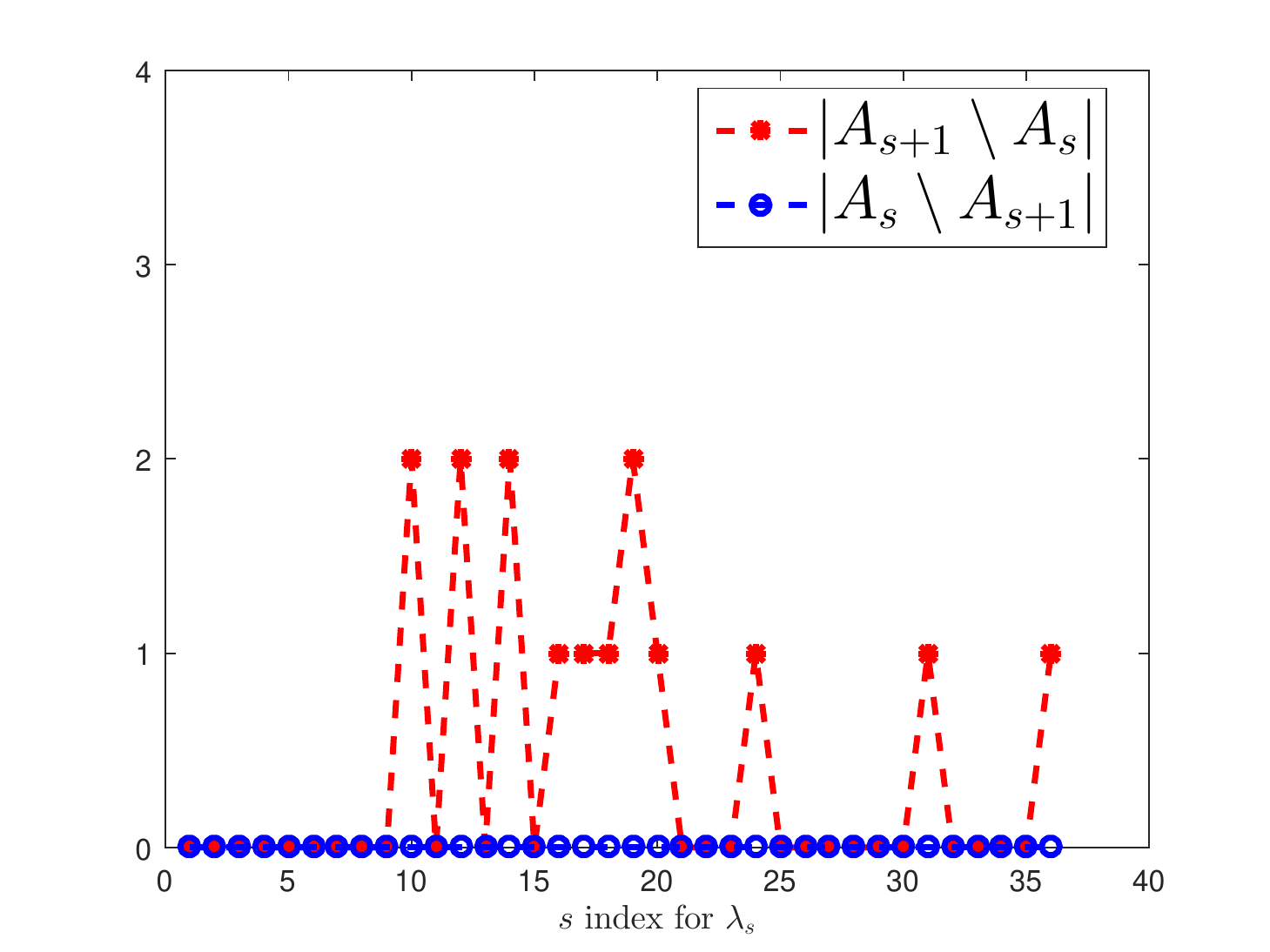} \\
   (a) $(500,10^3,250,15,4,100,0,10^{-3})$ \\
   \includegraphics[trim = 0.5cm 0cm 1cm 0cm, clip=true,width=7cm]{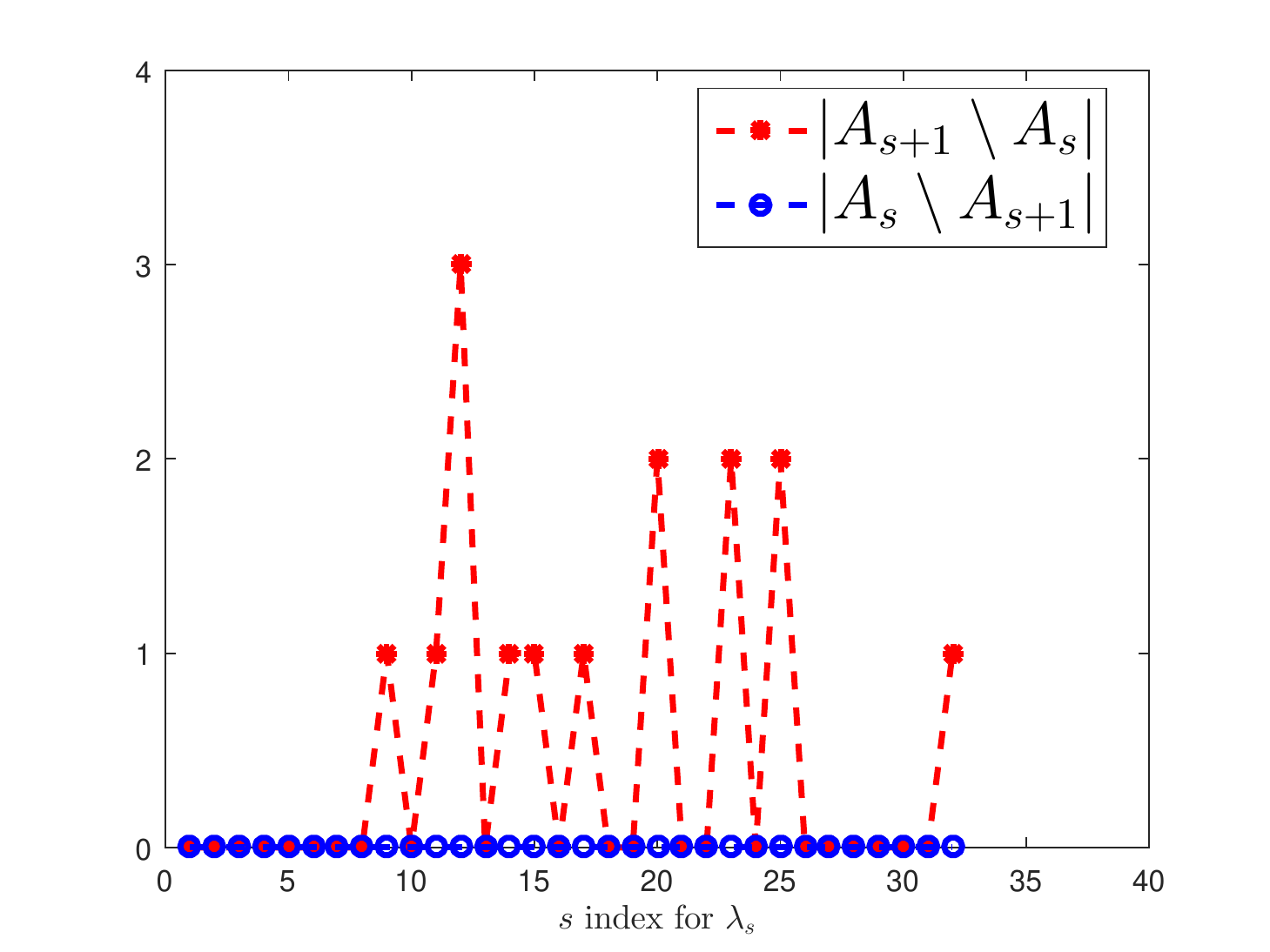}\\
    (b) $(500,10^3,250,15,4,100,1,10^{-3})$
  \end{tabular}
  \caption{The  variation of the active set size measured by
  $|\calA_s \setminus \calA_{s+1}|$ and  $|\calA_{s+1} \setminus \calA_{s}|$ along the
  continuation path, where $\calA_s$ denotes the active set at the  regularization parameter
   $\lambda_s$.   }
   \label{fig:symdiffas}
\end{figure}

Correspondingly, the variation of the relative $\ell^2$ error with respect to the oracle
solution $x^o$ along the continuation path is given in Fig. \ref{fig:conv-oracle}. For large
regularization parameters $\lambda_s$, the regularized solution is zero, and thus 
the relative error is unit. Then the error 
first increases slightly, before it starts to decrease monotonically.
Upon convergence (i.e., the discrepancy principle is satisfied), the iterate converges to the oracle
solution $x^o$, as indicated by the extremely small error. It is noteworthy that the convergence behavior is almost
identical for both the uncorrelated and correlated sensing matrices, further confirming the advantage
of the $\ell^0(\ell^2)$ approach.
\begin{figure}
\centering
   \begin{tabular}{cc}
   \includegraphics[trim = 0cm 0cm 1cm 0cm, clip=true,width=7cm]{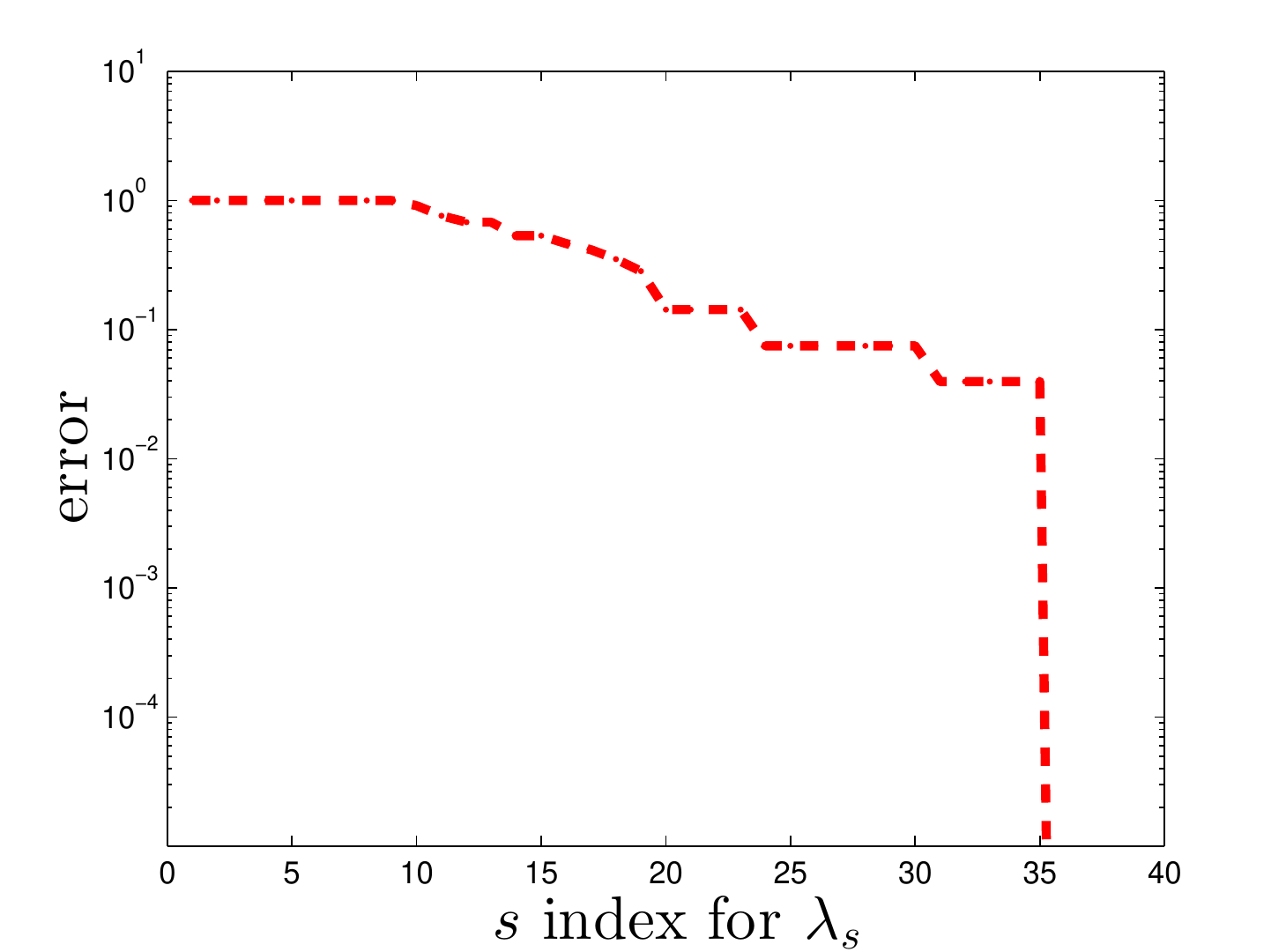} \\
   (a) $(500,10^3,250,50,4,100,0,10^{-3})$ \\
   \includegraphics[trim = 0cm 0cm 1cm 0cm, clip=true,width=7cm]{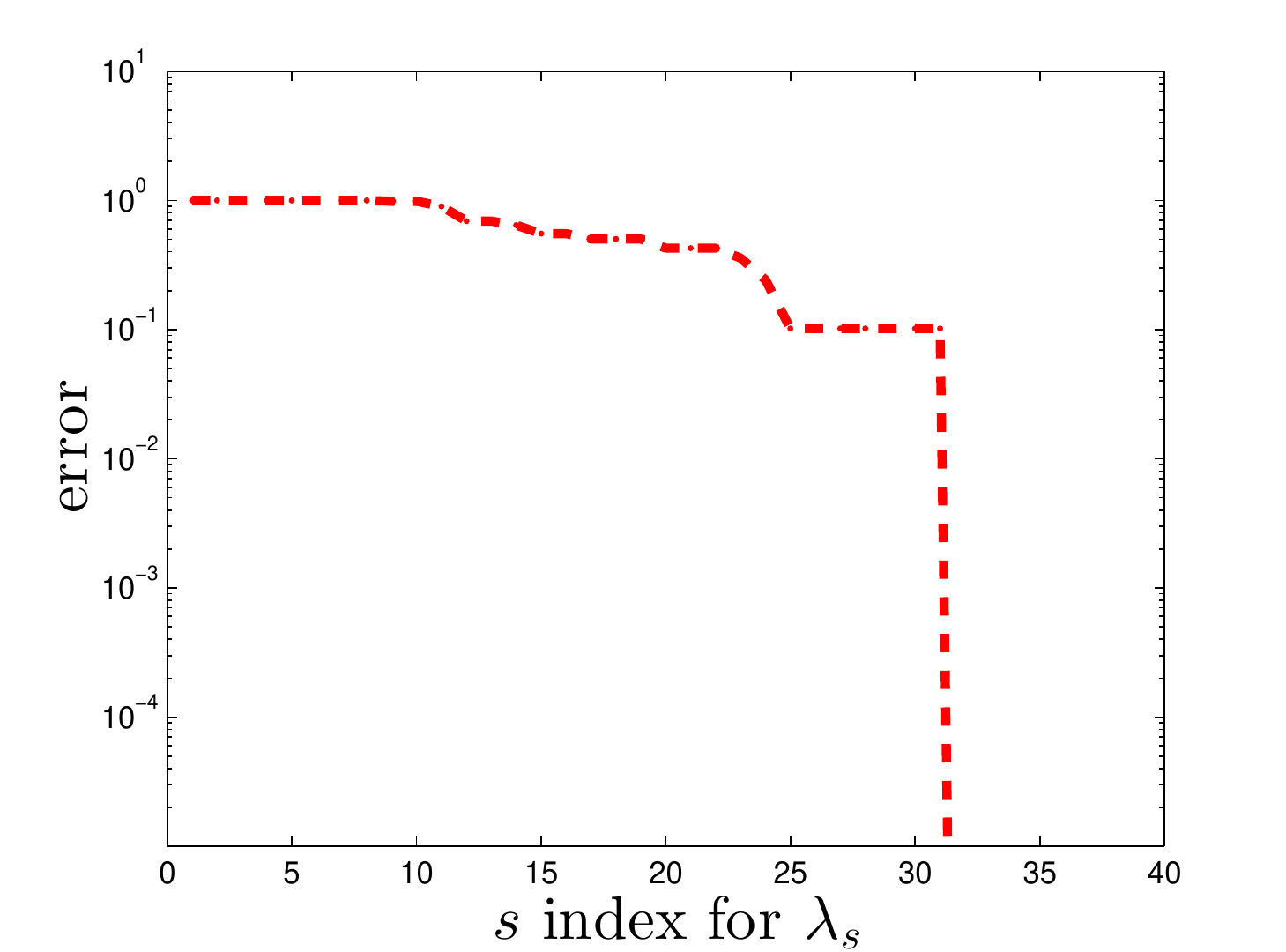}\\
    (b) $(500,10^3,250,50,4,100,1,10^{-3})$
  \end{tabular}
  \caption{The relative $\ell^2$ error of the iterates along the continuation path, for each fixed regularization parameter
   $\lambda_s$, with respect to the oracle solution $x^o$.}\label{fig:conv-oracle}
\end{figure}

\subsection{Multichannel image reconstruction}

In the last set of experiments, we consider recovering 2D images from compressive and noisy measurement.

The first example is taken from \cite{HuangHuangMetaxas:2009}. The target signal is a color image
with three-channels $I = (I_r;I_g;I_b)$, with $I_{c}\in \mathbb{R}^{l^2}, c \in \{r,g,b\}.$ In
the computation, we reorder $I$ into one vector such that the pixels at the same position from the
three channels are grouped together. The observational data $y$ is generated by $y  = \Psi I + \eta$
where $\Psi$ is a random Gaussian matrix (with correlation within  each group) and  $\eta$  is Gaussian noise,
following the procedure outlined in Section \ref{ssec:setup} with the following parameters: $n=1152$, $p=6912$, $N = 2304$, $T=152$,
$s = 3$, $\theta = 10$, $\sigma=\mbox{1e-3}$. The condition number within each group is $O(10^2)$.

The numerical results are presented in Fig. \ref{fig:2d} and Table \ref{tab:2d}, where the PSNR
is defined by
\begin{equation*}
 \mathrm{PSNR}=10\cdot \log\frac{V^2}{MSE},
\end{equation*}
where $V$ and $MSE$ is the maximum absolute value and the mean squared error, respectively, of the
reconstruction, It is observed that GPDASC, GOMP and GCD produce
visually equally appealing results, and they are much better than that of SPGl1. This observation is
also confirmed by the PSNR values in Table \ref{tab:2d}: the PSNR of GPDASC is slightly higher
than that of GOMP and GCD.  The convergence of GPDASC is much faster than GOMP and GCD.
The SPGl1 is the most efficient one, but greatly compromises the reconstruction quality.

\begin{figure}[ht!]
  \centering
  \begin{tabular}{c}
    \includegraphics[trim = 5cm 9.5cm 1cm 8.5cm, clip=true,width=11.5cm]{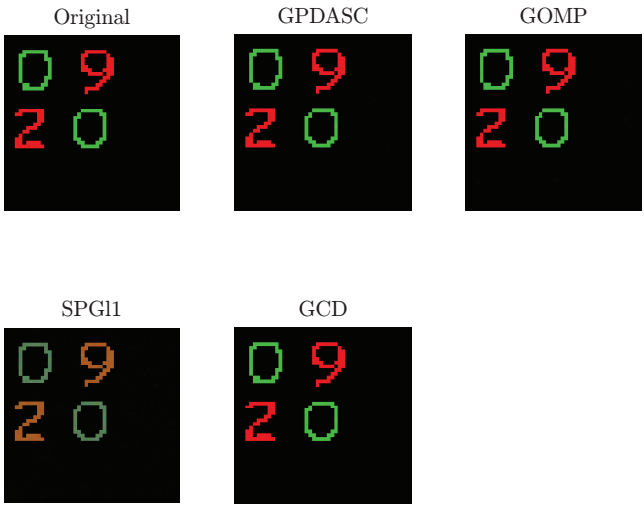}
  \end{tabular}
  \caption{Reconstruction results of the two-dimensional  image.}\label{fig:2d}
\end{figure}

\begin{table}[h]
\centering
 \caption{Numerical results for the two-dimensional image: $n=1152$, $p=6912$, $N = 2304$, $T=152$, $s = 3$, $\theta = 10$, $\sigma=\mbox{1e-3}$.}\label{tab:2d}
 \begin{tabular}{ccccc}
 \hline
    algorithm   &CPU time (s) &PSNR      \\
 \hline
  GPDASC     & 5.70  &48.2    \\
  GOMP       & 10.9  &47.9       \\
  SPGl1      & 2.85  &22.2   \\
  GCD        & 33.9  &48.1     \\
  \hline
  \end{tabular}
\end{table}

Last, we consider multichannel MRI reconstruction. The sampling matrix $\Psi$ is the composition
of a partial FFT with an inverse wavelet transform, with a size $3771\times 12288$, where we have
used 6 levels of Daubechies 1 wavelet. The three channels for each wavelet expansion are organized
into one group, and the underlying image $I = (I_r;I_g;I_b)$ has $724$ nonzero group coefficients
(each of group size $3$) under the wavelet transform. Hence, the data is formed as $y=\Psi c+\eta$,
where $c$ is the target coefficient with a group sparse structure and $\eta$ is the Gaussian noise
with a noise level $\sigma=\mbox{1e-2}$. The recovered image $I$ is then obtained by applying the
inverse wavelet transform to the estimated coefficient $c$. The numerical results are presented in
Fig. \ref{fig:2dwavelet} and Table \ref{tab:2dwavelet}.

\begin{figure}[ht!]
  \centering
  \begin{tabular}{c}
    \includegraphics[trim = 2cm 1cm 1cm 0cm, clip=true,width= 9cm]{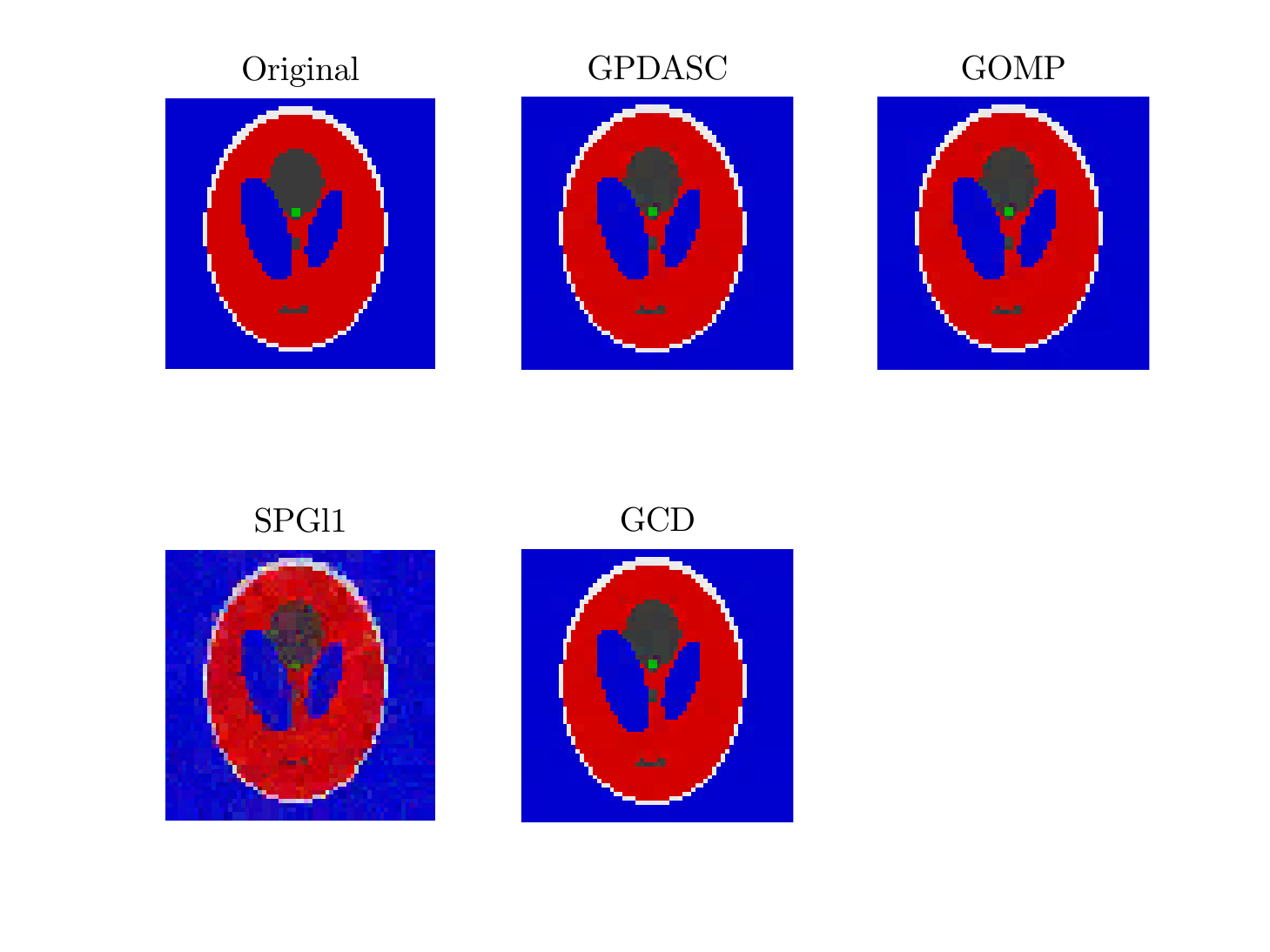}
  \end{tabular}
  \caption{Reconstructions for the 2D MRI phantom image.}\label{fig:2dwavelet}
\end{figure}

\begin{table}[h]
\centering
 \caption{Numerical results for the 2D MRI phantom image: $n=3771$, $p=12288$, $N = 4096$, $T=724$, $s = 3$, $\theta = 0$, $\sigma=\mbox{1e-2}$.}\label{tab:2dwavelet}
 \begin{tabular}{ccccc}
 \hline
    algorithm   &CPU time (s) &PSNR      \\
 \hline
  GPDASC     & 48.5 &38.7   \\
  GOMP       & 203  &37.3       \\
  SPGl1      & 14.3  &20.1   \\
  GCD        & 212  &38.2     \\
  \hline
  \end{tabular}
\end{table}

The observations from the preceding example remain largely valid: the reconstructions
by  GPDASC, GOMP and GCD are close to each other visually and have comparable PSNR
values, and all are much better than that by SPGl1. However, GPDASC
is a few times faster than that by GOMP and GCD.

\section{Conclusions}
In this work we have proposed and analyzed a novel approach for recovering group sparse signals based on the regularized
least-squares problem with an $\ell^0(\ell^2)$ penalty. We provided a complete theoretical analysis on the model,
e.g., existence of global minimizers, invariance property, support recovery,  and
properties of block coordinatewise minimizers. One salient feature of the approach is that it has built-in decorrelation
mechanism, and can handle very strong inner-group correlation. Further, these nice properties can be
numerically realized efficiently by a primal dual active set solver, for which a finite-step global convergence was also
proven. Extensive numerical experiments were presented to illustrate the salient features of the $\ell^0(\ell^2)$
model, and the efficiency and accuracy of the algorithm, and the comparative study with existing approaches
show its competitiveness in terms of support recovery, reconstruction errors and computing time.

There are several avenues deserving further study. First, when the column vectors in each group are ill-posed
in the sense that they are highly correlated / nearly parallel to each other, which are
characteristic of most inverse problems \cite{AlbertiAmmariJin:2016}, the propose $\ell^0(\ell^2)$ model \eqref{eqn:groupl0} may
not be well defined or the involved linear systems in the GPDAS algorithm can be challenging
to solve directly. One possible strategy is to apply an extra regularization. This necessitates a refined theoretical
study. Second, in practice, the true signal may have extra structure within the group,
e.g., smoothness or sparsity. It remains to explore such extra a priori information.

\section*{Acknowledgements}
The authors would like to thank the two referees for their constructive
comments. The research of Y. Jiao is partially supported by
National Science Foundation of  China No. 11501579, B. Jin by
EPSRC grant EP/M025160/1, and X. Lu
by National Science Foundation of  China No. 11471253.

\appendix
\subsection{Proof of Proposition \ref{prop:bmic}}\label{app:bmic}
\begin{proof}
Let $\mathcal{N}_1 = \textrm{span}\{p_1,...,p_{s_1}\}$ and $\mathcal{N}_2 = \textrm{span}
\{q_1,...,q_{s_2}\}$ be two subspaces spanned by two distinct groups, where $p_i$, $q_j$ are
column vectors of unit length. By the definition of the MC $\nu$, $|\langle p_i,q_j\rangle|\leq \nu$ for
any $i=1,\ldots,s_1$ and $j=1,\ldots,s_2$. For any $u\in \mathcal{N}_1$ and $v\in\mathcal{N}_2$, let $u = \sum_{i=1}^{s_1} c_i p_i$
and $v = \sum_{j=1}^{s_2} d_j q_j$. Then with $c = (c_1,...,c_{s_1})$ and $d = (d_1,...,d_{s_2})$,
\begin{equation*}
  \begin{aligned}
    \|u\|^2 & = \sum_{i,j=1}^{s_1} c_ic_j \langle p_i,p_j\rangle \geq \sum_{i=1}^{s_1}c_i^2 - \nu\sum_{i\neq j}|c_i||c_j|\\
    &\geq(1-(s_1-1)\nu)\|c\|^2\geq (1-(s-1)\nu)\|c\|^2,
  \end{aligned}
\end{equation*}
and similarly $\|v\|^2 \geq (1-(s-1)\nu)\|d\|^2$. Hence we have
\begin{eqnarray*}
\frac{|\langle u, v \rangle|}{\|u\| \|v\|} \leq \frac{\nu \sum_{i=1}^{s_1}\sum_{j=1}^{s_2} |c_i d_j|}{ (1 - \nu(s-1))\|c\| \|d\|} \leq \frac{\nu s}{1 - \nu(s-1)},
\end{eqnarray*}
by the inequality
$
   \sum_{i=1}^{s_1}\sum_{j=1}^{s_2}|c_id_j|=\sum_{i=1}^{s_1}|c_i|\sum_{j=1}^{s_2}|d_j|\leq \sqrt{s_1s_2}\|c\|\|d\|\leq s\|c\|\|d\|.
$
\end{proof}

\subsection{Proof of Lemmas \ref{lem:est-G}  and \ref{lem:est-D}}\label{app:G}
\begin{proof}{[of Lemma \ref{lem:est-G}]}
First, recall that for any matrix $A$, $A^tA$ and $A A^t$ have the same
nonzero eigenvalues. Upon letting $A = \bar{\Psi}_{G_i}^{-1}\Psi^t_{G_i}$, we have $AA^t = I$, and
\begin{equation*}
  \|\Psi_{G_i}\bar{\Psi}_{G_i}^{-1} x_{G_i}\|^2 =x^t_{G_i} AA^t x_{G_i} = \|x_{G_i}\|^2,
\end{equation*}
and likewise
\begin{equation*}
   \|\bar{\Psi}_{G_i}^{-1}\Psi^t_{G_i} y\|^2 = y^t A^tA y \leq \lambda_{max}(A^tA) \|y\|^2 = \|y\|^2,
\end{equation*}
giving the first two estimates. If $i=j$, $D_{i,j}$ is an identity matrix, and thus $\|D_{i,j}x_{G_j}\|
= \|x_{G_j}\|$. For $i\neq j$, $U_i = (\bar{\Psi}_{G_i}^{-1}\Psi_{G_i}^t)^t \in \mathbb{R}^{n\times |s_i|}$,
$V_j = (\bar{\Psi}_{G_j}^{-1}\Psi_{G_j}^t)^t \in \mathbb{R}^{n \times |s_j|}$, then
\begin{equation*}
 D_{i,j} = U_{i}^tV_j,\quad U_{i}^tU_i = I,\quad  V_{j}^t V_j = I.
\end{equation*}
Thus by Lemma \ref{equdef}, there holds
\begin{eqnarray*}
\|D_{i,j}x_{G_j}\| =\|  U_{i}^tV_j x_{G_j} \| \leq \|  U_{i}^tV_j\|\| x_{G_j} \| \leq \mu \|x_{G_j}\|,
\end{eqnarray*}
showing the last inequality.
\end{proof}

\begin{proof}{[of Lemma \ref{lem:est-D}]}
Since $D_{i,i} = I$, we have
\begin{equation*}
y = Dx = \left(\begin{array}{c}
x_{G_{i_1}} + \sum_{j\neq i_1} D_{i_1,i_j}x_{G_{i_j}}  \\
\vdots \\
x_{G_{i_M}} + \sum_{j\neq i_M} D_{i_1,i_j}x_{G_{i_j}}
\end{array}\right) = \left(\begin{array}{c}
y_{G_{i_1}} \\
\vdots \\
y_{G_{i_M}}
\end{array}\right).
\end{equation*}
By Lemma \ref{lem:est-G}, $\|D_{k,i_j} x_{G_{i_j}}\| \leq \mu \|x_{G_{i_j}}\|$ for
any $k\neq i_j$. Let $k^*$ be the index such that $\|y\|_{\ell^\infty(\ell^2)} = \|y_{G_{k^*}}\|$.
Then
\begin{equation*}
  \begin{aligned}
 &\|y\|_{\ell^\infty(\ell^2)} = \|y_{G_{k^*}}\| \leq  \|x_{G_{k^*}}\| +  \sum_{i_j\neq k^*} \|D_{k^*,i_j}x_{G_{i_j}}\| \\
   & \leq \|x_{G_{k^*}}\| +  \mu\sum_{i_j\neq k^*} \|x_{G_{i_j}}\| \leq  (1 + (M-1)\mu)\|x\|_{\ell^\infty(\ell^2)}.
  \end{aligned}
\end{equation*}
To show the other inequality, let $j^*$ be the index such that $\|x\|_{\ell^\infty(\ell^2)} = \|x_{G_{j^*}}\|$. Then by Lemma \ref{lem:est-G}, we deduce
\begin{equation*}
  \begin{aligned}
&\|y\|_{\ell^\infty(\ell^2)}  \geq \|y_{G_{j^*}}\| \geq  \|x_{G_{j^*}}\| -  \sum_{i_j\neq j^*} \|D_{j^*,i_j}x_{G_{i_j}}\| \\
& \geq \|x_{G_{j^*}}\| -  \mu\sum_{i_j\neq j^*} \|x_{G_{i_j}}\| \geq  (1 - (M-1)\mu)\|x\|_{\ell^\infty(\ell^2)}.
  \end{aligned}
\end{equation*}
This completes the proof of the lemma.
\end{proof}

\subsection{Proof of Corollary \ref{cor:oracle}}\label{app:oracle-unique}
\begin{proof}
Since $\Psi_{G_i}$ has full column rank, problem \eqref{eqn:oracle} is equivalent to
$
\bar{x}^o|_{\cup_{i\in\calA^\dag}G_i } = \mathop\textrm{argmin} \|\sum_{i\in \mathcal{A}^\dag}\Psi_{G_i}\bar{\Psi}_{G_i}^{-1} \bar{x}_{G_i} -y\|^2,\; \bar{x}^o|_{\cup_{i\in\calI^\dag}G_i } = 0,
$
where $\bar{x}_{G_i} = \bar{\Psi}_{G_i}x_{G_i}$. The normal matrix involved in the least-squares problem
on ${\cup_{i\in\calA^\dag}G_i }$ is exactly the matrix $D$ in Lemma \ref{lem:est-D}, with $\{i_1,...,i_M\}
= \mathcal{A}^\dag$. Then the uniqueness of $x^o$ follows from Lemma \ref{lem:est-D}.
\end{proof}

\subsection{Proof of Theorem \ref{thm:existence}}\label{app:existence}
\begin{proof}
Let $\mathfrak{S}=\{B: B = \cup_{i\in \calI} G_{i}, \ \calI\subseteq \{1,2,...,N\}\} $. Then the set $\mathfrak{S}$
is finite. For any nonempty $B \in\mathfrak{S}$, the problem $\min_{\mathrm{supp}(x)
\subseteq B}\|\Psi x-y\|$ has a minimizer $x^*(B)$. Let $T_B^*=\tfrac{1}{2}\|\Psi x^*(B)-y\|^2 + \lambda
\|x^*(B)\|_{\ell^0(\ell^2)}$, and for $B=\emptyset$, let $T_B^*=\frac{1}{2}\|y\|^2$ and $x^*(B)=0$. Then
we denote $T^*=\min_{B\in\mathfrak{S}}T_B^*$, with the minimizing set $B^*$, and $x^*=x^*(B^*)$.
We claim that $J_\lambda(x^*)\leq J_\lambda(x)$ for all $x\in\mathbb{R}^p$. Given any $x\in\mathbb{R}^p$,
let $B\in\mathfrak{S}$ be the smallest superset of $\mathrm{supp}(x)$. Then $\|x^*(B)\|_{\ell^0(\ell^2)}\leq \|x\|_{\ell^0(\ell^2)}$, and further
by construction $\|\Psi x^*(B)-y\|\leq \|\Psi x-y\|$ and hence $J_\lambda(x)\geq J_\lambda(x^*(B))\geq J_\lambda(x^*)$.
\end{proof}

\subsection{Proof of Theorem \ref{thm:oracle}}\label{app:oracle}
\begin{proof}
Since $x^*$ is a global minimizer of $J_\lambda$, we have
\begin{equation*}
\lambda T + \tfrac{1}{2}\epsilon^2 = J_\lambda(x^\dag) \geq   J_\lambda(x^*) \geq \lambda |\calA|.
\end{equation*}
This and the choice of $\lambda$ imply $|\calA| \leq T$. Since any global minimizer
is also a BCWM, by Theorem \ref{thm:support}(i) below, we deduce
$
  \{i\in \mathcal{A}^\dag: \|\bar{x}^\dag_{G_i}\|\geq  2\sqrt{2\lambda} + 3\epsilon\} \subseteq \mathcal{A}.
$
This gives part (i). Next, for $\lambda \in ({\epsilon^2}/{2},
(\min_{i\in \mathcal{A}^\dag}\{\|\bar{x}^\dag_{G_i}\|\} - 2\epsilon)^2/8)$, there holds $\mathcal{A}^\dag\subseteq
\mathcal{A}$ and hence $\mathcal{A}^\dag = \mathcal{A}$. Hence the only global minimizer is the oracle solution $x^o$.
\end{proof}

\subsection{Proof of Theorem \ref{prop:necopt}}\label{app:bcwm}
\begin{proof}
A BCWM $x^*$ is equivalent to the following:
\begin{equation*}
\begin{aligned}
 x_{G_i}^* \in \mathop\textrm{argmin}\limits_{x_{G_i}\in\mathbb{R}^{s_i}}\tfrac{1}{2}\|\Psi_{G_i}x_{G_i}+ \sum_{j\neq i} \Psi_{G_j}x_{G_j}^*-y\|^2 + \lambda \|x_{G_i}\|_{\ell^0(\ell^2)}\\
\end{aligned}
\end{equation*}
for $i=1,\ldots,N$, is equivalent to
\begin{equation*}
\begin{aligned}
x_{G_i}^* \in \mathop\textrm{argmin}\limits_{x_{G_i}\in\mathbb{R}^{s_i}} & \{\tfrac{1}{2}\|\Psi_{G_i}(x_{G_i} - x_{G_i}^*)\|^2 + \lambda \|x_{G_i}\|_{\ell^0(\ell^2)}\\
  &- \langle x_{G_i} - x_{G_i}^*, \Psi_{G_i}^t(y- \Psi x^*)\rangle \}.
\end{aligned}
\end{equation*}
Using the matrices $\bar\Psi_{G_i} =(\Psi_{G_i}^t\Psi_{G_i})^{1/2}$ and the identities
\begin{equation*}
\left\{\begin{aligned}
 \|\Psi_{G_i}(x_{G_i}-x_{G_i}^*)\| &= \|\bar\Psi_{G_i}(x_{G_i}-x_{G_i}^*)\|,\\
\langle x_{G_i} - x_{G_i}^*, \Psi_{G_i}^t(y- \Psi x^*)\rangle &= \langle \bar\Psi_{G_i}(x_{G_i} - x_{G_i}^*), \bar\Psi_{G_i}^{-1}d_{G_i}\rangle,\\
\|x_{G_i}\|_{\ell^0(\ell^2)} &= \|\bar\Psi_{G_i} x_{G_i}\|_{\ell^0(\ell^2)},
\end{aligned}\right.
\end{equation*}
and recalling $\bar x_{G_i} = \bar\Psi_{G_i}x_{G_i}$, $\bar x_{G_i}^* = \bar\Psi_{G_i}x_{G_i}^*$,
and $\bar d_{G_i}^* = \bar\Psi_{G_i}^{-1} d_{G_i}^*$ etc., we deduce
\begin{equation*}
  \bar{x}_{G_i}^*\in \mathop\textrm{argmin}\limits_{\bar x_{G_i}\in\mathbb{R}^{s_i}}\tfrac{1}{2}\|\bar{x}_{G_i}-(\bar{x}^*_{G_i}+\bar{d}_{G_i}^*)\|^2 + \lambda \|\bar x_{G_i}\|_{\ell^0(\ell^2)}.
\end{equation*}
Using the hard-thresholding operator $H_\lambda$, we obtain \eqref{eqn:opt}.
\end{proof}

\subsection{Proof of Theorem \ref{thm:local}}\label{app:bwcm}
\begin{proof}
It suffices to show $J_\lambda(x^*+h)\geq J_\lambda (x^*)$ for all small $h\in\mathbb{R}^p$. Let
$B=\cup_{i\in \mathcal{A}}G_i$. Then
\begin{equation}\label{eqn:min-chara}
  x_B^*\in\arg\min\tfrac{1}{2}\|\Psi_Bx_B^*-y\|^2.
\end{equation}
Now consider a small perturbation $h\in\mathbb{R}^p$ to $x^*$. If $h_{S\setminus B}=0$,
since $\|x^*+h\|_{\ell^0(\ell^2)}=\|x^*\|_{\ell^0(\ell^2)}$ for small $h$, by \eqref{eqn:min-chara},
the assertion holds. Otherwise, if $h_{S\setminus B}\neq0$, then
\begin{equation}\label{eqn:J-positive}
  \begin{aligned}
  J_\lambda(x^*+h) - J_\lambda(x^*) 
  & \geq \lambda - |(h,d^*)|,
  \end{aligned}
\end{equation}
which is positive for small $h$, since
$\|d^*\|_{\ell^\infty(\ell^2)}\leq\sqrt{2\lambda}$, cf. \eqref{eqn:xtd}. This shows the
first assertion. Now if $\Psi_{B}$
has full column rank, then problem \eqref{eqn:min-chara} is strictly
convex. Hence, for small $h\neq0$ with $h_{S\setminus B}=0$
$\|x^*+h\|_{\ell^0(\ell^2)} = \|x^*\|_{\ell^0(\ell^2)}$ and $\|\Psi (x^*+h)-y\|^2 > \|\Psi x^*-y\|^2$.
This and \eqref{eqn:J-positive} show the second assertion.
\end{proof}

\subsection{Proof of Theorem \ref{thm:support}}\label{app:support}

First, we derive crucial estimates on one-step primal-dual
iteration. Here the energy $E$ associated with an active set $\calA$ is defined by
\begin{equation}\label{eqn:energy1}
  E(\mathcal{A})= \max_{j\in \calA^\dag\setminus\calA} \{\|\bar{x}^\dag_{G_j}\|\}.
\end{equation}
These estimates bound the errors in $\bar x$
on $\mathcal{A}$ by the energy $E$ and the noise level $\epsilon$, and similarly
 $\bar d$ on $\calI$.
\begin{lemma}\label{lem:update-in-A}
Let Assumption \ref{assump:mu} hold, and $\calA$ be a given index set with $|\calA|\leq T$, and $\calI=\calA^c$.
Consider the following one-step primal-dual update (with $B=\cup_{i\in\mathcal{A}}G_i$)
\begin{equation}\label{eqn:update}
  x_{{B}} = \Psi^\dag_{B}y,\quad x_{S\setminus{B}}=0,\quad d= \Psi^t(y-\Psi x),
\end{equation}
where $\Psi_B^\dag = (\Psi_B^t\Psi_B)^{-1}\Psi_B^t$ is the pseudo-inverse of $\Psi_B$.
Then with $\mathcal{P}=\calA\cap\calA^\dag$, $\mathcal{Q}=\calA^\dag\setminus\calA$
and $\mathcal{R}=\calA\setminus\calA^\dag$, $E=E(\mathcal{A})$, for the transformed primal variable $\bar x$,
 there holds
\begin{equation}\label{eqn:estx}
  \|\bar x_{G_i}-\bar x_{G_i}^\dag\| \leq \frac{1 }{1 - |\mathcal{A}|\mu } \left( |\mathcal{Q}|\mu E + \epsilon \right)\quad \forall i\in \mathcal{P}\cup\mathcal{R},
\end{equation}
and for the transformed dual variable $\bar d$, there holds
\begin{align}
   \|\bar{d}_{G_i}\| &\leq C_{|\calA|,\mu}\left(\epsilon + \mu |\mathcal{Q}| E\right) + |\mathcal{Q}|\mu E + \epsilon, i\in\calI\cap \calI^\dag,\nonumber\\
 \|\bar{d}_{G_i}\| &\geq \|\bar{x}^\dag_{G_i}\| - (C_{|\calA|,\mu}\left(\epsilon + \mu |\mathcal{Q}| E\right)\label{eqn:estd}\\
   &\quad + (|\mathcal{Q}|-1)\mu E + \epsilon), i\in\calI\cap \calA^\dag.\nonumber
 \end{align}
with $C_{|\calA|,\mu}=|\mathcal{A}|\mu/(1-\mu(|\mathcal{A}|-1))$.
\end{lemma}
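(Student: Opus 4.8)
The plan is to carry out the entire computation in the orthonormalized domain, where the least-squares update \eqref{eqn:update} becomes a linear system in the transformed variables $\bar x_{G_j}=\bar\Psi_{G_j}x_{G_j}$ whose coefficient matrix is assembled from the blocks $D_{i,j}$ of \eqref{eqn:notation}, and then to exploit the diagonal-dominance bounds of Lemmas \ref{lem:est-G} and \ref{lem:est-D}.

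First I would note that, by Lemma \ref{lem:est-D} applied to $\calA$, the block matrix with entries $D_{i,j}$, $i,j\in\calA$, is invertible (its $\ell^\infty(\ell^2)$ lower bound $1-(|\calA|-1)\mu$ is positive under Assumption \ref{assump:mu}), hence $\Psi_B^t\Psi_B$ is invertible and $\Psi_B^\dag$ is well defined. Then I would rewrite the normal equations $\Psi_B^t\Psi_B x_B=\Psi_B^t y$ block by block: left-multiplying the $i$th block equation by $\bar\Psi_{G_i}^{-1}$ and inserting $y=\sum_{j\in\calA^\dag}\Psi_{G_j}x_{G_j}^\dag+\eta$ turns it into $\sum_{j\in\calA}D_{i,j}\bar x_{G_j}=\sum_{j\in\calA^\dag}D_{i,j}\bar x_{G_j}^\dag+\bar\eta_{G_i}$ for $i\in\calA$, where $\bar\eta_{G_i}=\bar\Psi_{G_i}^{-1}\Psi_{G_i}^t\eta$ satisfies $\|\bar\eta_{G_i}\|\le\epsilon$ by Lemma \ref{lem:est-G}. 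Splitting $\calA=\mathcal{P}\cup\mathcal{R}$ and $\calA^\dag=\mathcal{P}\cup\mathcal{Q}$, and using $\bar x_{G_j}^\dag=0$ for $j\in\mathcal{R}$, this rearranges to the error equation $\sum_{j\in\calA}D_{i,j}(\bar x_{G_j}-\bar x_{G_j}^\dag)=\sum_{j\in\mathcal{Q}}D_{i,j}\bar x_{G_j}^\dag+\bar\eta_{G_i}$ for $i\in\calA$. On the right every $D_{i,j}$ with $i\in\calA$, $j\in\mathcal{Q}$ is off-diagonal, so by Lemma \ref{lem:est-G} and the definition of $E=E(\calA)$ the $\ell^\infty(\ell^2)$-norm of the right-hand side is at most $|\mathcal{Q}|\mu E+\epsilon$; on the left, the coefficient matrix is of the form treated in Lemma \ref{lem:est-D} with $M=|\calA|\le T$, so its $\ell^\infty(\ell^2)$-norm is at least $(1-(|\calA|-1)\mu)\max_{j\in\calA}\|\bar x_{G_j}-\bar x_{G_j}^\dag\|$. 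Dividing gives $\max_{j\in\calA}\|\bar x_{G_j}-\bar x_{G_j}^\dag\|\le(|\mathcal{Q}|\mu E+\epsilon)/(1-(|\calA|-1)\mu)$, which yields \eqref{eqn:estx} after the harmless weakening $1-(|\calA|-1)\mu\ge1-|\calA|\mu$; the sharper form with denominator $1-(|\calA|-1)\mu$ is retained for the dual bounds.

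For \eqref{eqn:estd} I would expand $\bar d_{G_i}=\bar\Psi_{G_i}^{-1}\Psi_{G_i}^t(y-\Psi x)$ for $i\in\calI$ in the same manner, which after the same splitting gives $\bar d_{G_i}=-\sum_{j\in\calA}D_{i,j}(\bar x_{G_j}-\bar x_{G_j}^\dag)+\sum_{j\in\mathcal{Q}}D_{i,j}\bar x_{G_j}^\dag+\bar\eta_{G_i}$. Since $i\notin\calA$, each $D_{i,j}$ with $j\in\calA$ is off-diagonal, so the first sum has norm at most $|\calA|\mu$ times the $\bar x$-error bound from the previous step, i.e.\ at most $C_{|\calA|,\mu}(\epsilon+\mu|\mathcal{Q}|E)$. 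If $i\in\calI\cap\calI^\dag$ then $i\notin\mathcal{Q}$, so the second sum is also off-diagonal and bounded by $|\mathcal{Q}|\mu E$; the triangle inequality then gives the first line of \eqref{eqn:estd}. If instead $i\in\calI\cap\calA^\dag$, i.e.\ $i\in\mathcal{Q}$, I would peel off the diagonal term $D_{i,i}\bar x_{G_i}^\dag=\bar x_{G_i}^\dag$ and bound the remaining $|\mathcal{Q}|-1$ off-diagonal terms of the second sum by $(|\mathcal{Q}|-1)\mu E$; the reverse triangle inequality then yields the second line.

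The calculations are routine once the bookkeeping is in place; the only delicate point is the index-set algebra — keeping precise track of which pairs $(i,j)$ are diagonal (so that $D_{i,j}=I$ and no factor $\mu$ is gained) versus off-diagonal — together with checking that the hypothesis $|\calA|\le T$ is exactly what is needed to invoke Lemma \ref{lem:est-D}. Under Assumption \ref{assump:mu} one has $|\calA|\mu\le T\mu<1/3$, so every denominator appearing above is bounded below by $2/3$ and all the estimates are well posed.
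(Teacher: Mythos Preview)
Your proposal is correct and follows essentially the same route as the paper: both pass to the transformed variables, obtain the error equation $\sum_{j\in\calA}D_{i,j}(\bar x_{G_j}-\bar x_{G_j}^\dag)=\sum_{j\in\mathcal{Q}}D_{i,j}\bar x_{G_j}^\dag+\bar\Psi_{G_i}^{-1}\Psi_{G_i}^t\eta$, bound the right-hand side via Lemma~\ref{lem:est-G} and invert the left-hand side via Lemma~\ref{lem:est-D}, then expand $\bar d_{G_i}$ and split into the diagonal and off-diagonal contributions exactly as you describe. Your explicit remark on invertibility of $\Psi_B^t\Psi_B$ and the ``harmless weakening'' $1-(|\calA|-1)\mu\ge 1-|\calA|\mu$ (with the sharper constant retained for $C_{|\calA|,\mu}$) matches what the paper does implicitly.
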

\begin{proof}
First, the least squares update step in \eqref{eqn:update} can be rewritten as
\begin{equation*}
x_{B} = (\Psi_{B}^t\Psi_{B})^{-1}\Psi^t_{B}(\Psi_{B} x^\dag_{B} + \sum_{i\in\mathcal{Q}}\Psi_{G_i} x^\dag_{G_i} + \eta).
\end{equation*}
Hence, there holds
\begin{equation}\label{eqn:primal-diff}
x_{B} - x^\dag_{B} = (\Psi_{B}^t\Psi_{B})^{-1}\Psi^t_{B}(\sum_{i\in\mathcal{Q}}\Psi_{G_i} x^\dag_{G_i} + \eta).
\end{equation}
Let $m=|\mathcal{P}|\leq T$, $\ell=|\mathcal{R}|$, then $k=|\mathcal{Q}|=T-m$. Further, we denote the sets $\mathcal{P}$,
$\mathcal{Q}$ and $\mathcal{R}$ by $\mathcal{P}=\{p_1,\ldots,p_m\}$, $\mathcal{Q}=\{q_1,\ldots,q_k\}$ and $\mathcal{R}=\{r_1,\ldots,r_{\ell}\}$.
Then \eqref{eqn:primal-diff} can be recast blockwise,
using $D_{i,j}=\bar\Psi_{G_i}^{-1} \Psi_{G_i}^t\Psi_{G_j} \bar{\Psi}_{G_j}^{-1}$ etc, cf. \eqref{eqn:notation}, as
\begin{equation*}
\begin{aligned}
e&:=\left[\begin{array}{c} \bar{x}_{G_{p_1}}-\bar{x}_{G_{p_1}}^\dag \\ \vdots \\ \bar{x}_{G_{p_m}}-\bar{x}_{G_{p_m}}^\dag \\
\bar{x}_{G_{r_1}} \\ \vdots\\  \bar{x}_{G_{r_{\ell}}}   \end{array}\right]
=\left[\begin{array}{cc}
D_{\mathcal{P},\mathcal{P}} &D_{\mathcal{P},\mathcal{R}}\\
D_{\mathcal{R},\mathcal{R}} &D_{\mathcal{R},\mathcal{R}}
\end{array}\right]
\end{aligned}
\end{equation*}
\begin{equation*}
\begin{aligned}
&\bullet
\left\{\left[\begin{array}{c}
D_{\mathcal{P},\mathcal{Q}}\\
D_{\mathcal{R},\mathcal{Q}}
\end{array}\right]
\left[\begin{array}{c} \bar{x}^\dag_{G_{q_1}} \\ \vdots \\  \bar{x}^\dag_{G_{q_k}}   \end{array}\right] +
\left[\begin{array}{c}
\bar\Psi_{G_{p_1}}^{-1}\Psi_{G_{p_1}}^t  \\
\vdots \\
\bar\Psi_{G_{p_m}}^{-1}\Psi_{G_{p_m}}^t \\
\bar\Psi_{G_{r_1}}^{-1}\Psi_{G_{r_1}}^t \\
\vdots\\
\bar\Psi_{G_{r_\ell}}^{-1}\Psi_{G_{r_k}}^t
\end{array}\right] \eta \right\},
\end{aligned}
\end{equation*}
where the matrices $D_{\mathcal{P},\mathcal{R}}$ etc. are defined by
\begin{equation*}
  D_{\mathcal{P},\mathcal{R}}=\left[\begin{array}{ccc}
    D_{p_1,r_1} & \cdots & D_{p_1,r_{\ell}} \\
   \vdots & \vdots & \vdots \\
   D_{p_m,r_1} & \cdots & D_{p_m,r_{\ell}}
   \end{array}\right].
\end{equation*}
Next we estimate the two terms in the curly bracket, denoted by $\mathrm{I}$ and $\mathrm{II}$ below.
By Lemma \ref{lem:est-G}, we deduce
\begin{equation}\label{eqn:iteeta}
\|\mathrm{II}\|_{\ell^\infty(\ell^2)} \leq\|\eta\|.
\end{equation}
For the first term $\mathrm{I}$, we denote its rows by $z_i=\sum_{j=1}^kD_{i,q_j}\bar x_{G_j}^\dag$, for any $i\in \mathcal{P}\cup\mathcal{R}$.
Since $(\mathcal{P}\cup \mathcal{R})\cap\mathcal{Q}=\emptyset$, we have for $i\in\mathcal{P}\cup\mathcal{R}$
\begin{eqnarray*}
\|z_i\| = \|D_{i,q_1}\bar{x}^\dag_{G_{q_1}} + \cdots + D_{i,q_k}\bar{x}^\dag_{G_{q_k}}\|
 \leq k\mu\max_{1\leq j \leq k}\{\|\bar{x}^\dag_{G_{q_j}}\|\}.
\end{eqnarray*}
Since the ``energy'' $E= \max_{1\leq j \leq k}
\{\|\bar{x}^\dag_{G_{q_j}}\|\}$,
\begin{equation}\label{eqn:itex}
\|z\|_{\ell^\infty(\ell^2)} \leq |\mathcal{Q}|\mu E.
\end{equation}
By Lemma \ref{lem:est-D}, \eqref{eqn:iteeta} and \eqref{eqn:itex} and the triangle inequality,
\begin{equation}\label{eqn:errx}
\|e\|_{\ell^\infty(\ell^2)} \leq \frac{1}{1 - \mu(|\mathcal{P}|+|\mathcal{R}|-1)}\left(\epsilon + \mu |\mathcal{Q}| E\right).
\end{equation}
Notice that $|\mathcal{P}|+|\mathcal{R}| = |\mathcal{A}|$, we show \eqref{eqn:estx}. Next we turn to the transformed dual variable $\bar{d}$.
By the definition, $d = \Psi^t(y - \Psi x)$, and thus for any $i\in \mathcal{I}$, we have
\begin{equation*}
d_{G_i} = \Psi_{G_i}^t(\sum_{j\in\mathcal{P}\cup\mathcal{R}}\Psi_{G_j}(x_{G_j}-x^\dag_{G_j}) - \sum_{i\in\mathcal{Q}}\Psi_{G_i} x^\dag_{G_i} - \eta),
\end{equation*}
which upon some algebraic manipulations yields
\begin{equation*}
  \bar d_{G_i} = \sum_{j\in\mathcal{P}\cup\mathcal{R}} D_{i,j}(\bar x_{G_j}-\bar x^\dag_{G_j}) - \sum_{j\in\mathcal{Q}} D_{i,j}\bar x_{G_{j}}^\dag - \bar\Psi_{G_i}^{-1}\Psi_{G_i}^t \eta.
\end{equation*}
For any $i\in\mathcal{I}\cap \mathcal{I}^\dag$, by Lemma \ref{lem:est-G} and \eqref{eqn:errx}, we have
\begin{equation*}
  \begin{aligned}
    \|\bar d_{G_i}\| & \leq \|\sum_{j\in \mathcal{P}\cup\mathcal{R}} D_{i,j}(\bar{x}_{G_j}-\bar{x}^\dag_{G_j})\| \\
     & \quad + \|\sum_{j\in\mathcal{Q}}D_{i,j}\bar x_{G_j}^\dag\| + \|\bar\Psi_{G_i}^{-1}\Psi_{G_i}^t\eta\|\\
    & \leq \sum_{j\in\mathcal{P}\cup\mathcal{R}}\mu\|\bar{x}_{G_j}-\bar{x}^\dag_{G_j}\| + \sum_{j\in\mathcal{Q}}\mu\|\bar x_{G_j}^\dag\| + \epsilon\\
    & \leq \frac{(|\mathcal{P}|+|\mathcal{R}|)\mu }{1 - \mu(|\mathcal{P}|+|\mathcal{R}|-1)}\left(\epsilon + \mu |\mathcal{Q}| E\right) + |\mathcal{Q}|\mu E + \epsilon.
  \end{aligned}
\end{equation*}
The estimate for $i\in\calI\cap\calA^\dag=\mathcal{Q}$
follows analogously.
\end{proof}

Now we can present the proof of Theorem \ref{thm:support}.
\begin{proof}
First we derive two preliminary estimates using the notation $\mathcal{P}$, $\mathcal{Q}$ and $\mathcal{R}$ from Lemma \ref{lem:update-in-A}.
Since $|\mathcal{A}|\leq T$ and $|\mathcal{Q}|\leq T$, Lemma \ref{lem:update-in-A} and the triangle inequality yield
\begin{equation}\label{eqn:est-x-new}
\|\bar{x}_{G_i}\|
\leq \frac{1 }{1 - T\mu } \left(T\mu E + \epsilon \right) \quad\forall i\in \calA\cap \calI^\dag.
\end{equation}
Likewise, using the inequality
$
    \frac{|\mathcal{A}|\mu }{1 - \mu(|\mathcal{A}|-1)}\left(\epsilon + \mu |\mathcal{Q}| E\right) + |\mathcal{Q}|\mu E + \epsilon\leq \frac{1}{1-T\mu}(T\mu E+\epsilon),
$
we deduce from Lemma \ref{lem:update-in-A}
\begin{equation}\label{eqn:est-d-new}
\|\bar{d}_{G_i}\| \geq \|\bar{x}^\dag_{G_i}\| - \frac{1}{1-T\mu}(T\mu E + \epsilon) \quad\forall i\in\calI\cap \calA^\dag.
\end{equation}
Now we can proceed to the proof of the theorem. For $\mathcal{Q}=\emptyset$,
$\calA = \calA^\dag$ and assertions (i) and (ii) are trivially true. Otherwise, let
$i^*=\{i\in \mathcal{Q}: \|\bar x_{G_i}^\dag\|=\|\bar x_\mathcal{Q}^\dag\|_{\ell^\infty(\ell^2)}\}$.
Then $E=\|\bar x_{G_{i^*}}^\dag\|$. By \eqref{eqn:est-d-new} and inequality \eqref{eqn:xtd} with $i = i^*$, we have
\begin{equation*}
  \sqrt{2\lambda}\geq \|\bar{d}_{G_{i^*}}\| \geq E - \frac{T\mu}{1- T \mu}E -\frac{\epsilon}{1-T\mu}.
\end{equation*}
Consequently, by Assumption \ref{assump:mu}, we deduce
\begin{equation}\label{eqn:E-upper}
   E \leq \frac{1- T\mu}{1- 2T\mu}\sqrt{2\lambda} + \frac{1}{1-2T\mu}\epsilon < 2\sqrt{2\lambda} + 3\epsilon,
\end{equation}
i.e., assertion (i) holds. Next we show assertion (ii) by contradiction. If $\calA\not\subseteq
\calA^\dag$, we can choose $j\in\calA \backslash \calA^\dag$, and apply \eqref{eqn:est-x-new} and
\eqref{eqn:est-d-new}, together with \eqref{eqn:xtd}, to obtain
\begin{equation*}
\frac{1}{1-T\mu}(T\mu E + \epsilon) \geq \|\bar{x}_{G_j}\| \geq \|\bar{d}_{G_{i^*}}\| \geq \frac{1 -2T\mu}{1- T \mu}E -\frac{\epsilon}{1-T\mu},
\end{equation*}
which contradicts \eqref{assump:noise}, thereby showing assertion (ii). Last, we show assertion (iii).
Assume that $\calA\not\subseteq \calA^\dag$. Then \eqref{eqn:E-upper} holds. Meanwhile,
since $\mathcal{A}\cap \mathcal{I}^\dag\neq\emptyset$, using \eqref{eqn:est-x-new} and
\eqref{eqn:est-d-new} (by choosing $\bar{x}_{G_i}$ by $i\in\mathcal{A}\cap \mathcal{I}^\dag$ and
$\bar{d}_{G_{i^*}}$) and inequality \eqref{eqn:xtd}, we have
\begin{equation*}
E - \frac{1 }{1 - \mu T} \left( T\mu E + \epsilon \right) \leq \sqrt{2\lambda} \leq  \frac{1 }{1 - T\mu } \left(T\mu E + \epsilon \right) .
\end{equation*}
Under Assumption \ref{assump:mu}, simple computation gives $E\geq 2\sqrt{2\lambda} - 3\epsilon$ and $E\leq 2\sqrt{2\lambda} + 3\epsilon$.
This contradicts with the assumption in (iii), and thus the inclusion $\calA\subseteq\calA^\dag$ follows.
\end{proof}

\subsection{Proof of Proposition \ref{prop:lam0}}\label{app:lam0}

\begin{proof}
Recall the identity
$
  J_\lambda(x) = \tfrac{1}{2}\|\Psi x-y\|^2 + \lambda \|x\|_{\ell^0(\ell^2)} = J_\lambda(0) + R(x),
$
with $R(x)=\tfrac{1}{2}\|\Psi x\|^2 -\langle \Psi x,y\rangle + \lambda \|x\|_{\ell^0(\ell^2)}$.
Also for any $x\neq 0$, $\|x\|_{\ell^0(\ell^2)}\geq1$. Hence, for any $x\in B_r(0)\setminus\{0\}$, where
$B_r(0)$ denotes a ball centered at the origin with a radius $r=\lambda/(\|\Psi^ty\|+1)$, there holds
$R(x) \geq -\|x\|\|\Psi^ty\| + \lambda >0.$ This shows the first assertion.
For $\lambda>\lambda_0$, for any nonzero $x$, we have $\|x\|_{\ell^0(\ell^2)}\geq 1$, and thus
$ J_\lambda(x) = \tfrac{1}{2}\|\Psi x-y\|^2 + \lambda \|x\|_{\ell^0(\ell^2)} \geq \lambda > \tfrac{1}{2}\|y\|^2=J_\lambda(0),$
i.e., $x^*=0$ is the only global minimizer.
\end{proof}

\subsection{Proof of Theorem \ref{thm:main}}\label{app:07}

\begin{proof}
The lengthy proof is divided into four steps.

\noindent \textbf{Step 1}.  First we give the proper choice of the decreasing factor $\rho$. By \eqref{assump:noise}, we have
\begin{equation*}
0< \frac{1-\mu T}{1- 2\mu T -t} < \frac{1 - \mu T}{\mu T  + t}.
\end{equation*}
Then for any $s_1\in((1-\mu T)/(1-2\mu T -t), (1-\mu T)/(\mu T + t))$, letting $s_2=\frac{\mu T +t}{1-\mu T}s_1+1$, we deduce
$({1-\mu T})/({1-2\mu T -t}) < s_2< s_1<({1-\mu T})/(\mu T +t).$
Combining with the monotonicity of the function $f(s_1)=s_2/s_1$ over the interval $((1-\mu T)/(1-2\mu T -t),
(1-\mu T)/(\mu T + t))$, it implies that for any $\rho\in((2\mu T +2 t)^2/(1-\mu T)^2,1)$, we can find such $s_1$ with $s_2/s_1=\sqrt{\rho}$.
Next we will choose $\rho\in((2\mu T +2 t)^2/(1-\mu T)^2,1)$.

\noindent \textbf{Step 2}. Next we show an important monotonicity relation:
\begin{equation}\label{eqn:key}
\Gamma_{s_1^2 \lambda} \subseteq \calA_k\subseteq \calA^\dag \Rightarrow \Gamma_{s_2^2 \lambda} \subseteq \calA_{k+1}\subseteq \calA^\dag.
\end{equation}
For short, we denote by $\calA=\calA_k$, $\calI=\calI_k$, and $\mathcal{Q}=\calA^\dag\setminus\calA$. By the assumption $\calA\subseteq\calA^\dag$,
we have $\mathcal{R}=\emptyset$ in Lemma \ref{lem:update-in-A}. Then it follows from
\eqref{eqn:errx} in the proof of Lemma \ref{lem:update-in-A} that the
updates $\bar {x}^{k+1}$ and $\bar d^{k+1}$ satisfy
\begin{eqnarray}
     && \|\bar{x}^{k+1}_{G_i}\| \geq \|\bar{x}^\dag_{G_i}\| - \frac{1 }{1 - \mu T} \left( T\mu E_k + \epsilon \right)\;\; \forall i\in\calA,\label{eqn:update-xinA}\\
     && \|\bar{d}^{k+1}_{G_i}\| \leq \frac{1 }{1 - \mu T} \left( T\mu E_k + \epsilon \right)\;\; \forall i\in  \mathcal{I}^\dag,\label{eqn:update-dinI}\\
     && \|\bar{d}^{k+1}_{G_i}\| \geq \|\bar{x}^\dag_{G_i}\| - \frac{1 }{1 - \mu T} \left( T\mu E_k + \epsilon \right)\;\; \forall  i\in\mathcal{Q}.\label{eqn:update-dinQ}
\end{eqnarray}
By the assumption $\Gamma_{s_1^2 \lambda} \subseteq \mathcal{A}_k$, we deduce $E_k < s_1 \sqrt{2\lambda}$; and
by assumption \eqref{assump:noise}, $\epsilon < t\min_{i\in \mathcal{A}^\dag}\{\|\bar{x}^\dag_{G_i}\|\}
\leq tE_k\leq t s_1\sqrt{2\lambda}$. Hence, using \eqref{eqn:update-dinI}, we deduce for any $i\in \mathcal{I}^\dag$
\begin{equation*}
  \|\bar{d}^{k+1}_{G_i}\| \leq \frac{1 }{1 - \mu T} ( T\mu +t)E_k \leq \frac{\mu T +t }{1-\mu T} s_1\sqrt{2\lambda}<\sqrt{2\lambda},
\end{equation*}
where the last inequality follows from the choice of $s_1$. This and
the relation \eqref{eqn:xtd} imply that $i\in \calI_{k+1}$, and thus $\calA_{k+1}\subseteq \calA^\dag$.
Meanwhile, by \eqref{eqn:update-dinQ}, for any $i\in\calI\cap \Gamma_{s_2^2\lambda}$, we have
\begin{equation*}
  \begin{aligned}
   \|\bar{d}^{k+1}_{G_i}\| &\geq s_2 \sqrt{2\lambda} - \frac{1}{1-\mu T}(\mu T +t)s_1\sqrt{2\lambda}\\
   &\geq (s_2-\frac{\mu T +t}{1-\mu T-t}s_1)\sqrt{2\lambda}> \sqrt{2\lambda}.
  \end{aligned}
\end{equation*}
which by the relation \eqref{eqn:xtd} yields $i\in \calA_{k+1}$. It remains to show  $\calA\cap
\Gamma_{s_2^2\lambda}\subseteq \calA_{k+1}$. Clearly, if $\calA=\emptyset$, the assertion is
true. Otherwise, for any $i\in\calA\cap\Gamma_{s_2^2\lambda}$, by \eqref{eqn:update-xinA}, there holds
\begin{equation*}
  \begin{aligned}
    \|\bar x_{G_i}\| & \geq \|\bar x_{G_i}^\dag\| - \frac{|\mathcal{Q}|\mu+t}{1-(T-1)\mu}\|x_\mathcal{Q}^\dag\|_{\ell^\infty(\ell^2)}\\
    &  > s_2\sqrt{2\lambda} - \frac{(T-1)\mu+t}{1-T\mu}s_1\sqrt{2\lambda}\geq \sqrt{2\lambda}.
  \end{aligned}
\end{equation*}
Like before, this and \eqref{eqn:xtd} also imply $i\in \calA_{k+1}$. Hence the inclusion
$\Gamma_{s_2^2\lambda} \subseteq \calA_{k+1}$ holds.

\noindent \textbf{Step 3}. Now we prove that the oracle solution $x^o$ is achieved along the continuation
path, i.e., $\calA(\lambda_s)=\calA^\dag$ for some $\lambda_s$. For each $\lambda_s$-problem
$J_{\lambda_s}$, we denote by $\calA_{s,0}$ and $\calA_{s,\diamond}$ the active set for the initial guess
and the last inner step (i.e., $\calA(\lambda_s)$ in Algorithm \ref{alg:gpdasc}) of the $s$th iterate of
the outer loop, respectively. Since $s_1>s_2$, the inclusion $\Gamma_{s_1^2\lambda_s} \subseteq\Gamma_{s_2^2
\lambda_s}$ holds. Next we claim that the following inclusion by mathematical induction
\begin{equation*}
  \Gamma_{s_1^2\lambda_s}\subseteq \mathcal{A}(\lambda_s) \subseteq \mathcal{A}^\dag
\end{equation*}
holds for the sequence active sets $A(\lambda_s)$ from Algorithm \ref{alg:gpdasc}.
 From \eqref{eqn:key}, for any index $s$ before the stopping criterion
at step 13 of Algorithm \ref{alg:gpdasc} is reached, there hold
\begin{equation}\label{eqn:inclusion}
  \Gamma_{s_1^2\lambda_s}\subseteq \calA_{s,0} \quad\mbox{and}\quad \Gamma_{s_2^2\lambda_s}\subseteq \calA_{s,\diamond}.
\end{equation}
Note that for $s=0$, by the choice of $\lambda_0$, $\Gamma_{s_1^2\lambda_0}=\Gamma_{s_2^2\lambda_0}=\emptyset$,
and thus \eqref{eqn:inclusion} holds. Now for $s>0$, it follows by mathematical induction and
the relation $\calA_{s,\diamond}=\calA_{s+1,0}$. By \eqref{eqn:inclusion}, during the iteration,
the active set $\calA_{s,\diamond}$ always lies in $\calA^\dag$. This shows the desired claim.
For large $s$, we have $\Gamma_{s_1^2\lambda_s} = \calA^\dag$, and hence $\calA(\lambda_s) = \calA^\dag$, and
accordingly $x(\lambda_s)$ is the oracle solution $x^o$.

\noindent \textbf{Step 4}. Last, at this step we show that if $\calA(\lambda_s)\subsetneqq
\calA^\dag$, then the stopping criterion at step 13 of Algorithm \ref{alg:gpdasc} cannot be satisfied. Let
$\mathcal{P} = \calA(\lambda_s) \subsetneqq \calA^\dag$ and $\mathcal{Q} = \calA^\dag \backslash \calA$, and
denote by $i^* = \textrm{arg}\max_{i\in\mathcal{Q}}\{\|\bar{x}^\dag_{G_i}\|\}$ and $E = \|\bar{x}^\dag_{G_{i^*}}\|$.
Then with the notation $\bar{\Psi}_{G_i}$ and $D_{i,j}$ etc. from \eqref{eqn:notation}, we deduce
\begin{equation*}
  \begin{aligned}
 &\|\Psi x - y\|^2 = \|\sum_{i\in\mathcal{P}} \Psi_{G_i} (x_{G_i} - x_{G_i}^\dag)  - \sum_{j\in\mathcal{Q}} \Psi_{G_j} x^\dag_{G_j} - \eta \|^2 \\
& \geq \|\Psi_{G_{i^*}}x^\dag_{G_{i^*}}\|^2+2\sum_{j\in\mathcal{Q}\setminus\{i^*\}}\langle\Psi_{G_j} x^\dag_{G_j},\Psi_{G_{i^*}} x^\dag_{G_{i^*}} \rangle \\
  &\quad - 2\sum_{i\in \mathcal{P}}\langle \Psi_{G_i} (x_{G_i} - x_{G_i}^\dag),\Psi_{G_{i^*}} x^\dag_{G_{i^*}}\rangle  + 2\langle\eta, \Psi_{G_{i^*}}{x}^\dag_{G_{i^*}}\rangle.
\end{aligned}
\end{equation*}
Now recall the elementary identities
$\|\Psi_{G_{i^*}}x_{G_{i^*}}\| = \|\bar x_{G_{i^*}}\|$ and $\langle\Psi_{G_j} x^\dag_{G_j} , \Psi_{G_{i^*}} x^\dag_{G_{i^*}}\rangle=\langle D_{i^*,i} \bar x^\dag_{G_i},\bar x^\dag_{G_{i^*}}\rangle$
and then appealing to Lemma \ref{lem:est-D}, we arrive at
\begin{equation*}
  \begin{aligned}
   & \|\Psi x - y\|^2  \geq \|\bar x^\dag_{G_{i^*}}\|^2+2\sum_{j\in\mathcal{Q}\setminus\{i^*\}}\langle D_{i^*,j} \bar x^\dag_{G_j},\bar x^\dag_{G_{i^*}} \rangle\\
    &\quad - 2\sum_{i\in \mathcal{P}}\langle D_{i^*,i}(\bar x_{G_i} - \bar x_{G_i}^\dag), \bar x^\dag_{G_{i^*}}\rangle  + 2\langle\eta, \Psi_{G_{i^*}}{x}^\dag_{G_{i^*}}\rangle \\
    & \geq E^2 - 2\mu(|\mathcal{Q}|-1) E^2 - 2\mu |\mathcal{P}| E \max_{i\in\mathcal{P}} \|\bar{x}_{G_i} - \bar{x}^\dag_{G_i}\| - 2\epsilon E.
  \end{aligned}
\end{equation*}
By repeating the proof of Lemma \ref{lem:update-in-A}, we deduce
\begin{equation*}
\max_{i\in\mathcal{P}} \|\bar{x}_{G_i} - \bar{x}^\dag_{G_i}\| \leq \frac{\epsilon + \mu T E}{1 - \mu T}.
\end{equation*}
By assumption \eqref{assump:noise}, $\epsilon \leq t E$, it suffices to show
\begin{equation}\label{eqn:energy-iter}
 E^2 - 2\mu (|\mathcal{Q}|-1) E^2 - 2\mu(T - |\mathcal{Q}|)\frac{t + \mu T }{1 - \mu T}E^2 - 2t E^2 > t^2 E^2,
 \end{equation}
which implies that the stopping criterion \eqref{eqn:discprin} at step 13 of Algorithm
\ref{alg:gpdasc} cannot be satisfied. The left hand side of \eqref{eqn:energy-iter}
is a function monotonically decreasing with respect to the length $|\mathcal{Q}|$, and
when $|\mathcal{Q}| = T$, we have $1 - \mu(T-1) - 2t > t > t^2$,  which completes the proof.
\end{proof}

\bibliographystyle{IEEEtran}
\bibliography{IEEEabrv,group_v4}

\begin{thebibliography}{10}
\providecommand{\url}[1]{#1}
\csname url@samestyle\endcsname
\providecommand{\newblock}{\relax}
\providecommand{\bibinfo}[2]{#2}
\providecommand{\BIBentrySTDinterwordspacing}{\spaceskip=0pt\relax}
\providecommand{\BIBentryALTinterwordstretchfactor}{4}
\providecommand{\BIBentryALTinterwordspacing}{\spaceskip=\fontdimen2\font plus
\BIBentryALTinterwordstretchfactor\fontdimen3\font minus
  \fontdimen4\font\relax}
\providecommand{\BIBforeignlanguage}[2]{{%
\expandafter\ifx\csname l@#1\endcsname\relax
\typeout{** WARNING: IEEEtran.bst: No hyphenation pattern has been}%
\typeout{** loaded for the language `#1'. Using the pattern for}%
\typeout{** the default language instead.}%
\else
\language=\csname l@#1\endcsname
\fi
#2}}
\providecommand{\BIBdecl}{\relax}
\BIBdecl

\bibitem{Tibshirani:1996}
R.~Tibshirani, ``Regression shrinkage and selection via the lasso,'' \emph{J.
  Roy. Stat. Soc. Ser. B}, vol.~58, no.~1, pp. 267--288, 1996.

\bibitem{Chen:1998}
S.~S. Chen, D.~L. Donoho, and M.~A. Saunders, ``Atomic decomposition by basis
  pursuit,'' \emph{SIAM J. Sci. Comput.}, vol.~20, no.~1, pp. 33--61, 1998.

\bibitem{CandesTao:2005}
E.~J. Candes and T.~Tao, ``Decoding by linear programming,'' \emph{IEEE Trans.
  Inform. Theory}, vol.~51, no.~12, pp. 4203--4215, 2005.

\bibitem{CandesRombergTao:2006}
E.~J. Cand{\`e}s, J.~Romberg, and T.~Tao, ``Robust uncertainty principles:
  exact signal reconstruction from highly incomplete frequency information,''
  \emph{IEEE Trans. Inform. Theory}, vol.~52, no.~2, pp. 489--509, 2006.

\bibitem{OuHamalainenGolland:2009}
W.~Ou, M.~S. H\"{a}m\"{a}l\"{a}inen, and P.~Golland, ``A distributed
  spatio-temporal {EEG/MEG} inverse solver,'' \emph{NeuroImage}, vol.~44,
  no.~3, pp. 932--946, 2009.

\bibitem{ArgyriouEvgeniouPontil:2008}
A.~Argyriou, T.~Evgeniou, and M.~Pontil, ``Convex multi-task feature
  learning,'' \emph{Mach. Learn.}, vol.~73, no.~3, pp. 243--272, 2008.

\bibitem{Shapiro:1993}
J.~M. Shapiro, ``Embedded image coding using zerotrees of wavelet
  coefficients,'' \emph{IEEE Trans. Signal Proc.}, vol.~41, no.~12, pp.
  3445--3462, 1993.

\bibitem{AntoniadisFan:2001}
A.~Antoniadis and J.~Fan, ``Regularization of wavelet approximations,''
  \emph{J. Amer. Stat. Assoc.}, vol.~96, no. 455, pp. 939--967, 2001.

\bibitem{HeYu:2010}
Z.~He and W.~Yu, ``Stable feature selection for biomarker discovery,''
  \emph{Comput. Biol. Chem.}, vol.~34, no.~4, pp. 215--225, 2010.

\bibitem{MaSongHuang:2007}
S.~Ma, X.~Song, and J.~Huang, ``Supervised group lasso with applications to
  microarray data analysis,'' \emph{BMC Bioinform.}, vol.~8, no.~1, pp. 60, 17
  pp., 2007.

\bibitem{MishaliEldar:2009}
M.~Mishali and Y.~C. Eldar, ``Blind multiband signal reconstruction: Compressed
  sensing for analog signals,'' \emph{IEEE Trans. Signal Proc.}, vol.~57,
  no.~3, pp. 993--1009, 2009.

\bibitem{MishaliEldar:2010}
------, ``From theory to practice: Sub-nyquist sampling of sparse wideband
  analog signals,'' \emph{IEEE J. Sel. Topics Signal Proc.}, vol.~4, no.~2, pp.
  375--391, 2010.

\bibitem{ChenHuo:2006}
J.~Chen and X.~Huo, ``Theoretical results on sparse representations of
  multiple-measurement vectors,'' \emph{IEEE Trans. Signal Proc.}, vol.~54,
  no.~12, pp. 4634--4643, 2006.

\bibitem{Bakin:1999}
S.~Bakin, ``Adaptive {R}egression and {M}odel {S}election in {D}ata {M}ining
  {P}roblems,'' Ph.D. dissertation, The Australian National University, 1999.

\bibitem{MalioutovCetinWillsky:2005}
D.~Malioutov, M.~\c{C}etin, and A.~S. Willsky, ``A sparse signal reconstruction
  perspective for source localization with sensor arrays,'' \emph{IEEE Trans.
  Signal Proc.}, vol.~53, no.~8, pp. 3704--3716, 2005.

\bibitem{YuanLin:2006}
M.~Yuan and Y.~Lin, ``Model selection and estimation in regression with grouped
  variables,'' \emph{J. R. Stat. Soc. Ser. B}, vol.~68, no.~1, pp. 49--67,
  2006.

\bibitem{HuangBrehenyMa:2012}
J.~Huang, P.~Breheny, and S.~Ma, ``A selective review of group selection in
  high-dimensional models,'' \emph{Stat. Sci.}, vol.~27, no.~4, pp. 481--499,
  2012.

\bibitem{HuangZhang:2010}
J.~Huang and T.~Zhang, ``The benefit of group sparsity,'' \emph{Ann. Stat.},
  vol.~38, no.~4, pp. 1978--2004, 2010.

\bibitem{BaraniukCevher:2010}
R.~G. Baraniuk, V.~Cevher, M.~F. Duarte, and C.~Hegde, ``Model-based
  compressive sensing,'' \emph{IEEE Trans. Inform. Theory}, vol.~56, no.~4, pp.
  1982--2001, 2010.

\bibitem{Eldar:2010}
Y.~C. Eldar, P.~Kuppinger, and H.~B{\"o}lcskei, ``Block-sparse signals:
  uncertainty relations and efficient recovery,'' \emph{IEEE Trans. Signal
  Proc.}, vol.~58, no.~6, pp. 3042--3054, 2010.

\bibitem{LouniciPontil:2011}
K.~Lounici, M.~Pontil, S.~van~de Geer, and A.~B. Tsybakov, ``Oracle
  inequalities and optimal inference under group sparsity,'' \emph{Ann. Stat.},
  vol.~39, no.~4, pp. 2164--2204, 2011.

\bibitem{Bajwa:2015}
W.~U. Bajwa, M.~F. Duarte, and R.~Calderbank, ``Conditioning of random block
  subdictionaries with applications to block-sparse recovery and regression,''
  \emph{IEEE Trans. Inform. Theory}, vol.~61, no.~7, pp. 4060--4079, 2015.

\bibitem{ErenVidyasagar:2015}
M.~{Eren Ahsen} and M.~Vidyasagar, ``Error bounds for compressed sensing
  algorithms with group sparsity: {A} unified approach,'' \emph{Appl. Comput.
  Harmon. Anal.}, p. in press, 2015.

\bibitem{FanLi:2001}
J.~Fan and R.~Li, ``Variable selection via nonconcave penalized likelihood and
  its oracle properties,'' \emph{J. Amer. Stat. Assoc.}, vol.~96, no. 456, pp.
  1348--1360, 2001.

\bibitem{ZhangZhang:2012}
C.-H. Zhang and T.~Zhang, ``A general theory of concave regularization for
  high-dimensional sparse estimation problems,'' \emph{Stat. Sci.}, vol.~27,
  no.~4, pp. 576--593, 2012.

\bibitem{WangLiHuang:2008}
L.~Wang, H.~Li, and J.~Z. Huang, ``Variable selection in nonparametric
  varying-coefficient models for analysis of repeated measurements,'' \emph{J.
  Amer. Stat. Assoc.}, vol. 103, no. 484, pp. 1556--1569, 2008.

\bibitem{HuangMaZhang:2009}
J.~Huang, S.~Ma, H.~Xie, and C.-H. Zhang, ``{A} group bridge approach for
  variable selection,'' \emph{Biometrika}, vol.~96, no.~4, p. 1024, 2009.

\bibitem{MeierVandegeerPeter:2008}
L.~Meier, S.~van~de Geer, and P.~B{\"u}hlmann, ``The group {L}asso for logistic
  regression,'' \emph{J. R. Stat. Soc. Ser. B}, vol.~70, no.~1, pp. 53--71,
  2008.

\bibitem{VandDenBerg:2008}
E.~Van Den~Berg and M.~P. Friedlander, ``Probing the {P}areto frontier for
  basis pursuit solutions,'' \emph{SIAM J. Sci. Comput.}, vol.~31, no.~2, pp.
  890--912, 2008.

\bibitem{TsengYun:2009}
P.~Tseng and S.~Yun, ``A coordinate gradient descent method for nonsmooth
  separable minimization,'' \emph{Math. Program.}, vol. 117, no. 1-2, Ser. B,
  pp. 387--423, 2009.

\bibitem{ChenLinKim:2011}
X.~Chen, Q.~Lin, S.~Kim, J.~G. Carbonell, and E.~P. Xing, ``Smoothing proximal
  gradient method for general structured sparse regression,'' \emph{Ann. Appl.
  Stat.}, vol.~6, no.~2, pp. 719--752, 2012.

\bibitem{She:2012}
Y.~She, ``An iterative algorithm for fitting nonconvex penalized generalized
  linear models with group predictors,'' \emph{Comput. Stat. Data Anal.},
  vol.~56, no.~10, pp. 2976--2990, 2012.

\bibitem{QinScheinberGoldfarb:2013}
Z.~Qin, K.~Scheinberg, and D.~Goldfarb, ``Efficient block-coordinate descent
  algorithms for the group {L}asso,'' \emph{Math. Program. Comput.}, vol.~5,
  no.~2, pp. 143--169, 2013.

\bibitem{BrehenyHuang:2015}
P.~Breheny and J.~Huang, ``Group descent algorithms for nonconvex penalized
  linear and logistic regression models with grouped predictors,'' \emph{Stat.
  Comput.}, vol.~25, no.~2, pp. 173--187, 2015.

\bibitem{BenElda:2011}
Z.~Ben-Haim and Y.~C. Eldar, ``Near-oracle performance of greedy block-sparse
  estimation techniques from noisy measurements,'' \emph{IEEE J. Sel. Topics
  Signal Proc.}, vol.~5, no.~5, pp. 1032--1047, 2011.

\bibitem{GaneshZhouMa:2009}
A.~Ganesh, Z.~Zhou, and Y.~Ma, ``Separation of a subspace-sparse signal:
  algorithms and conditions,'' \emph{ICASSP 2009}, pp. 3141--3144, 2009.

\bibitem{SegalDahlquist:2004}
M.~R. Segal, K.~D. Dahlquist, and B.~R. Conklin, ``Regression approaches for
  microarray data analysis,'' \emph{J. Comput. Biolog.}, vol.~10, no.~6, pp.
  961--980, 2004.

\bibitem{Balding:2006}
D.~J. Balding, ``A tutorial on statistical methods for population association
  studies,'' \emph{Nature Reviews Genetics}, vol.~7, no.~10, pp. 781--791,
  2006.

\bibitem{Tom:2007}
S.~M. Tom, C.~R. Fox, C.~Trepel, and R.~A. Poldrack, ``The neural basis of loss
  aversion in decision-making under risk,'' \emph{Science}, vol. 315, no. 5811,
  pp. 515--518, 2007.

\bibitem{ZouHastie:2005}
H.~Zou and T.~Hastie, ``Regularization and variable selection via the elastic
  net,'' \emph{J. R. Stat. Soc. Ser. B Stat. Methodol.}, vol.~67, no.~2, pp.
  301--320, 2005.

\bibitem{FanJiaoLu:2014}
Q.~Fan, Y.~Jiao, and X.~Lu, ``A primal dual active set algorithm with
  continuation for compressed sensing,'' \emph{IEEE Trans. Signal Process.},
  vol.~62, no.~23, pp. 6276--6285, 2014.

\bibitem{JiaoJinLu:2014}
Y.~Jiao, B.~Jin, and X.~Lu, ``A primal dual active set with continuation
  algorithm for the {$\ell^0$}-regularized optimization problem,'' \emph{Appl.
  Comput. Harmon. Anal.}, vol.~39, no.~3, pp. 400--426, 2015.

\bibitem{EldarMishali:2009}
Y.~C. Eldar and M.~Mishali, ``Robust recovery of signals from a structured
  union of subspaces,'' \emph{IEEE Trans. Inform. Theory}, vol.~55, no.~11, pp.
  5302--5316, 2009.

\bibitem{ElhamifarVidal:2012}
E.~Elhamifar and R.~Vidal, ``Block sparse recovery via convex optimization,''
  \emph{IEEE Trans. Signal Proc.}, vol.~60, no.~8, pp. 4094--4107, 2012.

\bibitem{Kowalski:2009}
M.~Kowalski, ``Sparse regression using mixed norms,'' \emph{Appl. Comput.
  Harmon. Anal.}, vol.~27, no.~3, pp. 303--324, 2009.

\bibitem{ItoKunisch:2014}
K.~Ito and K.~Kunisch, ``A variational approach to sparsity optimization based
  on {L}agrange multiplier theory,'' \emph{Inverse Problems}, vol.~30, no.~1,
  pp. 015\,001, 23 pp., 2014.

\bibitem{DonohoHuo:2001}
D.~L. Donoho and X.~Huo, ``Uncertainty principles and ideal atomic
  decomposition,'' \emph{IEEE Trans. Inform. Theory}, vol.~47, no.~7, pp.
  2845--2862, 2001.

\bibitem{TroppGilbert:2007}
J.~A. Tropp and A.~C. Gilbert, ``Signal recovery from random measurements via
  orthogonal matching pursuit,'' \emph{IEEE Trans. Inform. Theory}, vol.~53,
  no.~12, pp. 4655--4666, 2007.

\bibitem{CaiWang:2011}
T.~T. Cai and L.~Wang, ``Orthogonal matching pursuit for sparse signal recovery
  with noise,'' \emph{IEEE Trans. Inform. Theory}, vol.~57, no.~7, pp.
  4680--4688, 2011.

\bibitem{BjorckGloub:1973}
{\.A}.~Bj{\"o}rck and G.~H. Golub, ``Numerical methods for computing angles
  between linear subspaces,'' \emph{Math. Comp.}, vol.~27, pp. 579--594, 1973.

\bibitem{Bandeira:2013}
A.~S. Bandeira, E.~Dobriban, D.~G. Mixon, and W.~F. Sawin, ``Certifying the
  restricted isometry property is hard,'' \emph{IEEE Trans. Inform. Theory},
  vol.~59, no.~6, pp. 3448--3450, 2013.

\bibitem{zhang:2009sharp}
T.~Zhang, ``Some sharp performance bounds for least squares regression with
  {$L_1$} regularization,'' \emph{Ann. Stat.}, vol.~37, no.~5A, pp. 2109--2144,
  2009.

\bibitem{Tropp:2004}
J.~A. Tropp, ``Greed is good: algorithmic results for sparse approximation,''
  \emph{IEEE Trans. Inform. Theory}, vol.~50, no.~10, pp. 2231--2242, 2004.

\bibitem{Buhlmann:2013}
P.~B{\"u}hlmann, P.~R{\"u}timann, S.~van~de Geer, and C.-H. Zhang, ``Correlated
  variables in regression: clustering and sparse estimation,'' \emph{J.
  Statist. Plann. Inference}, vol. 143, no.~11, pp. 1835--1858, 2013.

\bibitem{Witten:2014}
D.~M. Witten, A.~Shojaie, and F.~Zhang, ``The cluster elastic net for
  high-dimensional regression with unknown variable grouping,''
  \emph{Technometrics}, vol.~56, no.~1, pp. 112--122, 2014.

\bibitem{Tseng:2001}
P.~Tseng, ``Convergence of a block coordinate descent method for
  nondifferentiable minimization,'' \emph{J. Optim. Theory Appl.}, vol. 109,
  no.~3, pp. 475--494, 2001.

\bibitem{ItoJin:2014}
K.~Ito and B.~Jin, \emph{Inverse {P}roblems: {T}ikhonov {T}heory and
  {A}lgorithms}.\hskip 1em plus 0.5em minus 0.4em\relax World Scientific, NJ,
  2014.

\bibitem{GillGolub:1974}
P.~E. Gill, G.~H. Golub, W.~Murray, and M.~A. Saunders, ``Methods for modifying
  matrix factorizations,'' \emph{Math. Comp.}, vol.~28, pp. 505--535, 1974.

\bibitem{HuangHuangMetaxas:2009}
J.~Huang, X.~Huang, and D.~Metaxas, ``Learning with dynamic group sparsity,''
  in \emph{2009 IEEE 12th ICCV}, 2009, pp. 64--71.

\bibitem{AlbertiAmmariJin:2016}
G.~S. Alberti, H.~Ammari, B.~Jin, J.-K. Seo, and W.~Zhang, ``The linearized
  inverse problem in multifrequency electrical impedance tomography,''
  \emph{SIAM J. Imag. Sci.}, vol.~9, no.~4, pp. 1525--1551, 2016.

\end{thebibliography}
\end{document}